\documentclass[journal,12pt,onecolumn]{IEEEtran}
\usepackage{amsmath,amsfonts,amsthm,amssymb}
\usepackage{algorithmic}
\usepackage{algorithm}
\usepackage{seqsplit} 
\usepackage{array}
\usepackage[caption=false,font=normalsize,labelfont=sf,textfont=sf]{subfig}
\usepackage{textcomp}
\usepackage{stfloats}
\usepackage{url}
\usepackage{verbatim}
\usepackage{graphicx,marvosym}
\usepackage{color}
\usepackage{cite}
\newtheorem{theorem}{Theorem}
\newtheorem{lemma}{Lemma}

\begin{document}    

\title{Self-Dual Cyclic Codes with Improved Minimum Distance Estimates via Extending the Chen-Ding Construction}

\author{Bofeng Huang, Jingwei Zhang$^*$
 and Chang-An Zhao
\thanks{B. Huang is with the Department of Mathematics, School of Mathematics, Sun Yat-sen University, Guangzhou 510275, P.R.China. (e-mail: huangbf6@mail2.sysu.edu.cn).}
\thanks{J. Zhang is with the
 Department of Big Data Management and Application, 
 Guangdong University of Finance and Economics
 Guangzhou, P. R. China. (e-mail: jingweizhang@gdufe.edu.cn).}
\thanks{C.-A. Zhao is with the School of Mathematics, Sun Yat-sen University, Guangzhou 510275, P.R.China, and also with the Guangdong provincial Key Laboratory of Information Security Technology, Guangzhou 510006, P.R.China (e-mail: zhaochan3@mail.sysu.edu.cn).}
\thanks{$*$ Corresponding author.}
 }

% The paper headers
%\markboth{Journal of \LaTeX\ Class Files,~Vol.~1, No.~2, December~2023}%
%{Shell \MakeLowercase{\textit{et al.}}: A Sample Article Using IEEEtran.cls for IEEE Journals}

%\IEEEpubid{0000--0000~\copyright~2023 IEEE}
% Remember, if you use this you must call \IEEEpubidadjcol in the second
% column for its text to clear the IEEEpubid mark.

\maketitle

\begin{abstract}
Self-dual cyclic codes have garnered significant interest owing to their rich algebraic structures and wide-ranging applicability. Their construction and the establishment of lower bounds on their minimum distances are fundamental problems in coding theory. Chen and Ding laid an important foundation for the construction of self-dual cyclic codes in the case where the multiplicative order of $q$ module $n$, denoted by $\operatorname{ord}_n(q)$, is odd. Building on their work, we extend the investigation to the case of even order $\operatorname{ord}_n(q)$ and demonstrate that the minimum distances of the resulting self-dual cyclic codes satisfy square-root lower bounds. By examining the consecutive zero segments in the defining set of the dual code, we determine the exact parameters of Euclidean self-dual cyclic codes with even $\operatorname{ord}_n(q)$ and Hermitian self-dual cyclic codes with odd $\operatorname{ord}_n(q)$. Furthermore, for Euclidean self-dual cyclic codes with odd $\operatorname{ord}_n(q)$ and Hermitian self-dual cyclic codes with even $\operatorname{ord}_n(q)$, we introduce a refined parameter selection that leads to larger minimum distances with the same code length and dimension. This approach also yields tighter lower bounds for several families of self-dual cyclic codes. This work enriches the theory of self-dual cyclic codes and offers new insights into estimating lower bounds on their minimum distances.
\end{abstract}

\begin{IEEEkeywords}
Linear Code, Cyclic code, Self-dual Code, Minimum Distance.
\end{IEEEkeywords}

\section{Introduction}
\label{1}
\subsection{Linear Codes}
\IEEEPARstart{L}{et} $\mathbb{F}_q$ be a finite field of size $q$, where $q$ is a prime power, and let $n$ be a positive integer that is relatively prime to $q$. The \textit{Hamming weight} of a vector $\textbf{a}=(a_0,\dots,a_{n-1})\in \mathbb{F}_q^n$ is defined by 
\[\text{wt}(\textbf{a})=\#\{i:a_i\ne0\}.\] The \textit{Hamming distance} $d(\textbf{a},\textbf{b})$ between two vectors \textbf{a} and \textbf{b} is defined by 
\[d(\textbf{a},\textbf{b})=\text{wt}(\textbf{a}-\textbf{b}).\]
The \textit{minimum Hamming distance} $d(\mathcal{C})$ of a subset $\mathcal{C}\subseteq \mathbb{F}_q^n$ is defined by \[d(\mathcal{C})=\min\limits_{\textbf{a}\ne\textbf{b}}\{d(\textbf{a},\textbf{b}):\textbf{a}\in \mathcal{C},\textbf{b}\in\mathcal{C}\}.\]
A linear code $\mathcal{C}$ over $\mathbb{F}_q$, denoted as $[n,k,d]_q$, is a $k$-dimensional linear subspace of the vector space $\mathbb{F}_q^n$, where $n$ represents the length of code and $d$ denotes its minimum distance. It is clear that the minimum distance of a nonzero linear code equals its minimum nonzero weight~\cite{Blahut2003Algebraic}.

The \textit{Euclidean inner product} on $\mathbb{F}_q^n$ is defined by \[\langle\textbf{x},\textbf{y}\rangle=\sum\limits_{i=0}^{n-1}x_iy_i,\]
where $\textbf{x}=(x_0,\dots,x_{n-1})$ and $\textbf{y}=(y_0,\dots,y_{n-1})$. The \textit{Euclidean dual} of a linear code $\mathcal{C}\subseteq\mathbb{F}_q^n$ is \[\mathcal{C}^\perp=\{\textbf{c}\in \mathbb{F}_q^n:\langle\textbf{c},\textbf{y}\rangle=0,\forall\textbf{y}\in \mathcal{C}\}.\]
A linear code $\mathcal{C}\subseteq\mathbb{F}_q^n$ is \textit{Euclidean self-orthogonal} if $\mathcal{C}\subseteq \mathcal{C}^\perp$, \textit{Euclidean dual-containing} if $\mathcal{C}^\perp\subseteq\mathcal{C}$, and \textit{Euclidean self-dual} if $\mathcal{C}=\mathcal{C}^\perp$.

The Hermitian inner product on $\mathbb{F}_q^n$ is defined as
\[
\langle \mathbf{x}, \mathbf{y} \rangle_H = \sum_{i=0}^{n-1} x_i y_i^{q_1},
\]
whenever $q$ admits the representation $q = q_1^2$ with $q_1$ a prime power; here $\mathbf{x} = (x_0,\dots,x_{n-1})$ and $\mathbf{y} = (y_0,\dots,y_{n-1})$ lie in $\mathbb{F}_q^n$.  The \textit{Hermitian dual} of a linear code $\mathcal{C}\subseteq\mathbb{F}_q^n$ is \[\mathcal{C}^{\perp_H}=\{\textbf{c}\in \mathbb{F}_q^n:\langle\textbf{c},\textbf{y}\rangle_H=0,\forall\textbf{y}\in \mathcal{C}\}.\]
For a linear code $\mathcal{C}\subseteq\mathbb{F}_q^n$, let\[\mathcal{C}^{q_1}=\{(c_0^{q_1},\dots,c_{n-1}^{q_1}):(c_0,\dots,c_{n-1})\in\mathcal{C}\}.\]
Then the Hermitian dual of $\mathcal{C}$ is
\begin{align}
    \label{(1)}
    \mathcal{C}^{\perp_H}=(\mathcal{C}^\perp)^{q_1}=(\mathcal{C}^{q_1})^\perp.
\end{align}
A linear code $\mathcal{C}\subseteq\mathbb{F}_q^n$ is \textit{Hermitian self-orthogonal} if $\mathcal{C}\subseteq \mathcal{C}^{\perp_H}$, \textit{Hermitian dual-containing} if $\mathcal{C}^{\perp_H}\subseteq\mathcal{C}$, and \textit{Hermitian self-dual} if $\mathcal{C}=\mathcal{C}^{\perp_H}$.

\subsection{Cyclic Codes}
An $[n,k,d]_q$ linear code $\mathcal{C}$ is called cyclic if, for any codeword $(c_0,c_1,\dots,c_{n-1})\in\mathcal{C}$, the cyclic shift $(c_{n-1},c_0,\dots,c_{n-2})\in\mathcal{C}$. By associating any vector $(c_0,\dots,c_{n-1})\in\mathbb{F}_q^n$ with the polynomial \[c_0+c_1x+\dots +c_{n-1}x^{n-1}\in\mathbb{F}_q[x]/\langle x^n-1\rangle,\]
any linear code $\mathcal{C}$ of length $n$ over $\mathbb{F}_q$ can be viewed as a subset of the quotient ring $\mathbb{F}_q[x]/\langle x^n-1\rangle$. A code $\mathcal{C}$ is termed cyclic if and only if its corresponding subset in $\mathbb{F}_q[x]/\langle x^n-1\rangle$ forms an ideal in this ring. Moreover, every ideal in $\mathbb{F}_q[x]/\langle x^n-1\rangle$ is principal, then there exists a unique monic polynomial $g(x)$ such that every cyclic code $\mathcal{C}$ can be generated by the polynomial $g(x)$ and the polynomial $g(x)$ divides $x^n-1$. The polynomial $h(x)=(x^n-1)/g(x)$ is known as the parity-check polynomial. The dual code $\mathcal{C}^\perp$ of a cyclic code $\mathcal{C}$ with generator polynomial $g(x)$ is the cyclic code with the generator polynomial $g^\perp(x)=x^{\deg h(x)}h(x^{-1})/h(0)$, which is the reciprocal polynomial of the check polynomial $h(x)$ of $\mathcal{C}$~\cite{Roman1992Coding}. Bose-Chaudhuri-Hocquenghem (BCH) codes, a subclass of cyclic codes, are widely applied in communications and data storage systems due to their simple construction and well‑developed decoding algorithms~\cite{Bose1960a},~\cite{Hocquenghem1959}.

Let $\mathbb{Z}_n=\{0,1,\dots,n-1\}$ denote the ring of integers modulo $n$. For each $i\in \mathbb{Z}_n$, the \textit{q-cyclotomic coset} $C_i$ containing $i$ is defined by
\[C_i=\{iq^j\bmod  n:0\le j\le l-1\},\]
where $l$ is the smallest positive integer such that $iq^l\equiv i\pmod n$ and $a \bmod n$ denotes the unique $b\in \mathbb{Z}_n$ such that $a\equiv b\pmod n$. The smallest integer in $C_i$ is called the \textit{coset leader} of $C_i$.

Let $\Gamma(q,n)$ denote the set of all coset leaders. Then the set $\{C_i:i\in \Gamma(q,n)\}$ is a partition of $\mathbb{Z}_n$. Let $m=\mathrm{ord}_n(q)$ and $\alpha$ be a primitive element of $\mathbb{F}_{q^m}$. Define $\beta=\alpha^{(q^m-1)/n}$. Then $\beta$ is an $n$-th primitive root of unity. Define 
\[m_{\beta^i}(x)=\prod\limits_{j\in C_i} (x-\beta^j).\]
It is easily seen that $m_{\beta^i}(x)$ is an irreducible polynomial in $\mathbb{F}_q[x]$ and the canonical factorization of $x^n-1$ over $\mathbb{F}_q$ is 
\[x^n-1=\prod\limits_{i\in \Gamma(q,n)}m_{\beta^i}(x).\]
By definition, the generator polynomial $g(x)$ of a cyclic code $\mathcal{C}$ of length $n$ over $\mathbb{F}_q$ is a divisor of $x^n-1$. The \textit{defining set} of the cyclic code $\mathcal{C}$ with respect to $\beta$ is defined by 
\[\textbf{T}_g=\{i\in \mathbb{Z}_n:g(\beta^i)=0\}.\]
Then the defining set of a cyclic code is the disjoint union of some $q$-cyclotomic cosets.

Let $\delta\ge 2$ be an integer and $b$ be an integer. Let $\mathcal{C}_{(q,n,\delta,b,\beta)}$ denote the cyclic code over $\mathbb{F}_q$ with length $n$ and generator polynomial \[g(x)=\text{lcm}\{m_{\beta^b}(x),\dots,m_{\beta^{b+\delta-2}}(x)\},\]
where $\text{lcm}$ denotes the least common multiple of the set of polynomials over $\mathbb{F}_q$.  This code is called a \textit{BCH code} with designed distance $\delta$~\cite{huffman2003fundamentals}. It is well known that the minimum distance of the BCH code $\mathcal{C}_{(q,n,\delta,1,\beta)}$ is at least its designed distance $\delta$~\cite{Blahut2003Algebraic}. 

The study of minimum distances of BCH codes has long been a central topic in coding theory. In 2015, Ding, Du, and Zhou formulated a significant conjecture regarding a class of binary BCH codes, establishing a lower bound and conjecturing that it actually gives the exact minimum distance for a certain parameter regime~\cite{7056532}. Subsequently, in 2017, Li carried out an in-depth investigation of the minimum distances of some narrow-sense primitive BCH codes, achieving substantial progress on this problem while also raising three open problems concerning the parameters of such codes~\cite{2017The}. Nearly a decade later, in 2025, Sun settled both the conjecture of Ding, Du, and Zhou and the open problems of Li by determining the exact parameters of a family of narrow-sense primitive BCH codes, thereby confirming the tightness of the previously established bounds~\cite{Sun2025}. In 2026, Chen, Chen, Ding, and Lao studied the minimum distances of several families of BCH codes and derived new lower bounds that improve on the classical BCH bound in certain parameter regimes~\cite{11505961}. In the same year, Tiwari and Kewat determined the exact minimum distances of three classes of primitive BCH codes and further extended their results to certain families of cyclic codes, providing a more precise characterization of the distance properties for these code classes~\cite{11428251}.

\subsection{Related Work and Contributions of This Paper}
Since the 1980s, the study of self-dual cyclic codes has witnessed major advances in existence criteria, constructive approaches, and parameter determination.

The investigation of binary self-dual cyclic codes began with the seminal work of Sloane and Thompson in 1983, who systematically characterized their algebraic structure and established the fundamental necessary and sufficient conditions for their existence~\cite{1056682}. A Plotkin-sum-based construction was introduced for binary repeated-root cyclic codes, which later became a crucial tool for generating self-dual cyclic codes~\cite{75250}. Subsequently, Kai and Zhu extended the study to fields of even characteristic, proving that self-dual cyclic codes exist over such fields if and only if the characteristic is two~\cite{2008On}. A significant breakthrough in bounding the minimum distance was made by Heijne and Top, who constructed a family of binary self-dual cyclic codes with lower bound $d \ge \frac{1}{2}\sqrt{n+2}$---the first result linking the minimum distance directly to the square root of the code length~\cite{5290289}.  A complete enumeration of such codes was provided by analyzing their generator polynomials~\cite{5730592}. More recently, Zhang conducted a thorough quantitative analysis of self-dual cyclic codes over finite fields~\cite{2024The}, and in 2025, Chen and Ding significantly generalized van Lint's theorem, yielding a new family of codes with square-root-like lower bounds~\cite{10856234}. Particularly, an open problem was resolved in~~\cite{wang2026constructionsselfdualbinarycyclic} that had remained unsolved for 70 years: the existence of infinite families of self-dual binary cyclic codes with minimum-distance lower bounds exceeding the square-root lower bounds.

The requirement of characteristic two for self-dual cyclic codes has motivated researchers to explore broader classes, including self-dual negacyclic and quasi-cyclic codes. Xie, Chen, Ding, and Sun  constructed several families of $q$-ary self-dual negacyclic codes of lengths $n$ with minimum distances $d \ge \sqrt{n}$ for various lengths $n$ and any given odd prime power $q$ \cite{10478023}. Additionally, Kawaguchi and Matsui proposed a search framework for binary self-dual quasi-cyclic codes with large minimum weight \cite{9366205}. Their method employs generator polynomial matrices to represent quasi-cyclic codes and combines modulus factorization with the Chinese remainder theorem to efficiently search for codes with large minimum weight.

The main contributions of this paper are summarized as follows:

\begin{itemize}
    \item We provide a new choice of the designed distance $\delta$ and investigate the consecutive zeros segment in the defining sets of both Euclidean and Hermitian dual codes under the new designed distance. Moreover, we derive tighter lower bounds on their minimum distances.
    \item We refine the parameters of the self-dual cyclic codes previously obtained by Chen and Ding in~\cite{10856234} and prove that for Hermitian self-dual cyclic codes with odd $\mathrm{ord}_n(q)$, the actual minimum distance exceeds a square-root lower bound.
    \item We construct Euclidean self-dual cyclic codes with even $\mathrm{ord}_n(q)$ and establish that they exceed square-root lower bound. Specifically, the Euclidean self-dual cyclic code $\mathcal{C}'$ in Theorem~\ref{even Euclidean self-dual} has parameters
    \[
    [2(q^m - 1),\, q^m - 1,\,2(q^{\frac{m}{2}} - 1)]_q.
    \]
    \item We construct Hermitian self-dual cyclic codes with even $\mathrm{ord}_n(q)$ and show that they have square-root lower bounds. The Hermitian self-dual cyclic code $\mathcal{C}'$ in Theorem~\ref{new Hermitian self-dual} has parameters
    \[
    [2(q^m - 1),\, q^m - 1,\, d \ge q_1^{m+1} - q_1^2 + 2]_q \quad (m\ge 4)
    \] and
    \[
    [2(q^m - 1),\, q^m - 1,\, d \ge q_1^3 - q_1^2 + q_1 + 1]_q \quad (m=2).
    \]
\end{itemize}
The parameters of the codes obtained in this paper and those of existing codes are summarized in Tables~\ref{Parameters of Euclidean Self-Dual Cyclic Codes} and~\ref{Parameters of Hermitian Self-Dual Cyclic Codes}, where the dash ``——'' indicates that the corresponding parameter is not applicable.

\begin{table}[htbp]
    \centering
    \caption{Parameters of Euclidean Self-Dual Cyclic Codes with Square-Root Lower Bounds  }
    \label{Parameters of Euclidean Self-Dual Cyclic Codes}
    \begin{tabular}{c c c c c c}
        \hline
        $q$ & $m$ & $\delta$ & Minimum Distance $d$ & Reference \\
        \hline
        $2^s,s\ge1$ & $\ge3$, odd & $q^\frac{m+1}{2}-q+1$ & $\ge q^\frac{m+1}{2}-q+1$ & \cite{10856234} \\
        $2$ & $\ge 10$, even & —— & $\ge 2^{\frac{m}{2}+1}-2^{t+1}-2$ & \cite{wang2026constructionsselfdualbinarycyclic}\\
        $2$ & $\ge 6$, even & —— & $\ge 2^{\frac{m}{2}+1}-2$ & \cite{wang2026constructionsselfdualbinarycyclic} \\
        $2$ & $\ge 8$, even & —— & $\ge 2^\frac{m}{2}+2^{\frac{m}{2}-1}$ & \cite{wang2026constructionsselfdualbinarycyclic} \\ 
        $2$ & $\ge 11$, odd & —— & $\ge 2^{\frac{m+1}{2}}+2$ & \cite{wang2026constructionsselfdualbinarycyclic} \\        
        $2$ & $\ge5$, odd & $2^\frac{m+1}{2}-4s-5<\delta<2^\frac{m+1}{2}-4s-1$ & $\ge \min\{2\delta,2^\frac{m+1}{2}+s+[s~\text{odd}]\}$ & \cite{wang2026constructionsselfdualbinarycyclic} \\
        $2^s,s\ge 2$ & $\ge 5$, odd & $q^\frac{m+1}{2}-2q+1$ & $\ge q^\frac{m+1}{2}+3$  & Theorem \ref{4 small delta self-dual cyclic} \\
        $2^s, s\ge 3$ & $3$ & $q^2-2q+1$ & $\ge q^2+4$  & Theorem \ref{4 small delta self-dual cyclic} \\
        $4$ & $3$ & $9$ & $\ge 18$ & Theorem \ref{4 small delta self-dual cyclic} \\
        $2$ & $\ge 7$, odd  & $2^\frac{m+1}{2}-3$ & $\ge 2^\frac{m+1}{2}$ & Theorem \ref{2 small delta self-dual cyclic} \\
        $2$ & $5$ & $5$ & $\ge 10$ & Theorem \ref{2 small delta self-dual cyclic} \\
        $2^s,s\ge1$ & $\ge3$, odd & $q^\frac{m+1}{2}-q+1$ & $\ge q^\frac{m+1}{2}-q+2$ & Theorem \ref{new Euclidean odd} \\
        $2^s,s\ge1$ & $\ge 2$, even & $q^\frac{m}{2}-1$ & $2(q^\frac{m}{2}-1)$ & Theorem \ref{exact Euclidean} \\
        \hline
    \end{tabular}
\end{table}

\begin{table}[htbp]
    \centering
    \caption{Parameters of Hermitian Self-Dual Cyclic Codes with Square-Root Lower Bounds }
    \label{Parameters of Hermitian Self-Dual Cyclic Codes}
    \begin{tabular}{c c c c c c}
        \hline
        $q_1$ & $m$ & $\delta$ & Minimum Distance $d$ & Reference \\
        \hline
        $2^s,s\ge1$ & $\ge3$, odd & $q_1^m-1$ & $\ge q_1^m-1$ & \cite{10856234} \\
        $2^s,s\ge2$ & $\ge 4$, even & $q_1^{m+1}-2q_1^2+1$ & $\ge q_1^{m+1}+3$ & Theorem \ref{Hermitian low delta} \\
        $2^s, s\ge 3$ & $2$ & $q_1^3-2q_1^2+1$ & $\ge q_1^3+2q_1+4$ & Theorem \ref{Hermitian low delta} \\
        $4$ & $2$ & $33$ & $\ge 66$ & Theorem \ref{Hermitian low delta} \\
        $2^s,s\ge1$ & $\ge 4$, even & $q_1^{m+1}-q_1^2+1$ & $\ge q_1^{m+1}-q_1^2+2$ & Theorem \ref{new Hermitian self-dual} \\
        $2^s,s\ge1$ & $2$, even & $q_1^3-q_1^2+1$ & $\ge q_1^{3}-q_1^2+q_1+1$ & Theorem \ref{new Hermitian self-dual} \\
        $2^s,s\ge1$ & $\ge3$, odd & $q_1^m-1$ & $2(q_1^m-1)$ & Theorem \ref{Exact Hermitian} \\
        \hline
    \end{tabular}
\end{table}

\subsection{The Organization of This Paper}

The remainder of this paper is organized as follows. Section~\ref{2} introduces the preliminaries and basic principles underlying the construction of self-dual cyclic codes. In Section~\ref{3}, we refine the lower bound estimation of the minimum distance for two classes of self-dual cyclic codes and examine the variation of consecutive zeros in the defining set of the dual code when the designed distance $\delta$ is small, thereby deriving tighter lower bounds on the minimum distance of the self-dual cyclic codes. Section~\ref{4} leverages known results on the defining set of the dual code to determine the exact parameters of two classes of self-dual cyclic codes, and consequently establishes that their minimum distances exceed square-root lower bounds. Section~\ref{5} concludes the paper with a summary of the main results.

\section{Preliminaries}
\label{2}
This section introduces the prerequisites for constructing self-dual cyclic codes. We begin by explaining why self-dual cyclic codes are necessarily repeated-root cyclic codes. Since $\dim(\mathcal{C}^\perp) = n - \dim(\mathcal{C})$, the length $n$ of a self-dual cyclic code $\mathcal{C}$ must be even. The following lemma establishes that the characteristic and the code length are not relatively prime.

\begin{lemma}[\cite{2008On}]
    Self-dual cyclic codes over $\mathbb{F}_q$ exists if and only if $q$ is even.
\end{lemma}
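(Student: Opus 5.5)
We prove both directions, working in $R=\mathbb{F}_q[x]/\langle x^n-1\rangle$. Throughout, $n$ is the code length; it is not assumed coprime to $q$, since we have just seen that a self-dual code must have even length $n$.

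\emph{Sufficiency} is an explicit construction. Suppose $q=2^s$. Take $n=2$ and $\mathcal{C}=\langle x+1\rangle \subseteq \mathbb{F}_q[x]/\langle x^2-1\rangle$, so $\mathcal{C}=\{(a,a):a\in\mathbb{F}_q\}$. Since $\langle (a,a),(b,b)\rangle = 2ab = 0$ in characteristic two, we have $\mathcal{C}\subseteq\mathcal{C}^\perp$. Comparing dimensions ($1=2-1$) gives equality, so $\mathcal{C}$ is self-dual. More generally, for any $n=2n'$ the code generated by $x^{n'}+1$ is self-dual, by the same inner-product computation.

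\emph{Necessity.} Assume $q$ is odd and $\mathcal{C}=\langle g(x)\rangle$ is self-dual of length $n$, with $g\mid x^n-1$ and $h=(x^n-1)/g$. The dual of $\langle g\rangle$ is generated by the reciprocal $h^*(x)=x^{\deg h}h(x^{-1})$, up to a scalar; this holds for repeated-root lengths too, since the argument that $\mathcal{C}^\perp$ is generated by $h^*$ uses only $g h = x^n-1$ and a dimension count. So self-duality forces $g=\lambda h^*$ for some nonzero scalar $\lambda$, and hence
\[
x^n-1=g\,h=\lambda\, h^*(x)\,h(x).
\]
Write $n=p^a n_0$ with $p=\mathrm{char}\,\mathbb{F}_q$ odd and $\gcd(n_0,p)=1$. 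Then $x^n-1=(x^{n_0}-1)^{p^a}$. The factor $x-1$ appears in $x^{n_0}-1$ exactly once, so $(x-1)$ has multiplicity exactly $p^a$ in $x^n-1$, and $p^a$ is odd. On the other hand, $x-1$ is self-reciprocal up to a unit, since $(x-1)^*=1-x$. Therefore its multiplicity in $h^*$ equals its multiplicity in $h$, and its total multiplicity in $h^*h$ is even. This contradicts the odd multiplicity $p^a$. Hence no self-dual cyclic code exists when $q$ is odd.

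The main point needing care is the claim $\mathcal{C}^\perp=\langle h^*\rangle$ without the coprimality hypothesis the paper assumes elsewhere. We would prove it directly: every codeword is a multiple of $g$, and the coefficients of $g\cdot h$ vanish in degrees $1,\dots,n-1$ modulo $x^n-1$. This shows that the cyclic shifts of $h^*$ are orthogonal to all shifts of $g$. The code $\langle h^*\rangle$ has dimension $n-\deg h^*=n-\deg h=\deg g=n-\dim\mathcal{C}$, which gives equality. The multiplicity parity argument then finishes the proof.
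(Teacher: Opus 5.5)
The paper does not prove this lemma; it only cites Kai and Zhu \cite{2008On}, so there is no in-paper argument to compare against. Your proof is correct and self-contained.

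\emph{Sufficiency.} Your construction $\langle x^{n/2}+1\rangle=\{(\mathbf{u},\mathbf{u})\}$ works for every even $n$ when $q$ is even.

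\emph{Necessity.} This is the standard argument. You show that $\mathcal{C}^\perp=\langle h^*\rangle$ for every length, so self-duality gives $x^n-1=\lambda h^*h$. Then the self-reciprocal factor $x-1$ must have even multiplicity on the right, but it has odd multiplicity $p^a$ in $x^n-1$.

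Two routine steps are worth making explicit:
\begin{itemize}
\item The reciprocal is multiplicative and satisfies $k^*(1)=k(1)$. Hence $x-1$ has the same multiplicity in $h$ and in $h^*$.
\item $h^*\mid x^n-1$, because $g^*h^*=(x^n-1)^*=-(x^n-1)$. Also $h(0)\neq 0$. Together these justify $\dim\langle h^*\rangle=n-\deg h$.
\end{itemize}

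Your parity argument actually gives more than the lemma. Applied in any characteristic, it forces $p^a$ to be even, i.e.\ $p=2$ and $2\mid n$. That is exactly the consequence $\gcd(q,n)\neq 1$ (self-dual cyclic codes are repeated-root codes) which the paper draws right after citing the lemma. Combined with your construction, it gives the sharper statement that a self-dual cyclic code of length $n$ over $\mathbb{F}_q$ exists if and only if both $q$ and $n$ are even.
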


Suppose $\mathcal{C} \subseteq \mathbb{F}_q^n$ is a self-dual cyclic code. Then $\gcd(q,n) \ne 1$, which implies that $\mathcal{C}$ is a repeated-root cyclic code. To construct such codes, we adopt the approach of van Lint, in which the Plotkin sum is employed.

Let $\mathcal{C}_1$ and $\mathcal{C}_2$ be $[n,k_1,d_1]_q$ and $[n,k_2,d_2]_q$ linear codes, respectively.
The Plotkin sum of $\mathcal{C}_1$ and $\mathcal{C}_2$ is denoted by $\text{Plotkin}(\mathcal{C}_1,\mathcal{C}_2)$ and defined by
\[
\text{Plotkin}(\mathcal{C}_1,\mathcal{C}_2)=\{(\textbf{u},\textbf{u}+\textbf{v}):\textbf{u}\in \mathcal{C}_1,\,\textbf{v}\in\mathcal{C}_2\},
\]
which can be regarded as a linear code with parameters $[2n,k_1+k_2,\min\{2d(\mathcal{C}_1),d(\mathcal{C}_2)\}]_q$~\cite{MacWilliamsSloane1977}.

\begin{lemma}[\cite{10856234}]
\label{van lint}
    Let $q$ be a power of $2$ and $n$ be an odd positive integer. Let $\mathcal{C}_1\subseteq\mathbb{F}_q^n$ be a cyclic code with generator polynomial $g_1(x)\in\mathbb{F}_q[x]$ and let $\mathcal{C}_2\subseteq\mathbb{F}_q^n$ be a cyclic code with generator polynomial $g_1(x)g_2(x)/f(x)\in\mathbb{F}_q[x]$, where $g_2(x)$ is a divisor of $x^n+1$ and $f(x)=\gcd(g_1(x),g_2(x))$. Then the code $\mathcal{C}=\text{Plotkin}(\mathcal{C}_1,\mathcal{C}_2)$ is permutation-equivalent to the repeated-root cyclic code $\mathcal{C}'$ of length $2n$ generated by the polynomial $g_1^2(x)g_2(x)/f(x)$.
\end{lemma}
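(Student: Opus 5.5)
The plan is to realise the permutation explicitly and compare the two codes as ideals. Since $n$ is odd and $q$ is even, $\gcd(2,n)=1$, so the Chinese remainder theorem gives a bijection $\mathbb{Z}_{2n}\to\mathbb{Z}_2\times\mathbb{Z}_n$, $k\mapsto(k\bmod 2,\,k\bmod n)$. Equivalently, $x^{2n}+1=(x^n+1)^2$, and we will identify $\mathbb{F}_q[x]/\langle x^{2n}+1\rangle$ with $\mathbb{F}_q[y,z]/\langle y^2+1,\,z^n+1\rangle$ by $x\mapsto yz$. This map is a ring isomorphism because $yz$ has order $2n$ and the monomials $(yz)^k$, $0\le k<2n$, run exactly over the basis $y^az^b$. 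On coordinates it is precisely the CRT reindexing, so it is a permutation of coordinates. Writing $w=1+y$ (so $w^2=0$ in characteristic $2$), every element of the bivariate ring can be written uniquely as $a(z)+w\,b(z)$ with $a,b\in\mathbb{F}_q[z]/\langle z^n+1\rangle$. In coordinates, $a+wb=(a+b)+yb$, which corresponds to the pair $(a+b,\,b)$ on the two halves $y^0$ and $y^1$.

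Next we decide which description of $\mathcal{C}$ to use. The Plotkin codeword $(\mathbf u,\mathbf u+\mathbf v)$ corresponds to $\mathbf u+y(\mathbf u+\mathbf v)=w\mathbf u+y\mathbf v$. Up to swapping the halves and relabelling, which is again a permutation, the Plotkin sum is the set $\{a+w b: a\in\mathcal{C}_2,\ b\in\mathcal{C}_1\}$, with $a$ and $b$ given by $z$-polynomials. So it suffices to show that the ideal $\mathcal{C}'$ generated by $g_1(x)^2g_2(x)/f(x)$ equals $\mathcal{C}_2+w\,\mathcal{C}_1$.

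For the image of a polynomial $p(x)$, we expand $p(yz)=p(z+wz)=p(z)+wz\,p'(z)$ using $w^2=0$. Applied to $G=g_1^2g_2/f$, this gives $G(z)+wzG'(z)$. Because $g_1^2$ is a square in characteristic $2$, its derivative vanishes, so $zG'=g_1^2\,z\,(g_2/f)'$. Then:

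\begin{itemize}
\item Multiplying the generator by $w$ gives $wG(z)$, which lies in $w\,\mathcal{C}_1$ since $g_1\mid G$.
\item The $a$-part $G(z)$ is divisible by $g_1g_2/f$, so it lies in $\mathcal{C}_2$.
\item The $w$-part $g_1^2(\cdot)$ is divisible by $g_1$, so it lies in $\mathcal{C}_1$.
\end{itemize}

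Hence $\mathcal{C}'\subseteq\mathcal{C}_2+w\mathcal{C}_1$, which is itself an ideal: it is closed under multiplication by $z$, and under multiplication by $w$ because $\mathcal{C}_2\subseteq\mathcal{C}_1$.

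The main step is to compare dimensions rather than prove the reverse inclusion directly. We have $\dim(\mathcal{C}_2+w\mathcal{C}_1)=\dim\mathcal{C}_1+\dim\mathcal{C}_2=(n-\deg g_1)+(n-\deg g_1-\deg g_2+\deg f)$. The dimension of the repeated-root code is $2n-\deg G=2n-2\deg g_1-\deg g_2+\deg f$. These agree, so the inclusion is an equality.

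The delicate points are checking that the representation $a+wb$ is unique, so that the dimension is additive, and that $x\mapsto yz$ really is the CRT coordinate permutation. Both follow from the unique decomposition $(a+b)+yb$.
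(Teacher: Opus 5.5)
Your proof is correct and complete. Note that the paper does not prove this lemma: it is quoted from Chen and Ding~\cite{10856234} and used only as a black box in Theorems~\ref{even Euclidean self-dual} and~\ref{even Hermitian self-dual}. So there is no in-paper argument to compare with, and yours stands on its own. Your route has three good features:
\begin{itemize}
\item The CRT reindexing $x\mapsto yz$ turns the coordinate permutation into a ring isomorphism.
\item The nilpotent $w=1+y$, together with $(g_1^2)'=0$, absorbs a nontrivial $f=\gcd(g_1,g_2)$ with no case distinction.
\item The dimension count spares you the reverse inclusion.
\end{itemize}
As a sanity check, take $q=2$, $n=3$, $g_1=x+1$, $g_2=x^2+x+1$. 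Your formula sends $G$ to $(1+y)(1+z)$, i.e.\ the word $(110\,|\,110)=(\mathbf u,\mathbf u)$ with $\mathbf u\in\mathcal{C}_1$ and $\mathcal{C}_2=0$, as it should.

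A few points of polish:
\begin{itemize}
\item When you write $\dim\mathcal{C}'=2n-\deg G$, you should say why $G\mid x^{2n}+1$. We have $G=g_1\cdot\operatorname{lcm}(g_1,g_2)$, and both factors divide the squarefree $x^n+1$, so $G\mid (x^n+1)^2=x^{2n}+1$. In fact your squeeze only needs $\dim\mathcal{C}'\ge 2n-\deg G$, which holds for any $G$.
\item The swap of halves is unnecessary. We have $w\mathbf u+y\mathbf v=\mathbf v+w(\mathbf u+\mathbf v)$, and $(\mathbf u,\mathbf v)\mapsto(\mathbf v,\mathbf u+\mathbf v)$ is a bijection $\mathcal{C}_1\times\mathcal{C}_2\to\mathcal{C}_2\times\mathcal{C}_1$ because $\mathcal{C}_2\subseteq\mathcal{C}_1$. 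So in the $(y^0\,|\,y^1)$ ordering the Plotkin sum is literally $\mathcal{C}_2+w\,\mathcal{C}_1$.
\item Your first bullet is redundant once $\mathcal{C}_2+w\,\mathcal{C}_1$ is known to be an ideal.
\item Your closing sentence slightly misattributes one fact. That $x\mapsto yz$ is a coordinate permutation comes from $x^k\mapsto y^{k\bmod 2}z^{k\bmod n}$ and the CRT, which you already established in your first paragraph. It does not come from the decomposition $(a+b)+yb$.
\end{itemize}
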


By Lemma~\ref{van lint}, the cyclic code $\mathcal{C}_2$ is a subcode of $\mathcal{C}_1$. Furthermore, the invariance of inner products under coordinate permutations implies that the linear code $\mathcal{C}$ is Euclidean (or Hermitian) self-dual if and only if the repeated-root cyclic code $\mathcal{C}'$ is. The following lemma provides sufficient conditions for the Plotkin sum of two codes to be self-dual.

\begin{lemma}[\cite{10856234}]
\label{self-dual}
    Let $q$ be a power of $2$ and $n$ be odd.
    \begin{itemize}
        \item Suppose that linear code $\mathcal{C}\subseteq \mathbb{F}_q^n$ is Euclidean dual-containing. Then the linear code $\text{Plotkin}(\mathcal{C},\mathcal{C}^\perp)$ is Euclidean self-dual and has minimum distance $\min\{2d(\mathcal{C}),d(\mathcal{C}^\perp)\}$.
        \item Suppose that linear code $\mathcal{C}\subseteq\mathbb{F}_q^n$ is Hermitian dual-containing. Then the linear code $\text{Plotkin}(\mathcal{C},\mathcal{C}^{\perp_H})$ is Hermitian self-dual and has minimum distance $\min\{2d(\mathcal{C}),d(\mathcal{C}^{\perp_H})\}$.
    \end{itemize}
\end{lemma}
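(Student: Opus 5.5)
The plan is to verify self-duality via dimension and self-orthogonality, then read off the minimum distance from the Plotkin sum. We treat the Euclidean case in detail; the Hermitian case is identical once $\perp$ is replaced by $\perp_H$ throughout, using~\eqref{(1)}.

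\emph{Step 1: dimension.} Let $\mathcal{C}\subseteq\mathbb{F}_q^n$ have dimension $k$, so $\dim\mathcal{C}^\perp=n-k$. Then $\text{Plotkin}(\mathcal{C},\mathcal{C}^\perp)$ has dimension $k+(n-k)=n$, which is half of its length $2n$. It therefore suffices to show that the code is self-orthogonal.

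\emph{Step 2: self-orthogonality.} Take $(\mathbf{u}_1,\mathbf{u}_1+\mathbf{v}_1)$ and $(\mathbf{u}_2,\mathbf{u}_2+\mathbf{v}_2)$ with $\mathbf{u}_i\in\mathcal{C}$ and $\mathbf{v}_i\in\mathcal{C}^\perp$. Their inner product is
\[
\langle\mathbf{u}_1,\mathbf{u}_2\rangle+\langle\mathbf{u}_1+\mathbf{v}_1,\mathbf{u}_2+\mathbf{v}_2\rangle
=2\langle\mathbf{u}_1,\mathbf{u}_2\rangle+\langle\mathbf{u}_1,\mathbf{v}_2\rangle+\langle\mathbf{v}_1,\mathbf{u}_2\rangle+\langle\mathbf{v}_1,\mathbf{v}_2\rangle .
\]
Each term vanishes, for the following reasons.
\begin{itemize}
\item The first term is $0$ because $q$ is even, so the characteristic is $2$.
\item The middle two terms vanish because $\mathbf{v}_i\in\mathcal{C}^\perp$ and $\mathbf{u}_j\in\mathcal{C}$.
\item The last term vanishes because $\mathcal{C}^\perp\subseteq\mathcal{C}$ by the dual-containing hypothesis, so $\mathbf{v}_1\in\mathcal{C}$ and $\mathbf{v}_2\in\mathcal{C}^\perp$.
\end{itemize}
For the Hermitian case, note that $\langle\cdot,\cdot\rangle_H$ is additive in each argument. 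Moreover, $\langle \mathbf{v},\mathbf{u}\rangle_H=0$ whenever $\mathbf{v}\in\mathcal{C}^{\perp_H}$ and $\mathbf{u}\in\mathcal{C}$: this is $\langle\mathbf{u},\mathbf{v}\rangle_H^{q_1}=0$, since applying $q_1$-th powers twice is the identity on $\mathbb{F}_q$. The same computation then goes through.

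\emph{Step 3: minimum distance.} This is the main point requiring care, since the standard Plotkin bound only gives $d\ge\min\{2d(\mathcal{C}),d(\mathcal{C}^\perp)\}$; we need equality, which requires nesting. Let $\mathbf{w}=(\mathbf{u},\mathbf{u}+\mathbf{v})\neq0$.
\begin{itemize}
\item If $\mathbf{v}=0$, then $\mathbf{w}=(\mathbf{u},\mathbf{u})$ with $\mathbf{u}\neq0$, so $\text{wt}(\mathbf{w})=2\,\text{wt}(\mathbf{u})\ge 2d(\mathcal{C})$.
\item If $\mathbf{v}\ne0$, then by the triangle inequality $\text{wt}(\mathbf{w})=\text{wt}(\mathbf{u})+\text{wt}(\mathbf{u}+\mathbf{v})\ge\text{wt}(\mathbf{v})\ge d(\mathcal{C}^\perp)$.
\end{itemize}
For attainment, choose $\mathbf{u}\in\mathcal{C}$ of weight $d(\mathcal{C})$, giving the codeword $(\mathbf{u},\mathbf{u})$ of weight $2d(\mathcal{C})$. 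Choose also $\mathbf{v}\in\mathcal{C}^\perp$ of weight $d(\mathcal{C}^\perp)$, giving $(0,\mathbf{v})$, which is a codeword with $\mathbf{u}=0$. Hence the minimum distance equals $\min\{2d(\mathcal{C}),d(\mathcal{C}^\perp)\}$, and analogously $\min\{2d(\mathcal{C}),d(\mathcal{C}^{\perp_H})\}$ in the Hermitian case.

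Note that the oddness of $n$ is not used here. It matters only for the cyclic interpretation via Lemma~\ref{van lint}.
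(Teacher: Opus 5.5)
The paper does not prove this lemma: it is imported from Chen and Ding~\cite{10856234}. The distance claim is just the general fact, quoted right before the lemma from MacWilliams--Sloane, that $\text{Plotkin}(\mathcal{C}_1,\mathcal{C}_2)$ is a $[2n,k_1+k_2,\min\{2d(\mathcal{C}_1),d(\mathcal{C}_2)\}]_q$ code. Your argument is correct and nothing is missing. It consists of a dimension count, self-orthogonality from characteristic $2$ together with $\mathcal{C}^\perp\subseteq\mathcal{C}$, and the usual $(\mathbf{u}\,|\,\mathbf{u}+\mathbf{v})$ weight argument. This is the standard verification of precisely these facts.

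One remark in your Step~3 is wrong, though harmless: the equality $d=\min\{2d(\mathcal{C}_1),d(\mathcal{C}_2)\}$ does \emph{not} require nesting. The codewords $(\mathbf{u},\mathbf{u})$ and $(\mathbf{0},\mathbf{v})$ lie in the Plotkin sum for any two codes of the same length. Indeed, your own Step~3 never uses $\mathcal{C}^\perp\subseteq\mathcal{C}$; that hypothesis is needed only in Step~2, to kill the term $\langle\mathbf{v}_1,\mathbf{v}_2\rangle$.

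There is also a small slip in the Hermitian case. With the paper's convention $\mathcal{C}^{\perp_H}=\{\mathbf{c}:\langle\mathbf{c},\mathbf{y}\rangle_H=0\ \forall\,\mathbf{y}\in\mathcal{C}\}$, the vanishing of $\langle\mathbf{v},\mathbf{u}\rangle_H$ for $\mathbf{v}\in\mathcal{C}^{\perp_H}$ and $\mathbf{u}\in\mathcal{C}$ is the definition itself. It is the reverse order that needs the identity $\langle\mathbf{u},\mathbf{v}\rangle_H=\langle\mathbf{v},\mathbf{u}\rangle_H^{q_1}$. Since that identity works in both directions, your conclusion that both orders vanish, which is what the cross terms require, still stands.
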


Next, we use the defining set of cyclic codes to derive a criterion for a cyclic code to be Euclidean (or Hermitian) dual-containing. For any subset $\textbf{T}\subseteq \mathbb{Z}_n$, define 
\[\textbf{T}^{-1}=\{n-i:i\in \textbf{T}\}\]
and 
\[\textbf{T}^c=\mathbb{Z}_n\setminus \textbf{T}.\]
In the case of $q=q_1^2$, define 
\[\textbf{T}^{-q_1}=\{(n-q_1i)\bmod n:i\in\textbf{T}\}.\]

\begin{lemma}[\cite{4106108}]
\label{defining set}
    Let $\mathcal{C}$ be a cyclic code with generator polynomial $g(x)$. The following hold:
    \begin{itemize}
        \item The cyclic code $\mathcal{C}$ is Euclidean dual-containing if and only if $\textbf{T}_g\cap \textbf{T}_g^{-1}=\emptyset$.
        \item The cyclic code $\mathcal{C}$ is Hermitian dual-containing if and only if $\textbf{T}_g\cap\textbf{T}_g^{-q_1}=\emptyset$.
    \end{itemize}
\end{lemma}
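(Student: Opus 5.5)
The plan is to translate dual-containment into a statement about zeros, using the standard description of the dual's generator polynomial. Put $\mathbf{Z}=\{i\in\mathbb{Z}_n: x^n-1 \text{ vanishes at } \beta^i\}=\mathbb{Z}_n$, so that the zeros of $h(x)=(x^n-1)/g(x)$ are indexed by $\mathbf{T}_g^c$. Since $n$ is coprime to $q$, $x^n-1$ is separable and every cyclic code is determined by its defining set. Containment of cyclic codes reverses containment of defining sets: $\mathcal{C}_a\subseteq\mathcal{C}_b$ iff $g_b\mid g_a$ iff $\mathbf{T}_{g_b}\subseteq\mathbf{T}_{g_a}$. The proof therefore reduces to computing the defining set of $\mathcal{C}^\perp$ (respectively $\mathcal{C}^{\perp_H}$) and comparing it with $\mathbf{T}_g$.

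\emph{Euclidean case.} The introduction states that $\mathcal{C}^\perp$ has generator polynomial $g^\perp(x)=x^{\deg h}h(x^{-1})/h(0)$. Its roots are the inverses of the roots of $h$, namely $\beta^{-i}$ for $i\in\mathbf{T}_g^c$. Hence $\mathbf{T}_{g^\perp}=(\mathbf{T}_g^c)^{-1}=(\mathbf{T}_g^{-1})^c$, the last equality holding because $i\mapsto n-i$ is a bijection of $\mathbb{Z}_n$. Then
\[
\mathcal{C}^\perp\subseteq\mathcal{C}\iff \mathbf{T}_g\subseteq(\mathbf{T}_g^{-1})^c\iff \mathbf{T}_g\cap\mathbf{T}_g^{-1}=\emptyset .
\]

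\emph{Hermitian case.} Use (\ref{(1)}), which gives $\mathcal{C}^{\perp_H}=(\mathcal{C}^\perp)^{q_1}$. Raising coefficients to the $q_1$-th power sends a cyclic code with generator $G(x)$ to the cyclic code with generator $G^{(q_1)}(x)$, the polynomial with coefficients raised to the $q_1$. Its roots are $\gamma^{q_1}$ for roots $\gamma$ of $G$, because $G^{(q_1)}(\gamma^{q_1})=G(\gamma)^{q_1}$. So the defining set is multiplied by $q_1$:
\[
\mathbf{T}_{\mathcal{C}^{\perp_H}}=q_1(\mathbf{T}_g^{-1})^c=(\mathbf{T}_g^{-q_1})^c,
\]
using again that $i\mapsto -q_1 i \bmod n$ is a bijection, since $\gcd(q_1,n)=1$. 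The same comparison then yields $\mathcal{C}^{\perp_H}\subseteq\mathcal{C}$ iff $\mathbf{T}_g\cap\mathbf{T}_g^{-q_1}=\emptyset$.

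The only delicate point is justifying that the coefficient-wise Frobenius map carries the code generated by $G$ onto the code generated by $G^{(q_1)}$ with the stated root set. This holds because the map is a ring automorphism of $\mathbb{F}_q[x]/\langle x^n-1\rangle$ fixing $x^n-1$, so it maps ideals to ideals and divisors to divisors. With that in hand, the rest is bookkeeping of index sets.
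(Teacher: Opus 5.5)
Your proof is correct and is the standard argument behind the cited result \cite{4106108}. The paper gives no proof of its own; its only accompanying remark is exactly your intermediate step (the defining sets of $\mathcal{C}^\perp$ and $\mathcal{C}^{\perp_H}$ are $(\mathbf{T}_g^{-1})^c$ and $(\mathbf{T}_g^{-q_1})^c$), from which dual-containment follows by comparing $\mathbf{T}_g$ with these sets, and your derivation also shows that the paper's $(\mathbf{T}_g^{-q})^c$ in the Hermitian case should have $-q_1$, since with $-q$ it would collapse to $(\mathbf{T}_g^{-1})^c$.
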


Lemma~\ref{defining set} establishes that the defining set of the Euclidean dual code $\mathcal{C}^{\perp}$ is $(\mathbf{T}_g^{-1})^c$, while that of the Hermitian dual code $\mathcal{C}^{\perp_H}$ is $(\mathbf{T}_g^{-q})^c$.

Combining Lemmas~\ref{van lint}, \ref{self-dual} and \ref{defining set}, it suffices to construct a Euclidean (or Hermitian) dual-containing cyclic code. Then, applying this code and its dual to the Plotkin sum yields a Euclidean (or Hermitian) self-dual code. A natural approach is to construct a primitive narrow-sense BCH code that is Euclidean (or Hermitian) dual-containing, which is shown in the following lemmas.

\begin{lemma}[\cite{4106108}]
    \label{Euclidean dual-containing}
    Let $q$ be a prime power, $n=q^m-1$ and $\beta$ be an $n$-th primitive root of unity in $\mathbb{F}_{q^m}$. The following holds:
    \begin{itemize}
        \item If $m$ is odd and the designed distance $\delta$ is in the range $2\le \delta\le q^{\frac{m+1}{2}}-q+1$, then $\mathcal{C}_{(q,n,\delta,1,\beta)}$ is Euclidean dual-containing.
        \item If $m$ is even and the designed distance $\delta$ is in the range $2\le \delta\le q^{\frac{m}{2}}-1$, then $\mathcal{C}_{(q,n,\delta,1,\beta)}$ is Euclidean dual-containing.
    \end{itemize}
\end{lemma}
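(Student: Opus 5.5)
By Lemma~\ref{defining set}, we need to show $\textbf{T}_g\cap\textbf{T}_g^{-1}=\emptyset$, where $\textbf{T}_g=\bigcup_{i=1}^{\delta-1}C_i$. Equivalently, for all $1\le i,j\le \delta-1$ we show $-j\notin C_i$, i.e. $i q^{k}+j\not\equiv 0\pmod{n}$ for every $0\le k\le m-1$. Since $\delta$ enters only through the inequalities $i,j\le \delta-1$, it suffices to prove this for the largest $\delta$ in each case. The plan is to rule out, for every rotation $k$, the existence of a pair $(i,j)$ in this range with $iq^k\equiv -j \pmod n$.

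\textbf{Step 1: reformulation.} Since $n=q^m-1$, multiplying by $q^k$ modulo $n$ cyclically rotates the $m$-digit base-$q$ expansion of $i$ by $k$ places. Also, $n-j$ has base-$q$ digits $(q-1)-j_t$, the digitwise complement of $j$ (taking $1\le j\le n-1$). So the condition $iq^k\equiv -j$ says: some cyclic rotation of the digit string of $i$ equals the digitwise complement of the digit string of $j$.

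\textbf{Step 2: odd $m$.} Here $i,j\le q^{(m+1)/2}-q$. Write $h=(m+1)/2$. Any such number has all digits in positions $\ge h$ equal to $0$, and in position $h-1$ its digit is at most $q-2$; the bound $q^{h}-q=(q-1)q^{h-1}+\dots+(q-1)q-0$ gives exactly this pattern.
\begin{itemize}
\item The complement of $j$ has digits $q-1$ in the $m-h=h-1$ top positions $h,\dots,m-1$, at least $1$ in position $h-1$, and so on.
\item A rotation of $i$ has at least $m-h=h-1$ zero digits, occupying a cyclically consecutive block of length $h-1$.
\item The complement of $j$ has at most $h$ positions that can be $0$, namely positions $0,\dots,h-2$ and possibly none in position $h-1$, because that digit is $\ge1$. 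So its zero positions lie in a block of length $h-1$, positions $0$ to $h-2$.
\end{itemize}
The rotation of $i$ must therefore place its zero block exactly on positions $0,\dots,h-2$. Then the nonzero part of $i$ lands on positions $h-1,\dots,m-1$, and these must all be $q-1$ except possibly the one at position $h-1$. That forces $i$ to have digits $q-1$ in $h-1$ of its low positions in the appropriate order, and complementarily $j$'s low $h-1$ digits are $0$. Comparing the digit at the junction position against the constraint ``digit $h-1$ of $i$ is $\le q-2$'' together with $i,j\ge1$ should give a contradiction. The boundary case where the largest admissible value $q^h-q$ itself appears must be checked separately. This digit bookkeeping is the main obstacle, and I would verify it carefully, including the $q=2$ boundary.

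\textbf{Step 3: even $m$.} Here $i,j\le q^{m/2}-2$, so both have zero upper halves and the lower half of each is not all $(q-1)$'s. The complement of $j$ has upper half all $q-1$, and its lower half is not all zero. A rotation of $i$ contains a cyclic block of $m/2$ zeros, which must avoid the upper half, so it must fill exactly the lower half. But the lower half of the complement of $j$ is not all zero, a contradiction. This case is quick.

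Finally, I would note that equality of cosets $C_i=C_{n-j}$ is exactly what was excluded, concluding dual-containment via Lemma~\ref{defining set}.
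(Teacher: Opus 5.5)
Your digit reformulation (rotate the base-$q$ digits of $i$, compare with the digitwise complement of $j$) is sound, and Step~3 is a complete proof for even $m$. Step~2, however, has a genuine gap, because it rests on a false claim. A number $x\le q^h-q$ need \emph{not} have digit at most $q-2$ in position $h-1$. Your own expansion $q^h-q=(q-1)q^{h-1}+\dots+(q-1)q$ has digit $q-1$ there. So does every $x$ in the interval $[(q-1)q^{h-1},\,q^h-q]$, for example $x=4,5,6$ when $q=2$, $m=5$. For $m\ge5$ this is therefore not a single boundary value to be checked separately.

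The correct description is: $x<q^h$, and if $x_1=\dots=x_{h-1}=q-1$ then $x_0=0$. Consequently the complement of $j$ \emph{can} vanish in position $h-1$. The length-$(h-1)$ zero block of the rotated $i$ only has to avoid positions $h,\dots,m-1$. So it may sit on $\{1,\dots,h-1\}$ (rotation by $h$), a placement your argument never considers, as well as on $\{0,\dots,h-2\}$ (rotation by $h-1$). Even in the placement you do treat, the intended contradiction (``digit $h-1$ of $i$ is $\le q-2$'') invokes the false constraint. Also, the low $h-1$ digits of $j$ are forced to be $q-1$, not $0$. The bound is sharp: $i=q^h-q+1$ and $j=(q-1)q^{h-1}-1$ satisfy $iq^{h-1}\equiv -j\pmod n$. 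So the contradiction must use $i,j\le q^h-q$ exactly, and your sketch does not do this.

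\textbf{Repair.} In the placement $\{0,\dots,h-2\}$ you get $i_1=\dots=i_{h-1}=q-1$ and $j_0=\dots=j_{h-2}=q-1$. The correct constraint on $i$ then forces $i_0=0$. The junction equation $i_0=q-1-j_{h-1}$ gives $j_{h-1}=q-1$, i.e.\ $j=q^h-1$, which is out of range. The placement $\{1,\dots,h-1\}$ is the same argument with $i$ and $j$ interchanged: $j_0=0$ is forced, whence $i=q^h-1$.

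\textbf{Shorter route.} You can avoid digit bookkeeping altogether with the magnitude comparison the paper uses in its proof of Theorem~\ref{4 small delta} (the paper itself only cites \cite{4106108} for this lemma). For $0\le k\le h-1$,
$0<iq^k\le (q^h-q)q^{h-1}=q^m-q^h<n-j$,
so $iq^k\not\equiv -j\pmod n$. For $k\ge h$, apply the same bound to $jq^{m-k}\equiv -i$, since $m-k\le h-1$. For even $m$, the analogous inequality $(q^{m/2}-2)q^{m/2}<n-j$ for $k\le m/2$ settles the case in one line.
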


\begin{lemma}[\cite{4106108}]
    \label{Hermitian dual-containing}
    Let $q_1$ be a prime power and $q=q_1^2$. Let $n=q^m-1$ and $\beta$ be an $n$-th primitive root of unity in $\mathbb{F}_{q^m}$. The following holds:
    \begin{itemize}
        \item If $m$ is odd and the designed distance $\delta$ is in the range $2\le \delta\le q_1^m-1$, then $\mathcal{C}_{(q,n,\delta,1,\beta)}$ is Hermitian dual-containing.
        \item If $m$ is even and the designed distance $\delta$ is in the range $2\le \delta\le q_1^{m+1}-q_1^2+1$, then $\mathcal{C}_{(q,n,\delta,1,\beta)}$ is Hermitian dual-containing.
    \end{itemize}
    
\end{lemma}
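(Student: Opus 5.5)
The statement to prove is Lemma~\ref{Hermitian dual-containing}, quoted from~\cite{4106108}. By Lemma~\ref{defining set}, it suffices to show $\textbf{T}_g\cap\textbf{T}_g^{-q_1}=\emptyset$, where $\textbf{T}_g=\bigcup_{i=1}^{\delta-1}C_i$.

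\textbf{Reduction.} Because the cyclotomic cosets partition $\mathbb{Z}_n$ and $-q_1 C_j$ is again a coset, the condition fails exactly when $-q_1 x\equiv y q^{k}\pmod n$ for some $1\le x,y\le \delta-1$ and some $k$. Write $n=q_1^{2m}-1$. Multiplying $q^k=q_1^{2k}$ by powers of $q_1$ lets us absorb the $q_1$, so the condition becomes
\[
y\equiv -q_1^{j}x\pmod n
\]
with $j$ odd, $1\le j\le 2m-1$ (since $q_1^{2m}\equiv1$). So the plan is to show that for $1\le x,y\le\delta-1$ and odd $j$, the integer $x q_1^{j}+y$ is never a multiple of $n$.

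\textbf{Digit argument.} Expand $x$ in base $q_1$ as a $2m$-digit string. Multiplying by $q_1^{j}$ modulo $n=q_1^{2m}-1$ cyclically shifts the string by $j$ positions. Since $0<x,y<n$, the relation $y\equiv -q_1^j x$ means $y=n-(q_1^jx\bmod n)$. The digits of $y$ are therefore the complements $q_1-1-d$ of the digits $d$ of the cyclic shift of $x$.

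\textbf{Case $m$ odd.} Here $\delta-1\le q_1^m-2$, so both $x$ and $y$ are below $q_1^m$: their top $m$ digits are zero.
\begin{itemize}
\item The complemented shift of $x$ has zeros exactly where the shift of $x$ has digits $q_1-1$.
\item So $y$ having $m$ leading zeros forces the shifted $x$ to have $m$ leading digits equal to $q_1-1$.
\item The nonzero part of $x$ lives in the low $m$ positions. After a shift by odd $j$, those $m$ top positions of the shift correspond to a block that overlaps the zero top half of $x$, unless $j=m$.
\item A zero digit there becomes $q_1-1\ne0$ after complementing, a contradiction.
\item If $j=m$, we need $x=q_1^m-1$ exactly, which exceeds $\delta-1$.
\end{itemize}
The parity of $j$ is not really needed in this case; the bound $\delta\le q_1^m-1$ gives it directly.

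\textbf{Case $m$ even.} This is the main obstacle, because $\delta-1$ may reach $q_1^{m+1}-q_1^2$, so $x$ and $y$ have up to $m+1$ nonzero low digits. The plan:
\begin{itemize}
\item Reduce to $x$ and $y$ being coset leaders, or just use the symmetric condition above.
\item Note that $y$ has at least $m-1$ leading zero digits. Hence the cyclic shift of $x$ must have a run of $m-1$ digits equal to $q_1-1$ at the positions $m+1,\dots,2m-1$.
\item The digits of $x$ at positions $m+1,\dots,2m-1$ are zero, so this run must come from the low $m+1$ digits of $x$.
\item For odd $j$, compare this window with where $x$'s support sits; $j$ is restricted to values near $m\pm1$, which are odd since $m$ is even.
\item Show that $x$ would then need $m-1$ consecutive top-region digits equal to $q_1-1$, e.g.\ $x\ge q_1^{m+1}-q_1^{2}+\cdots$.
\item Symmetrically the same constraint applies to $y$. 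Check the two boundary digits using $x,y\le q_1^{m+1}-q_1^2$, i.e.\ the digit pattern of $x$ is at most $(q_1-1)\cdots(q_1-1)\,0\,0$.
\end{itemize}
The delicate point is handling the two leftover digits under the complement relation. I expect the exact threshold $q_1^{m+1}-q_1^2+1$ to arise from a carry or complement at the position where $x$ must have the digit $q_1-1$ but the bound allows only $0$ in its last two positions. That bookkeeping is where I would spend the effort, splitting into the subcases $j=m-1$ and $j=m+1$.
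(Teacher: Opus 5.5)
The paper does not prove this lemma; it quotes it from \cite{4106108}, so your argument has to stand on its own. Your reduction to $y\equiv -q_1^{j}x\pmod n$ with $j$ odd and $1\le x,y\le\delta-1$ is correct. So is your description of $y$ as the digitwise complement of the cyclic shift of $x$, and so is the entire odd-$m$ case.

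\textbf{The gap is the even-$m$ case.} You flag it as the main obstacle but do not finish it. You correctly show the following:
\begin{itemize}
\item the digits of $x$ in positions $m+1-j,\dots,2m-1-j$ must all equal $q_1-1$;
\item this window must lie inside $\{0,\dots,m\}$;
\item hence $j\in\{m-1,m+1\}$, with $j=m$ excluded by parity.
\end{itemize}
You then stop short of a contradiction, defer it to ``bookkeeping'', and speculate about a carry. As written, nothing rules out $j=m\pm1$, so the even case is not proved.

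Your bullet ``Symmetrically the same constraint applies to $y$'' is also inaccurate for a fixed $j$. For $j=m-1$, the constraint on $y$ is a run of $q_1-1$ in positions $0,\dots,m-2$, not a top run. The symmetry only serves to exchange the two subcases.

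\textbf{How to close it.} The missing step is short and involves no carries. For $j=m-1$ the window is $\{2,\dots,m\}$, so
\[
x\ \ge\ (q_1-1)(q_1^2+\cdots+q_1^m)\ =\ q_1^{m+1}-q_1^2\ \ge\ \delta-1\ \ge\ x .
\]
This forces $x=q_1^{m+1}-q_1^2$, i.e.\ $x_0=x_1=0$. The complement relation now determines every digit of $y$:
\begin{itemize}
\item positions $0,\dots,m-2$ are $q_1-1-x_{m+1},\dots,q_1-1-x_{2m-1}$, all equal to $q_1-1$;
\item positions $m-1$ and $m$ are $q_1-1-x_0=q_1-1$ and $q_1-1-x_1=q_1-1$.
\end{itemize}
Hence $y=q_1^{m+1}-1>\delta-1$, a contradiction. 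Directly: $-q_1^{m-1}(q_1^{m+1}-q_1^2)\equiv q_1^{m+1}-1\pmod n$.

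For $j=m+1$, rewrite the relation as $x\equiv -q_1^{2m-j}y=-q_1^{m-1}y$ and apply the same argument with $x$ and $y$ exchanged.

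This also explains the threshold. If $\delta-1$ were $q_1^{m+1}-q_1^2+1$, you could take $x=q_1^{m+1}-q_1^2+1$ (so $x_0=1$). That gives $y=q_1^{m+1}-q_1^{m-1}-1\le\delta-1$ for $m\ge 4$. So the bound $\delta\le q_1^{m+1}-q_1^2+1$ is exactly what forces $x_0=x_1=0$; it is a complement effect, not a carry.
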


Building upon the results of Chen and Ding~\cite{10856234}, we can construct self-dual cyclic codes with the following parameters.

\begin{lemma}[\cite{10856234}]
\label{odd Euclidean self-dual}
    Let $q$ be a power of $2$ and $m$ be odd. Let $n=q^m-1$ and $\beta$ be an $n$-th primitive root of unity in $\mathbb{F}_{q^m}$. Put $\delta=q^{\frac{m+1}{2}}-q+1\ge2$. Let $g^\perp(x)=x^{\deg h(x)}h(x^{-1})/h(0)$ be the generator polynomial of $\mathcal{C}_{(q,n,\delta,1,\beta)}^\perp$, where $g(x)$ is the generator polynomial of $\mathcal{C}_{(q,n,\delta,1,\beta)}$ and $h(x)=(x^n-1)/g(x)$. Let $\mathcal{C}'$ denote the cyclic code of length $2n$ over $\mathbb{F}_q$ with generator polynomial $g(x)g^\perp(x)$. Then $\mathcal{C}'$ is a Euclidean self-dual cyclic code with parameters $[2n,n,d]_q$, where \[d=\min\{2d(\mathcal{C}_{(q,n,\delta,1,\beta)}),d(\mathcal{C}_{(q,n,\delta,1,\beta)}^\perp)\}\ge q^{\frac{m+1}{2}}-q+1.\]
\end{lemma}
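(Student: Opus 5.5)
The statement follows by combining Lemma~\ref{Euclidean dual-containing}, Lemma~\ref{self-dual}, Lemma~\ref{van lint} and Lemma~\ref{defining set}. The plan has three steps: show the BCH code is dual-containing, realize its Plotkin sum with its dual as the cyclic code $\mathcal{C}'$, and then bound the minimum distance.

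\emph{Dual-containment.} Put $\mathcal{C}_1=\mathcal{C}_{(q,n,\delta,1,\beta)}$ with $n=q^m-1$, $m$ odd, and $\delta=q^{\frac{m+1}{2}}-q+1$. This $\delta$ is exactly the endpoint allowed by the first item of Lemma~\ref{Euclidean dual-containing}, so $\mathcal{C}_1$ is Euclidean dual-containing. Consequently $\mathcal{C}_1^\perp\subseteq\mathcal{C}_1$, which means $g(x)$ divides $g^\perp(x)$.

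\emph{Identifying $\mathcal{C}'$.} Apply Lemma~\ref{van lint} with $g_1=g$ and $g_2=g^\perp$; this is legitimate because $n$ is odd and $q$ is even. Here $f=\gcd(g,g^\perp)=g$, so $g_1g_2/f=g^\perp$. Thus the second component code is $\mathcal{C}_1^\perp$, and $\mathrm{Plotkin}(\mathcal{C}_1,\mathcal{C}_1^\perp)$ is permutation-equivalent to the cyclic code of length $2n$ generated by $g^2g^\perp/g=g\,g^\perp$, which is $\mathcal{C}'$. Lemma~\ref{self-dual} gives that this Plotkin sum is Euclidean self-dual with minimum distance $\min\{2d(\mathcal{C}_1),d(\mathcal{C}_1^\perp)\}$. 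Permutation equivalence preserves self-duality, dimension and weights, so $\mathcal{C}'$ is self-dual of length $2n$ and dimension $n$.

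\emph{Distance bound.} By the BCH bound, $d(\mathcal{C}_1)\ge\delta$, so $2d(\mathcal{C}_1)\ge 2\delta$. For the dual, $\mathcal{C}_1^\perp\subseteq\mathcal{C}_1$ gives $d(\mathcal{C}_1^\perp)\ge d(\mathcal{C}_1)\ge\delta$. Alternatively, Lemma~\ref{defining set} shows the defining set of $\mathcal{C}_1^\perp$ is $(\mathbf{T}_g^{-1})^c\supseteq\mathbf{T}_g\supseteq\{1,\dots,\delta-1\}$. Either way the minimum is at least $\delta=q^{\frac{m+1}{2}}-q+1$, as claimed.

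\emph{Main obstacle.} The only delicate point is the hypothesis check in Lemma~\ref{van lint}. One must confirm that $g^\perp$ divides $x^n+1$, which holds because it is the reciprocal of a divisor of $x^n-1$ and the characteristic is $2$. One must also confirm that $\gcd(g,g^\perp)=g$, which follows directly from dual-containment. Everything else is a direct citation of the earlier lemmas.
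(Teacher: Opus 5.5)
Your proof is correct and is essentially the argument the paper relies on: the paper quotes this lemma from Chen and Ding without proof, but it proves the even-$m$ analogue, Theorem~\ref{even Euclidean self-dual}, by exactly the same combination of Lemma~\ref{Euclidean dual-containing} (dual-containment at the endpoint $\delta$), Lemma~\ref{van lint} (identifying the Plotkin sum with $\mathcal{C}'$), and Lemma~\ref{self-dual} (self-duality and the formula $\min\{2d(\mathcal{C}_{(q,n,\delta,1,\beta)}),d(\mathcal{C}^\perp_{(q,n,\delta,1,\beta)})\}$). Your extra checks make explicit what the paper's one-line argument leaves implicit: that $\gcd(g,g^\perp)=g$ follows from $\mathcal{C}^\perp_{(q,n,\delta,1,\beta)}\subseteq\mathcal{C}_{(q,n,\delta,1,\beta)}$, and that the minimum is at least $\delta$ because the dual is a subcode of a BCH code with designed distance $\delta$.
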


\begin{lemma}[\cite{10856234}]
\label{odd Hermitian self-dual}
    Let $q_1$ be a power of $2$, $q=q_1^2$ and $m$ be odd. Let $n=q^m-1$ and $\beta$ be an $n$-th primitive root of unity in $\mathbb{F}_{q^m}$. Put $\delta=q_1^m-1\ge2$. Let $g^{\perp_H}(x)$ be the generator polynomial of $\mathcal{C}_{(q,n,\delta,1,\beta)}^{\perp_H}$. Let $\mathcal{C}'$ denote the cyclic code of length $2n$ over $\mathbb{F}_q$ with generator polynomial $g(x)g^{\perp_H}(x)$, where $g(x)$ is the generator polynomial of $\mathcal{C}_{(q,n,\delta,1,\beta)}$. Then $\mathcal{C}'$ is a Hermitian self-dual cyclic code with parameters $[2n,n,d]_q$, where \[d=\min\{2d(\mathcal{C}_{(q,n,\delta,1,\beta)}),d(\mathcal{C}_{(q,n,\delta,1,\beta)}^\perp)\}\ge q_1^m-1.\]
\end{lemma}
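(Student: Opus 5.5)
The plan is to assemble Lemma~\ref{odd Hermitian self-dual} from the preliminaries, following the same route as in the Euclidean case of Lemma~\ref{odd Euclidean self-dual}.

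\textbf{Dual-containment.} Since $m$ is odd and $2\le\delta=q_1^m-1$, the first item of Lemma~\ref{Hermitian dual-containing} shows that $\mathcal{C}_1:=\mathcal{C}_{(q,n,\delta,1,\beta)}$ is Hermitian dual-containing. Equivalently, by Lemma~\ref{defining set}, $\mathbf{T}_g\cap\mathbf{T}_g^{-q_1}=\emptyset$.

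\textbf{Self-duality of the Plotkin sum.} The second item of Lemma~\ref{self-dual} then shows that $\mathrm{Plotkin}(\mathcal{C}_1,\mathcal{C}_1^{\perp_H})$ is Hermitian self-dual. It has length $2n$ and dimension $n$, and its minimum distance is $\min\{2d(\mathcal{C}_1),d(\mathcal{C}_1^{\perp_H})\}$.

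\textbf{Passing to the cyclic code $\mathcal{C}'$.} We apply Lemma~\ref{van lint} with $g_1=g$ and $g_2=g^{\perp_H}$.
\begin{itemize}
\item Because $\mathcal{C}_1^{\perp_H}\subseteq\mathcal{C}_1$, we have $g\mid g^{\perp_H}$, so $f=\gcd(g,g^{\perp_H})=g$.
\item The code generated by $g_1g_2/f=g^{\perp_H}$ is exactly $\mathcal{C}_1^{\perp_H}$.
\item The cyclic code of length $2n$ produced by the lemma has generator $g^2g^{\perp_H}/g=g\,g^{\perp_H}$, which is precisely $\mathcal{C}'$.
\end{itemize}
Permutation equivalence preserves both the Hermitian inner product and the weights. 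Hence $\mathcal{C}'$ is a Hermitian self-dual $[2n,n,d]_q$ cyclic code with $d=\min\{2d(\mathcal{C}_1),d(\mathcal{C}_1^{\perp_H})\}$. Since $\mathcal{C}_1^{\perp_H}=(\mathcal{C}_1^{\perp})^{q_1}$ by \eqref{(1)}, and coordinatewise Frobenius preserves weights, this equals $\min\{2d(\mathcal{C}_1),d(\mathcal{C}_1^{\perp})\}$, as written in the statement.

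\textbf{Lower bound.} By the BCH bound, $d(\mathcal{C}_1)\ge\delta$, so $2d(\mathcal{C}_1)\ge 2\delta\ge q_1^m-1$.

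For the dual, $\mathcal{C}_1^{\perp_H}\subseteq\mathcal{C}_1$ gives $d(\mathcal{C}_1^{\perp_H})\ge d(\mathcal{C}_1)\ge\delta=q_1^m-1$. Equivalently, the defining set of $\mathcal{C}_1^{\perp_H}$ contains $\mathbf{T}_g\supseteq\{1,\dots,\delta-1\}$, which is a run of $\delta-1$ consecutive integers. Combining the two estimates gives $d\ge q_1^m-1$.

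\textbf{Main point needing care.} The only step requiring attention is the identification $f=g$ and the matching of the Hermitian dual's generator polynomial with the code in Lemma~\ref{van lint}. This follows directly from dual-containment, so I expect no real obstacle.
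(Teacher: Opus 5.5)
Your proposal is correct and takes the same route as the paper. The paper quotes this lemma from Chen and Ding without proof, but it proves the even-$m$ analogue, Theorem~\ref{even Hermitian self-dual}, by exactly your chain: Hermitian dual-containment from Lemma~\ref{Hermitian dual-containing}, then self-duality and the distance formula for the Plotkin sum from Lemma~\ref{self-dual}, then permutation equivalence to $\mathcal{C}'$ via Lemma~\ref{van lint}, then $d(\mathcal{C}^{\perp_H})=d(\mathcal{C}^{\perp})$. Your explicit check that $f=\gcd(g,g^{\perp_H})=g$ and your BCH-bound justification of $d\ge q_1^m-1$ simply fill in details the paper leaves implicit.
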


The following theorems are established by following the approach of Chen and Ding~\cite{10856234}.

\begin{theorem}
\label{even Euclidean self-dual}
    Let $q$ be a power of $2$ and $m$ be even. Let $n=q^m-1$ and $\beta$ be an $n$-th primitive root of unity in $\mathbb{F}_{q^m}$. Put $\delta=q^{\frac{m}{2}}-1\ge2$. Let $g^\perp(x)=x^{\deg h(x)}h(x^{-1})/h(0)$ be the generator polynomial of $\mathcal{C}_{(q,n,\delta,1,\beta)}^\perp$, where $g(x)$ is the generator polynomial of $\mathcal{C}_{(q,n,\delta,1,\beta)}$ and $h(x)=(x^n-1)/g(x)$. Let $\mathcal{C}'$ denote the cyclic code of length $2n$ over $\mathbb{F}_q$ with generator polynomial $g(x)g^\perp(x)$. Then $\mathcal{C}'$ is a Euclidean self-dual cyclic code with parameters $[2n,n,d]_q$, where \[d=\min\{2d(\mathcal{C}_{(q,n,\delta,1,\beta)}),d(\mathcal{C}_{(q,n,\delta,1,\beta)}^\perp)\}\ge q^{\frac{m}{2}}-1.\]
\end{theorem}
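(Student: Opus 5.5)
The plan is to follow the same route as Chen and Ding for the odd case, now using the even branch of Lemma~\ref{Euclidean dual-containing}. Set $\mathcal{C}_1=\mathcal{C}_{(q,n,\delta,1,\beta)}$ with $\delta=q^{\frac m2}-1$. Since $m$ is even and $2\le\delta\le q^{\frac m2}-1$, the second item of Lemma~\ref{Euclidean dual-containing} shows that $\mathcal{C}_1$ is Euclidean dual-containing, i.e.\ $\mathcal{C}_1^\perp\subseteq\mathcal{C}_1$. Because $q$ is a power of $2$, the length $n=q^m-1$ is odd. The first item of Lemma~\ref{self-dual} then says that $\text{Plotkin}(\mathcal{C}_1,\mathcal{C}_1^\perp)$ is a Euclidean self-dual code of length $2n$ with minimum distance $\min\{2d(\mathcal{C}_1),d(\mathcal{C}_1^\perp)\}$. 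Its dimension is $\dim\mathcal{C}_1+\dim\mathcal{C}_1^\perp=n$.

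Next I identify this Plotkin sum with the cyclic code $\mathcal{C}'$ via Lemma~\ref{van lint}, taking $g_1=g$ and $g_2=g^\perp$. Write $f=\gcd(g,g^\perp)$. The code $\mathcal{C}_2$ in Lemma~\ref{van lint} has generator $gg^\perp/f=\mathrm{lcm}(g,g^\perp)$, and I must check that this equals $g^\perp$. The inclusion $\mathcal{C}_1^\perp\subseteq\mathcal{C}_1$ gives $g\mid g^\perp$. Hence $f=g$ and $\mathrm{lcm}(g,g^\perp)=g^\perp$, so $\mathcal{C}_2=\mathcal{C}_1^\perp$ as required. Lemma~\ref{van lint} then makes the Plotkin sum permutation-equivalent to the cyclic code of length $2n$ generated by $g^2g^\perp/f=g\,g^\perp$, which is exactly $\mathcal{C}'$. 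Coordinate permutations preserve the Euclidean inner product, so $\mathcal{C}'$ is Euclidean self-dual with parameters $[2n,n,d]_q$, where $d=\min\{2d(\mathcal{C}_1),d(\mathcal{C}_1^\perp)\}$.

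For the lower bound, the BCH bound gives $d(\mathcal{C}_1)\ge\delta$, so $2d(\mathcal{C}_1)\ge 2\delta\ge\delta$. The remaining term $d(\mathcal{C}_1^\perp)$ is the main obstacle. Since $\mathcal{C}_1^\perp\subseteq\mathcal{C}_1$, every codeword of $\mathcal{C}_1^\perp$ is a codeword of $\mathcal{C}_1$. Thus $d(\mathcal{C}_1^\perp)\ge d(\mathcal{C}_1)\ge\delta=q^{\frac m2}-1$. Alternatively, one can read this off the defining set: $\mathbf{T}_g\cap\mathbf{T}_g^{-1}=\emptyset$ gives $\mathbf{T}_g\subseteq(\mathbf{T}_g^{-1})^c$, and the latter is the defining set of $\mathcal{C}_1^\perp$, so it contains the $\delta-1$ consecutive elements $1,\dots,\delta-1$. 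Combining the two terms gives $d\ge q^{\frac m2}-1$.

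The only points that need care are confirming $\delta\ge 2$, which requires $q^{m/2}\ge 3$, as the statement assumes, and the divisibility $g\mid g^\perp$.
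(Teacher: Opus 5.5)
Your proposal is correct and follows the paper's proof. It gets dual-containment from the even branch of Lemma~\ref{Euclidean dual-containing}, self-duality and the distance formula from Lemma~\ref{self-dual}, and the identification with $\mathcal{C}'$ from Lemma~\ref{van lint}, while also making explicit two points the paper leaves implicit: $g\mid g^\perp$ forces $f=g$, so the van Lint generators become $g^\perp$ and $gg^\perp$; and $d(\mathcal{C}_1^\perp)\ge d(\mathcal{C}_1)\ge\delta$ follows from the containment and the BCH bound.
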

\begin{proof}
    Let $\mathcal{C}=\text{Plotkin}(\mathcal{C}_{(q,n,\delta,1,\beta)},\mathcal{C}^\perp_{(q,n,\delta,1,\beta)})$.
    By Lemma \ref{van lint}, $\mathcal{C}$ is permutation-equivalent to $\mathcal{C}'$. The desired conclusions then follow from Lemmas \ref{self-dual} and \ref{Euclidean dual-containing}.
\end{proof}

\begin{theorem}
    \label{even Hermitian self-dual}
    Let $q_1$ be a power of $2$, $q=q_1^2$ and $m$ be even. Let $n=q^m-1$ and $\beta$ be an $n$-th primitive root of unity in $\mathbb{F}_{q^m}$. Put $\delta=q_1^{m+1}-q_1^2+1\ge 2$. Let $g^{\perp_H}(x)$ be the generator polynomial of $\mathcal{C}_{(q,n,\delta,1,\beta)}^{\perp_H}$. Let $\mathcal{C}'$ denote the cyclic code of length $2n$ over $\mathbb{F}_q$ with generator polynomial $g(x)g^{\perp_H}(x)$, where $g(x)$ is the generator polynomial of $\mathcal{C}_{(q,n,\delta,1,\beta)}$. Then $\mathcal{C}'$ is a Hermitian self-dual cyclic code with parameters $[2n,n,d]_q$, where \[d=\min\{2d(\mathcal{C}_{(q,n,\delta,1,\beta)}),d(\mathcal{C}_{(q,n,\delta,1,\beta)}^\perp)\}\ge q_1^{m+1}-q_1^2+1.\]
\end{theorem}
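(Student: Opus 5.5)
The plan mirrors the proof of Theorem~\ref{even Euclidean self-dual}, with the Hermitian versions of the lemmas substituted.

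Write $\mathcal{C}_0=\mathcal{C}_{(q,n,\delta,1,\beta)}$ with $\delta=q_1^{m+1}-q_1^2+1$, and note that $\delta\ge2$ always holds for $q_1\ge2$ and $m\ge2$.

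\textbf{Step 1: dual-containment.} Since $m$ is even, the second item of Lemma~\ref{Hermitian dual-containing} applies with $\delta$ at the top of its admissible range. Hence $\mathcal{C}_0$ is Hermitian dual-containing, that is, $\mathcal{C}_0^{\perp_H}\subseteq\mathcal{C}_0$.

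\textbf{Step 2: self-duality of the Plotkin sum.} Set $\mathcal{C}=\text{Plotkin}(\mathcal{C}_0,\mathcal{C}_0^{\perp_H})$. Since $q$ is a power of $2$, $n=q^m-1$ is odd. The second item of Lemma~\ref{self-dual} then gives that $\mathcal{C}$ is Hermitian self-dual with minimum distance $\min\{2d(\mathcal{C}_0),d(\mathcal{C}_0^{\perp_H})\}$. Its length is $2n$ and its dimension is $\dim\mathcal{C}_0+(n-\dim\mathcal{C}_0)=n$.

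\textbf{Step 3: transfer to $\mathcal{C}'$ via Lemma~\ref{van lint}.} Take $g_1=g$ and $g_2=g^{\perp_H}$. By equation~(\ref{(1)}), $\mathcal{C}_0^{\perp_H}=(\mathcal{C}_0^\perp)^{q_1}$ is again cyclic, and its generator polynomial $g^{\perp_H}$ divides $x^n+1$. Because $\mathcal{C}_0^{\perp_H}\subseteq\mathcal{C}_0$, we have $g\mid g^{\perp_H}$, so $f=\gcd(g_1,g_2)=g$. The second code in Lemma~\ref{van lint} therefore has generator $g_1g_2/f=g^{\perp_H}$, which is exactly $\mathcal{C}_0^{\perp_H}$. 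Lemma~\ref{van lint} then says $\mathcal{C}$ is permutation-equivalent to the cyclic code of length $2n$ generated by $g^2g^{\perp_H}/g=g\,g^{\perp_H}$, which is $\mathcal{C}'$. Coordinate permutations preserve the Hermitian inner product, so $\mathcal{C}'$ is Hermitian self-dual with the same parameters.

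This step is the main point needing care: one must check the divisibility $g\mid g^{\perp_H}$ from dual-containment, so that the generator polynomial in Lemma~\ref{van lint} matches the one in the statement.

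\textbf{Step 4: the lower bound.} By the BCH bound, $d(\mathcal{C}_0)\ge\delta$, so $2d(\mathcal{C}_0)\ge2\delta$. For the other term, $\mathcal{C}_0^{\perp_H}\subseteq\mathcal{C}_0$ gives $d(\mathcal{C}_0^{\perp_H})\ge d(\mathcal{C}_0)\ge\delta$. Hence the minimum is at least $q_1^{m+1}-q_1^2+1$.

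The statement writes $d(\mathcal{C}_0^\perp)$ rather than $d(\mathcal{C}_0^{\perp_H})$; these are equal because $\mathcal{C}_0^{\perp_H}$ is the image of $\mathcal{C}_0^\perp$ under the coordinatewise Frobenius map $c\mapsto c^{q_1}$, which preserves Hamming weight.
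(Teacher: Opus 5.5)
Your proposal is correct and takes the same route as the paper. You form $\text{Plotkin}(\mathcal{C}_0,\mathcal{C}_0^{\perp_H})$, then use Lemma~\ref{Hermitian dual-containing} for dual-containment, Lemma~\ref{self-dual} for self-duality and the distance formula, Lemma~\ref{van lint} for the permutation equivalence with $\mathcal{C}'$, and the Frobenius observation $d(\mathcal{C}_0^{\perp_H})=d(\mathcal{C}_0^\perp)$. You also spell out two points the paper leaves implicit: that $g\mid g^{\perp_H}$, so the generator from Lemma~\ref{van lint} is exactly $g\,g^{\perp_H}$, and that the lower bound follows from the BCH bound together with $\mathcal{C}_0^{\perp_H}\subseteq\mathcal{C}_0$.
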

\begin{proof}
    Notice that $d(\mathcal{C}^{\perp_H}_{(q,n,\delta,1,\beta)})=d(\mathcal{C}^\perp_{(q,n,\delta,1,\beta)})$. Let $\mathcal{C}=\text{Plotkin}(\mathcal{C}_{(q,n,\delta,1,\beta)},\mathcal{C}^{\perp_H}_{(q,n,\delta,1,\beta)})$.
    By Lemma \ref{van lint}, $\mathcal{C}$ is permutation-equivalent to $\mathcal{C}'$. The desired conclusions then follow from Lemmas \ref{self-dual} and \ref{Hermitian dual-containing}.
\end{proof}

To determine the lengths of consecutive zeros in the defining sets of the Euclidean (or Hermitian) dual codes, we shall need the following lemmas.

\begin{lemma}[\cite{gong2022dual}]
    \label{Euclidean dual BCH}
    For $3 \leq \delta < (q-1)q^{m-1} - q^{\left\lfloor \frac{m-1}{2} \right\rfloor}$, let $I(\delta) \geq 2$ be the integer such that $\{0,1,\dots,I(\delta)-1\} \subseteq \textbf{T}_{g^\perp}$ and $I(\delta) \notin \textbf{T}_{g^\perp}$. Then
    \[
    I(\delta)=
    \begin{cases}
        q^{m-t}-a, & \text{if } aq^t \leq \delta \leq (a+1)q^t-1(1 \leq t \leq m-2,1 \leq a < q-1);\\
        q^{m-t}-q+1, & \text{if } (q-1)q^t \leq \delta \leq q^{t+1}-q+1(1 \leq t \leq m-2);\\
        q-a, & \text{if } aq^{m-1} \leq \delta \leq (a+1)q^{m-1}-1(1 \leq a < q-2);\\
        2, & \text{if } (q-2)q^{m-1} \leq \delta < (q-1)q^{m-1}-q^{\left\lfloor \frac{m-1}{2} \right\rfloor};\\
        (b+1)q^{m-t}-1 & \text{if } \delta=q^t-b(1 \leq t \leq m-1,1 \leq b \leq q-2, q^t-b \geq 3).
    \end{cases}
    \]
\end{lemma}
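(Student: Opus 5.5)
The plan is to translate membership in $\textbf{T}_{g^\perp}$ into a statement about $q$-ary digit strings and rotations. Since $\textbf{T}_g=\bigcup_{j=1}^{\delta-1}C_j$ and the defining set of $\mathcal{C}^\perp$ is $(\textbf{T}_g^{-1})^c$, we have $i\in\textbf{T}_{g^\perp}$ if and only if $n-i\notin\textbf{T}_g$. This in turn holds if and only if the coset leader of $C_{n-i}$ is at least $\delta$; the case $i=0$ is automatic, since $C_0=\{0\}$. Hence
\[
I(\delta)=\min\{\,i\ge 1:\ \min C_{n-i}\le \delta-1\,\}.
\]
With $n=q^m-1$, the $m$-digit base-$q$ expansion of $n-i$ is the digitwise complement $(q-1-i_{m-1},\dots,q-1-i_0)$ of the expansion of $i$. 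Multiplying by $q$ modulo $n$ rotates the digits cyclically. So $\min C_{n-i}$ is the smallest cyclic rotation of the complemented digit string of $i$. The lemma then reduces to a combinatorial statement: find the least $i\ge1$ whose complemented string admits a rotation with value $\le\delta-1$.

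For each of the five ranges of $\delta$ I would prove two things.

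\emph{Upper bound.} At the claimed value $i_0=I(\delta)$, write out the complement of $i_0$ explicitly and exhibit one rotation of value $\le\delta-1$. For example, in the first case $i_0=q^{m-t}-a$. Then $n-i_0=(q^t-1)q^{m-t}+(a-1)$, whose low $m-t$ digits are $(0,\dots,0,a-1)$ and whose high $t$ digits are all $q-1$. Rotating so that the top block of $(q-1)$'s moves to the bottom gives $(a-1)q^{t}+q^t-1=aq^t-1\le\delta-1$. The other cases work the same way: in the second case $a=q-1$ with a boundary adjustment, in the third and fourth cases $t=m-1$, and in the fifth case $\delta=q^t-b$ with $i_0=(b+1)q^{m-t}-1$, whose complement rotates to $q^t-b-1$.

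\emph{Lower bound.} For every $1\le i<i_0$, show that every rotation of the complement of $i$ is $\ge\delta$. Such an $i$ is small, so its top block of digits is $0$, and the complement begins with a run of at least $t$ digits equal to $q-1$ (or, in the fifth case, starts with a digit $\ge q-1-b$). For a rotation to be small it must start with a long run of small digits. I would bound the length of the longest run of zeros in the complement, and the digit following it, using the constraint $i<i_0$. Any rotation then begins with at most $t-1$ zeros, or exactly $t$ zeros followed by a digit $\ge a$. This forces the rotation to be $\ge aq^t\ge\delta$, or the analogous threshold in the other cases.

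I expect the lower-bound direction to be the main obstacle, particularly at the boundaries between cases. Two spots need care. One is the second case, where $a=q-1$ causes a carry and shifts the answer to $q^{m-t}-q+1$. The other is the fourth case, where $I(\delta)=2$ holds only up to $(q-1)q^{m-1}-q^{\lfloor (m-1)/2\rfloor}$. Beyond that point $\delta$ exceeds the largest coset leader, which by the known results on the largest coset leader is $(q-1)q^{m-1}-1-q^{\lfloor(m-1)/2\rfloor}$, and the structure changes. For these I would argue directly with the leading digit of the rotation, treating the run of $(q-1)$'s from the top of $n-i$ separately from the low-order digits. I would first settle the generic interior case of the first line, then obtain the remaining lines by the same run-length argument with the modified digit thresholds.
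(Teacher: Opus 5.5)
The paper gives no proof of this lemma; it is quoted from \cite{gong2022dual}. Your reduction is correct: $i\in\mathbf{T}_{g^\perp}$ iff the leader of $C_{n-i}$ is at least $\delta$, where $n-i$ is the digitwise complement of $i$ and multiplication by $q$ acts as a rotation. Your upper-bound witnesses for cases 1, 2, 3 and 5 are also right.

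The gap is in the lower bound, which is the real content of the lemma. You conclude that every rotation is $\ge aq^t\ge\delta$, but in case 1 this inequality runs the wrong way. There $\delta$ ranges over $[aq^t,(a+1)q^t-1]$, so you must show that, for every $1\le i<q^{m-t}-a$, every rotation of the complement of $i$ is at least $(a+1)q^t-1$. This bound is attained: for $i=q^{m-t}-a-1$ the complement $(q-1)^t\,0^{m-t-1}\,a$ rotates to exactly $(a+1)q^t-1$. Your zero count is also off: a value below $q^{t+1}$ has its top $m-t-1$ digits zero, not $t$.

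The missing idea is that the block of $(q-1)$'s supplies the tail $q^t-1$. Write the complement as $(q-1)^t\,w$, with $w$ of length $m-t$ and value $q^{m-t}-1-i\ge a$ ($\ge q-1$ in case 2).
\begin{itemize}
\item A rotation below $q^{t+1}$ starts with $m-t-1$ zeros. Since the block has no zeros, these must lie inside $w$. As $w\ne 0^{m-t}$, this forces $w=0^{m-t-1}c$ or $w=c\,0^{m-t-1}$ with $c\ne 0$.
\item The rotations starting at that zero run are $(c+1)q^t-1$ with $c\ge a$, respectively $q^{t+1}-q+c$. Both are $\ge (a+1)q^t-1$, which settles case 1.
\item In case 2 the first value is $q^{t+1}-1$, and the second, with $c=1$, is exactly $q^{t+1}-q+1$. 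That is why the range of case 2 stops there. No carry is involved: $q^{m-t}-a$ at $a=q-1$ already equals $q^{m-t}-q+1$.
\item Case 5 is analogous. Let $u\le b$ be the top part of $i$. The top $t$ digits of the complement are $(q-1)^{t-1}$ followed by $q-1-u\ge 1$, so a rotation below $q^t$ forces the low block to be $0^{m-t}$. That means $i=(u+1)q^{m-t}-1$ with $u\le b-1$, and this rotation equals $q^t-1-u\ge q^t-b$.
\end{itemize}

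Separately, case 4 as stated is false for $q=2$. Then $(q-2)q^{m-1}=0$, so it asserts $I(\delta)=2$ on the whole range, contradicting case 2. For example, with $q=2$, $m=5$, $\delta=3$ one has $\mathbf{T}_g=C_1$ and $I(3)=15$. Your witness $n-2=(q-1,\dots,q-1,q-3)$ silently assumes $q\ge 3$. You must restrict case 4 to $q\ge3$; for $q=2$ only case 2 is nonvacuous, and it already covers the range.

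For $q\ge 3$, case 4 is immediate: the leaders of $C_{n-1}$ and $C_{n-2}$ are $(q-1)q^{m-1}-1$ and $(q-2)q^{m-1}-1$. Its right endpoint is no obstacle, and your explanation of it is mistaken. The value $(q-1)q^{m-1}-1-q^{\lfloor (m-1)/2\rfloor}$ is the \emph{second} largest coset leader; the largest is $(q-1)q^{m-1}-1$, the leader of $C_{n-1}$. In fact, for $q\ge 3$, $I(\delta)=2$ holds all the way up to $\delta=(q-1)q^{m-1}-1$.
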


\begin{lemma}[\cite{10068556}]
    \label{Hermitian dual BCH}
    For $2 \leq \delta \leq n$ and $m \geq 3$, let $I(\delta) \geq 1$ be the integer such that $\{0,1,2,\dots,I(\delta)-1\} \subseteq \textbf{T}_{g^{\perp_H}}$ and $I(\delta) \notin \textbf{T}_{g^{\perp_H}}$. Then
    \[
    I(\delta)=
    \begin{cases}
        (b+1)q_1^{2m-1}-1, & \text{if } \delta=q_1-b(1 \leq b \leq q_1-2);\\
        q_1^{2m-1}-q_1^2+q_1, & \text{if } \delta=q_1^3-q_1^2,\, q_1^3-q_1^2+1;\\
        (b+1)q_1^{2(m-t)+1}-1, & \text{if } \delta=q_1^{2t-1}-b(2 \leq t \leq m-1,\, 1 \leq b \leq q_1^2-2);\\
        q_1^{2(m-t)+1}-s, & \text{if } sq_1^{2t-1} \leq \delta \leq (s+1)q_1^{2t-1}-1(1 \leq t \leq m,\, 1 \leq s \leq q_1-1);\\
        q_1^{2(m-t)+1}-aq_1-s, & \text{if } (aq_1+s)q_1^{2t-1} \leq \delta \leq (aq_1+s+1)q_1^{2t-1}-1\\
        &(1 \leq t \leq m-1,\, 1 \leq a \leq q_1-2,\, 0 \leq s \leq q_1-1);\\
        q_1^{2(m-t)+1}-q_1^2+q_1-s, & \text{if } (q_1^2-q_1+s)q_1^{2t-1} \leq \delta \leq (q_1^2-q_1+s+1)q_1^{2t-1}-1\\
        &(2 \leq t \leq m-1,\, 0 \leq s \leq q_1-2);\\
        q_1^{2(m-t)+1}-q_1^2+1, & \text{if } (q_1^2-1)q_1^{2t-1} \leq \delta \leq q_1^{2t+1}-q_1^2+1(2 \leq t \leq m-1);\\
        q_1^3-q_1^2+1, & \text{if } \delta=q_1^{2m-1}-b(q_1^2-q_1 \leq b \leq q_1^2-2);\\
        (b+1)q_1-1, & \text{if } \delta=q_1^{2m-1}-b(1 \leq b \leq q_1^2-q_1-1).
    \end{cases}
    \]
    Especially, for $m=2$, we have
    \[
    I(\delta)=
    \begin{cases}
        (b+1)q_1^3-1, & \text{if } \delta=q_1-b(1 \le b\le q_1-2);\\
        (b+1)q_1-1, & \text{if } \delta=q_1^3-b(1\le b\le q_1^2-q_1);\\
        q_1^3-q_1^2+q_1, & \text{if } \delta=q_1^3-b(q_1^2-q_1+1\le b\le q_1^2);\\
        q_1^{5-2t}-s, & \text{if } sq_1^{2t-1}\le \delta\le (s+1)q_1^{2t-1}-1(1\le t\le 2,1\le s\le q-1).
    \end{cases}
    \]
\end{lemma}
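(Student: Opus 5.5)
The plan is to turn the statement into a digit-manipulation problem in base $q_1$ and then handle each range of $\delta$ by a maximization argument. Write $n=q^m-1=q_1^{2m}-1$ and let $\textbf{T}=\textbf{T}_g=\bigcup_{j=1}^{\delta-1}C_j$ be the defining set of $\mathcal{C}_{(q,n,\delta,1,\beta)}$. By Lemma~\ref{defining set} and the remark after it, $\textbf{T}_{g^{\perp_H}}=(\textbf{T}^{-q_1})^c$. Hence $i\notin\textbf{T}_{g^{\perp_H}}$ if and only if $-q_1 i \bmod n$ lies in $C_j$ for some $1\le j\le\delta-1$. That is, $-q_1 i\equiv j q^k=j q_1^{2k}\pmod n$ for some $0\le k\le m-1$.

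Since $q_1^{2m}\equiv1$, we have $q_1^{-1}\equiv q_1^{2m-1}$. The condition therefore becomes $i\equiv -j q_1^{e}\pmod n$ for some odd $e\in\{1,3,\dots,2m-1\}$. Because $0\in\textbf{T}_{g^{\perp_H}}$ (as $0\notin\textbf{T}$), we get
\[
I(\delta)=\min\{\, n-(j q_1^{e}\bmod n) : 1\le j\le\delta-1,\ e\ \text{odd},\ 1\le e\le 2m-1\,\}.
\]
So the task is to maximize $j q_1^e\bmod n$ over $j<\delta$ and odd $e$.

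In base $q_1$, write $j$ with $2m$ digits. Multiplication by $q_1^e$ modulo $n$ is a cyclic shift of the digit string by $e$ places. Subtracting from $n$ replaces each digit $d$ by $q_1-1-d$. Thus $I(\delta)$ equals the smallest value of the digit-complement of an odd cyclic rotation of some $j<\delta$. Equivalently, it is $n$ minus the largest odd rotation of an integer $j\le\delta-1$.

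For each case I will proceed in two steps. First, from the base-$q_1$ expansion of $\delta-1$ (for instance $\delta-1=(aq_1+s+1)q_1^{2t-1}-1$, or $\delta=q_1^{2t-1}-b$), I exhibit a specific $j<\delta$ and an odd shift $e$ whose rotation places the leading nonzero digits of $j$ at the top positions $2m-1,2m-2,\dots$. Computing the complement of this rotation gives the claimed formula; for example, $j=sq_1^{2t-1}$ rotated by $e=2(m-t)+1$ yields $q_1^{2(m-t)+1}-s$. Second, I prove optimality: any $j<\delta$ has at most as many significant digits as $\delta-1$, and an odd shift has to move a top digit of $j$ to an odd-offset position. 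Hence no rotation can place a larger leading block at the top.

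The parity restriction on $e$ is what distinguishes these formulas from the Euclidean ones in Lemma~\ref{Euclidean dual BCH}. The case $m=2$ will be treated separately, since then only $e\in\{1,3\}$ are available.

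I expect the main obstacle to be the optimality direction in the boundary ranges. These are the values $\delta=q_1^{2t-1}-b$, the range $(q_1^2-1)q_1^{2t-1}\le\delta\le q_1^{2t+1}-q_1^2+1$, the values $\delta=q_1^{3}-q_1^2$ and $q_1^3-q_1^2+1$, and $\delta=q_1^{2m-1}-b$. In these ranges a $j$ with many digits equal to $q_1-1$, or a wrap-around rotation, can compete with the obvious candidate. There I will compare candidates lexicographically, digit by digit from the top position downward. I will use the fact that the top two digits of an odd rotation come from digit positions of $j$ of opposite parity. This limits how large the leading pair $(q_1-1,q_1-1)$ can be arranged to be, and it should produce the $-q_1^2+1$ and $-q_1^2+q_1$ corrections in the statement.
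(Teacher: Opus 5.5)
The paper does not prove this lemma; it imports it from \cite{10068556}, so there is no in-paper argument to compare with.

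Your reduction is correct. From $\textbf{T}_{g^{\perp_H}}=(\textbf{T}^{-q_1})^c$ and the $q_1^2$-invariance of $\textbf{T}$ you get
\[
I(\delta)=\min\{n-(jq_1^{e}\bmod n):1\le j\le\delta-1,\ e\ \text{odd}\}.
\]
This is the same characterization ($iq_1^{2l+1}\equiv -j \pmod n$) the paper uses in the proofs of Theorems~\ref{4 small delta} and~\ref{Hermitian low delta}.

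The gap is in the case analysis, which is the whole content of the lemma. The one concrete computation you give is wrong. For $j=sq_1^{2t-1}$ and $e=2(m-t)+1$ you get $jq_1^{e}=sq_1^{2m}\equiv s$, hence $i=n-s$, not $q_1^{2(m-t)+1}-s$. The correct witness is $j=sq_1^{2t-1}-1$. Its $2t-1$ trailing digits are all $q_1-1$. The shift moves them to the top and wraps the leading digit $s-1$ to position $0$, so $jq_1^{e}\bmod n=q_1^{2m}-q_1^{2(m-t)+1}+s-1$.

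This is not just an arithmetic slip: your guiding heuristic of placing the leading digits of $j$ at the top fails exactly where the lemma has its shape. If $\delta-1$ has $2t$ digits, its leading digit sits at the odd position $2t-1$. The shift that would lift it to the top is $2m-2t$, which is even and therefore forbidden. In Cases 5--7, the admissible shift $2m-2t-1$, which lifts the top block of $j$ to the top, loses to the shift $2m-2t+1$. That shift wraps the two highest digits to the bottom and brings a run of $2t-1$ digits equal to $q_1-1$ to the top. Following your heuristic in Case 5 gives a rotation with top digit at most $a\le q_1-2$, and hence a strictly larger, incorrect value of $I$.

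Your optimality sketch (odd shifts moving a top digit to an odd-offset position) is too vague to check, but it can be made exact in one line. For odd $e$ put $k=2m-e$, which is also odd. Then
\[
jq_1^{e}\bmod n=(j\bmod q_1^{k})\,q_1^{2m-k}+\lfloor j/q_1^{k}\rfloor .
\]
So for each fixed $k$ the maximum over $1\le j\le\delta-1$ is lexicographic: first maximize $j\bmod q_1^k$, then $\lfloor j/q_1^k\rfloor$. This gives $I(\delta)=\min_{k\in\{1,3,\dots,2m-1\}}I_k$, where:
\begin{itemize}
\item $I_k=q_1^{2m-k}-\lfloor\delta/q_1^{k}\rfloor$ if $\delta\ge q_1^{k}$, attained at $j=\lfloor\delta/q_1^{k}\rfloor q_1^{k}-1$;
\item $I_k=n-(\delta-1)q_1^{2m-k}$ if $\delta<q_1^{k}$, attained at $j=\delta-1$.
\end{itemize}
Each line of the lemma is then a comparison of two or three of these explicit numbers. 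For instance, Case 2 is $I_1=q_1^{2m-1}-q_1^2+q_1$ beating $I_3$. The split between Cases 8 and 9 is the comparison of $I_{2m-3}=q_1^3-q_1^2+1$ with $I_{2m-1}=(b+1)q_1-1$.

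Finally, the $m=2$ part cannot be proved as printed. With $t=1$ and $1\le s\le q-1=q_1^2-1$, the last line overlaps the second line and disagrees with it. For $q_1=2$ and $\delta=7$, the true value is $I=3$, as the second line says, while the last line gives $5$. For $t=1$ the range must be cut down to $s\le q_1^2-q_1$. For $t=2$, only $s\le q_1-1$ is meaningful, since $\delta\le n$.
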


\section{Tighter Lower Bounds on the Minimum Distance of Several Self‑Dual Cyclic Codes over $\mathbb{F}_{2^s}$ }
This section is devoted to new minimum‑distance estimates for self‑dual cyclic codes. We begin by constructing explicit families of Euclidean self‑dual cyclic codes that attain favorable minimum distances, followed by corresponding results for the Hermitian case. We conclude with several slight refinements of earlier bounds.

\label{3}
\subsection{Euclidean Self-Dual Cyclic Codes for New $\delta$ and $q\ge 4$}

Among the main results of this paper, we obtain the following parameters of Euclidean self-dual cyclic codes.

\begin{theorem}
    \label{4 small delta self-dual cyclic}
    Let $q=2^s$, $s\ge2$ and $m$ be odd. Let $n=q^m-1$ and $\beta$ be an $n$-th primitive root of unity in $\mathbb{F}_{q^m}$. Put $\delta=q^{\frac{m+1}{2}}-2q+1\ge2$. Let $g^\perp(x)=x^{\deg h(x)}h(x^{-1})/h(0)$ be the generator polynomial of $\mathcal{C}_{(q,n,\delta,1,\beta)}^\perp$, where $g(x)$ is the generator polynomial of $\mathcal{C}_{(q,n,\delta,1,\beta)}$ and $h(x)=(x^n-1)/g(x)$. Let $\mathcal{C}'$ denote the cyclic code of length $2n$ over $\mathbb{F}_q$ with generator polynomial $g(x)g^\perp(x)$. Then $\mathcal{C}'$ is a Euclidean self-dual cyclic code with parameters $[2n,n,d]_q$, where
    \begin{itemize}
        \item $d=\min\{2d(\mathcal{C}_{(q,n,\delta,1,\beta)}),d(\mathcal{C}_{(q,n,\delta,1,\beta)}^\perp)\}\ge q^\frac{m+1}{2}+3$ for $m\ge5$.
        \item $d=\min\{2d(\mathcal{C}_{(q,n,\delta,1,\beta)}),d(\mathcal{C}_{(q,n,\delta,1,\beta)}^\perp)\}\ge q^2+4$ for $q\ge8$ and $m=3$.
    \end{itemize}
\end{theorem}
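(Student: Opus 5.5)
The plan is to separate self-duality from the distance estimate. Self-duality and the parameters $[2n,n]$ come exactly as in the proof of Theorem~\ref{even Euclidean self-dual}.

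\textbf{Self-duality.} Since $q\ge4$ and $m$ is odd, we have $2\le\delta=q^{\frac{m+1}{2}}-2q+1\le q^{\frac{m+1}{2}}-q+1$. By Lemma~\ref{Euclidean dual-containing}, $\mathcal{C}_{(q,n,\delta,1,\beta)}$ is therefore Euclidean dual-containing. Lemmas~\ref{van lint} and \ref{self-dual} then show that $\mathcal{C}'$ is permutation-equivalent to $\text{Plotkin}(\mathcal{C}_{(q,n,\delta,1,\beta)},\mathcal{C}^\perp_{(q,n,\delta,1,\beta)})$. It follows that $\mathcal{C}'$ is Euclidean self-dual of length $2n$ and dimension $n$, and that $d=\min\{2d(\mathcal{C}_{(q,n,\delta,1,\beta)}),d(\mathcal{C}^\perp_{(q,n,\delta,1,\beta)})\}$. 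It remains to bound each of the two terms from below.

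\textbf{First term.} The BCH bound gives $2d(\mathcal{C}_{(q,n,\delta,1,\beta)})\ge 2\delta=2q^{\frac{m+1}{2}}-4q+2$.
\begin{itemize}
\item For $m\ge5$ this is at least $q^{\frac{m+1}{2}}+3$, because $q^{\frac{m+1}{2}}\ge q^3\ge 4q+1$ when $q\ge4$.
\item For $m=3$ it is at least $q^2+4$ exactly when $q^2-4q-2\ge0$, i.e.\ when $q\ge8$. This explains the hypothesis $q\ge 8$ in the second bullet.
\end{itemize}
So the whole difficulty lies in $d(\mathcal{C}^\perp)$.

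\textbf{Second term: setting up the dual defining set.} The defining set of $\mathcal{C}^\perp$ is $(\mathbf{T}_g^{-1})^c$. Thus $i\in\mathbf{T}_{g^\perp}$ if and only if the $q$-cyclotomic coset of $n-i$ contains no element of $[1,\delta-1]$, i.e.\ its coset leader is $\ge\delta$ (or $n-i=0$).

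For $m\ge5$, write $t=\frac{m-1}{2}$. Then $(q-1)q^t\le\delta\le q^{t+1}-q+1$, since $q^t\ge 2q-1$. Lemma~\ref{Euclidean dual BCH} therefore gives $I(\delta)=q^{\frac{m+1}{2}}-q+1$. This is the same value as for Chen--Ding's larger $\delta$, and it only yields $q^{\frac{m+1}{2}}-q+2$. So the segment starting at $0$ is not enough, and I will look for a longer run of consecutive elements of $\mathbf{T}_{g^\perp}$ elsewhere.

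\textbf{Search for a longer run.} Lowering $\delta$ by $q$ removes from $\mathbf{T}_g$ all cosets with leaders in $[q^{\frac{m+1}{2}}-2q+1,\,q^{\frac{m+1}{2}}-q]$. This adds their negatives to $\mathbf{T}_{g^\perp}$. The plan is to show that a window of consecutive residues has length at least $q^{\frac{m+1}{2}}+2$, straddling either $0$ or the gap point $I(\delta)$. The two candidate windows are:
\begin{itemize}
\item $\{n-u,\dots,n-1,0,1,\dots,v\}$;
\item the window $\{I(\delta)+1,\dots\}$ beyond the first nonzero.
\end{itemize}
Concretely, I will take $j=n-i$ running over the corresponding range and write $j$ in base $q$ with $m$ digits. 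Then I will verify that every cyclic digit rotation of $j$ is $\ge\delta$, except at isolated points. The exceptional points are presumably those $j$ whose rotations land near $q^{\frac{m+1}{2}}-q+1$, e.g.\ $j$ built from digit patterns like $(q-1)\dots(q-1)\,0\dots0$. If a single isolated non-zero breaks the run, I will instead combine two runs with the Hartmann--Tzeng/Roos bound, using a shift by a unit coprime to $n$ (for example multiplication by $q^{t}$ or by $-1$) to merge them into an arithmetic progression.

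\textbf{The case $m=3$, $q\ge8$.} Here $\delta=q^2-2q+1=(q-2)q+1$ falls in the first case of Lemma~\ref{Euclidean dual BCH} with $t=1$, $a=q-2$. This gives $I(\delta)=q^2-q+2$. The same extended-window argument, now with only $3$ digits, should give a run of length $\ge q^2+3$; for $m=3$ the check is a finite case analysis in the digits.

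\textbf{Main obstacle.} I expect the hardest step to be the coset-leader bookkeeping. One must pin down exactly which $j$ in the chosen window have some rotation in $[1,\delta-1]$, and show that these points are either absent from the window or harmless for Hartmann--Tzeng. I would attack it by splitting $j$ according to the position of its leading nonzero digit and the number of trailing zeros or $(q-1)$-digits.
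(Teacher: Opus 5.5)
Your self-duality argument, the bound $2d(\mathcal{C}_{(q,n,\delta,1,\beta)})\ge 2\delta$ and your values of $I(\delta)$ are correct. The step that is supposed to carry the proof, however, fails: there is no run of $q^{\frac{m+1}{2}}+2$ consecutive elements of $\mathbf{T}_{g^\perp}$ straddling $0$ or just past $I(\delta)$. Put $k=\frac{m+1}{2}$.
\begin{itemize}
\item Nothing straddles $0$. Since $\{1,\dots,\delta-1\}\subseteq\mathbf{T}_g$, none of $n-1,\dots,n-\delta+1$ lies in $\mathbf{T}_{g^\perp}$.
\item Past $I(\delta)$ the holes are not isolated. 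For $m\ge5$ and $1\le c\le q-1$ we have $(q^k-q+c)q^{k-1}\equiv-\bigl((q-c)q^{k-1}-1\bigr)\pmod n$, with $(q-c)q^{k-1}-1\in\{1,\dots,\delta-1\}$. So the whole block $\{q^k-q+1,\dots,q^k-1\}$ misses $\mathbf{T}_{g^\perp}$. For $m=3$ the same computation removes $\{q^2-q+2,\dots,q^2-1\}$.
\item The next run starts at $q^k$ and is again short. For $m\ge5$ it ends at $2q^k-q$, because $(2q^k-q+1)q^{k-1}\equiv-(q^k-q^{k-1}-2)\pmod n$.
\end{itemize}
So the BCH bound applied to any of these runs gives at most $q^k-q+3$. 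Merging the two runs as two translates by $b=q^k$ in a Hartmann--Tzeng bound ($s=1$) adds only one more, still far from $q^k+3$.

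\emph{The missing idea} is the specific Roos configuration used in the paper.
\begin{itemize}
\item First, Theorem~\ref{4 small delta} shows that the second run $\{q^k,\dots,2q^k-q\}$ lies in $\mathbf{T}_{g^\perp}$; for $m=3$ the run is $\{q^2,\dots,2q^2-q+1\}$. The proof assumes $iq^l\equiv-j\pmod n$ with $1\le i\le\delta-1$ and excludes $l\le k-1$, $l\ge k+1$ and $l=k$ by size comparisons.
\item Then Lemma~\ref{Roos bound} is applied with $a=0$, $b=q^k$ (a unit mod $n$), $l_1\in\{0,\dots,q^k-2q\}$, and the $2(q+1)$ values $l_2\in\{cq^{k-1},\,cq^{k-1}+1:0\le c\le q\}\subseteq\{0,\dots,q^k+1\}$.
\item The key identity is $q^{k-1}\cdot q^k=q^m\equiv1\pmod n$, which turns the step $q^k$ back into unit shifts. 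Thus $l_1+l_2b$ equals either $l_1+c\in\{0,\dots,q^k-q\}$ or $l_1+c+q^k\in\{q^k,\dots,2q^k-q\}$, both inside the two runs.
\item This yields $d(\mathcal{C}^\perp_{(q,n,\delta,1,\beta)})\ge(q^k-2q+1)+(2q+2)=q^k+3$. For $m=3$ the same choice with $l_1\le q^2-2q+1$ gives $q^2+4$.
\end{itemize}
Your fallback names the right multiplier: $q^{t}$ with $t=\frac{m-1}{2}$ is exactly $q^{k-1}$. But you frame it as patching a single isolated hole in a two-run merge. Without seeing that this identity supplies $2(q+1)$ usable translates of the initial run, rather than $2$, the plan cannot reach the stated bound.
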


The argument for Theorem~\ref{4 small delta self-dual cyclic} is as follows. The inequality
$d(\mathcal{C}^{\perp}_{(q,n,\delta,1,\beta)}) \ge 2d(\mathcal{C}_{(q,n,\delta,1,\beta)})$
is not generally true. The key point is that a smaller designed distance $\delta$ reduces the minimum distance of $\mathcal{C}_{(q,n,\delta,1,\beta)}$ but increases that of its dual $\mathcal{C}^{\perp}_{(q,n,\delta,1,\beta)}$, which is demonstrated in the following lemma.

\begin{lemma}
    \label{contain}
    Let $q$ be a prime power, $n$ be a positive integer with $\gcd(n,q)=1$. Let $m=\text{ord}_n(q)$ and $\beta$ be an $n$-th primitive root of unity in $\mathbb{F}_{q^m}$. If $2\le\delta_1<\delta_2$,
    then
    \[
    \mathcal{C}_{(q,n,\delta_2,1,\beta)}\subseteq\mathcal{C}_{(q,n,\delta_1,1,\beta)}
    \quad \text{and} \quad
    \mathcal{C}^\perp_{(q,n,\delta_1,1,\beta)}\subseteq\mathcal{C}^\perp_{(q,n,\delta_2,1,\beta)}.
    \]
    Besides, if $q=q_1^2$, then \[\mathcal{C}^{\perp_H}_{(q,n,\delta_1,1,\beta)}\subseteq\mathcal{C}^{\perp_H}_{(q,n,\delta_2,1,\beta)}.\]
\end{lemma}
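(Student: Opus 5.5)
The plan is to argue directly from generator polynomials, equivalently defining sets. Let $g_1(x)$ and $g_2(x)$ be the generator polynomials of $\mathcal{C}_{(q,n,\delta_1,1,\beta)}$ and $\mathcal{C}_{(q,n,\delta_2,1,\beta)}$. By definition
\[g_1(x)=\text{lcm}\{m_{\beta}(x),\dots,m_{\beta^{\delta_1-1}}(x)\},\qquad g_2(x)=\text{lcm}\{m_{\beta}(x),\dots,m_{\beta^{\delta_2-1}}(x)\}.\]
Since $\delta_1<\delta_2$, the first list of minimal polynomials is contained in the second, so $g_1(x)\mid g_2(x)$. In terms of defining sets, $\textbf{T}_{g_1}\subseteq\textbf{T}_{g_2}$, because each is the union of the cyclotomic cosets $C_1,\dots,C_{\delta_i-1}$. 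For cyclic codes, the ideal $\langle g_2\rangle$ is contained in $\langle g_1\rangle$ in $\mathbb{F}_q[x]/\langle x^n-1\rangle$ whenever $g_1\mid g_2$. This gives $\mathcal{C}_{(q,n,\delta_2,1,\beta)}\subseteq\mathcal{C}_{(q,n,\delta_1,1,\beta)}$.

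The second inclusion follows from the first by a general fact. For linear codes $A\subseteq B$ we have $B^\perp\subseteq A^\perp$: any vector orthogonal to all of $B$ is in particular orthogonal to all of $A$. Taking $A=\mathcal{C}_{(q,n,\delta_2,1,\beta)}$ and $B=\mathcal{C}_{(q,n,\delta_1,1,\beta)}$ yields $\mathcal{C}^\perp_{(q,n,\delta_1,1,\beta)}\subseteq\mathcal{C}^\perp_{(q,n,\delta_2,1,\beta)}$. One could also check this through defining sets: the dual's defining set $(\textbf{T}_g^{-1})^c$ shrinks as $\textbf{T}_g$ grows. The orthogonality argument is cleaner, so I would use that.

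For the Hermitian statement with $q=q_1^2$, the same argument applies with $\langle\cdot,\cdot\rangle_H$. If $\textbf{c}$ satisfies $\langle \textbf{c},\textbf{y}\rangle_H=0$ for all $\textbf{y}\in B$, then this holds for all $\textbf{y}\in A\subseteq B$, so $B^{\perp_H}\subseteq A^{\perp_H}$. Alternatively, one can use \eqref{(1)}: $\mathcal{C}^{\perp_H}=(\mathcal{C}^\perp)^{q_1}$, and the coordinatewise Frobenius map $\textbf{x}\mapsto\textbf{x}^{q_1}$ preserves inclusions.

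I do not expect a real obstacle. The only point needing care is that the lcm over a longer list of minimal polynomials is divisible by the lcm over a shorter list. This is immediate because each $m_{\beta^i}(x)$ in the shorter list divides the larger lcm.
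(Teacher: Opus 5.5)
Your proposal is correct. The first inclusion is handled exactly as in the paper: $g_1\mid g_2$ because the list of minimal polynomials for $\delta_1$ is a sublist of the one for $\delta_2$. Your alternative for the Hermitian case via \eqref{(1)} is also the paper's route.

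Where you differ is the Euclidean dual inclusion. The paper works at the level of roots. It shows that $g_2^\perp(\beta^i)=0$ forces $h_2(\beta^{-i})=0$, hence $g_2(\beta^{-i})\neq 0$ (using that $x^n-1$ has simple roots since $\gcd(n,q)=1$). Then $g_1(\beta^{-i})\neq 0$, so $h_1(\beta^{-i})=0$ and $g_1^\perp(\beta^i)=0$. This gives $\mathbf{T}_{g_2^\perp}\subseteq\mathbf{T}_{g_1^\perp}$, i.e.\ $g_2^\perp\mid g_1^\perp$.

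You instead use the purely linear-algebraic fact that orthogonal complements reverse inclusion. For this step that is shorter, needs neither the cyclic structure nor separability of $x^n-1$, and applies verbatim to the Hermitian form.

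What the paper's version buys is that it directly produces the divisibility $g_2^\perp\mid g_1^\perp$, i.e.\ the defining-set containment. That is the form in which the lemma is actually invoked later, e.g.\ in the chain $g(x)\mid G(x)\mid G^{\perp_H}(x)\mid g^{\perp_H}(x)$ in the proof of Theorem~\ref{Hermitian low delta}. You recover this immediately from your code-level inclusion via the standard correspondence: for monic divisors $a,b$ of $x^n-1$, $\langle a\rangle\subseteq\langle b\rangle$ if and only if $b\mid a$.
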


\begin{proof}
Let $g_1(x)$ and $g_2(x)$ be the generator polynomials of $\mathcal{C}_{(q,n,\delta_1,1,\beta)}$ and $\mathcal{C}_{(q,n,\delta_2,1,\beta)}$, respectively. Since the designed distance of $\mathcal{C}_{(q,n,\delta_1,1,\beta)}$ is less than that of $\mathcal{C}_{(q,n,\delta_2,1,\beta)}$, we have the generator polynomial of $\mathcal{C}_{(q,n,\delta_1,1,\beta)}$ divides that of $\mathcal{C}_{(q,n,\delta_2,1,\beta)}$, which implies
\[
\mathcal{C}_{(q,n,\delta_2,1,\beta)} \subseteq \mathcal{C}_{(q,n,\delta_1,1,\beta)}.
\]

Moreover, the condition
\(
g_2^\perp(\beta^i) = 0 \) implies that $h_2(\beta^{-i}) = 0$, which in turn yields $g_2(\beta^{-i}) \ne 0$.
It follows that $g_1(\beta^{-i}) \ne 0$, which forces $h_1(\beta^{-i}) = 0$, and therefore $g_1^\perp(\beta^i) = 0$. We thus obtain $g_2^\perp(x)$ divides $g_1^\perp(x)$, which means
\[
\mathcal{C}^\perp_{(q,n,\delta_1,1,\beta)} \subseteq \mathcal{C}^\perp_{(q,n,\delta_2,1,\beta)}.
\]

The same inclusion also holds for the Hermitian duals due to \eqref{(1)}.
\end{proof}

Reducing the designed distance $\delta$ does not necessarily lengthen the segment of consecutive zeros in the defining set of $\mathcal{C}^\perp_{(q,n,\delta,1,\beta)}$. However, it generates additional such segments. This observation motivates us to apply a corollary of the BCH bound, which is stated in the following lemma.

\begin{lemma}[~\cite{Blahut2003Algebraic}]
\label{Roos bound}
Let $\mathcal{C}$ be a cyclic code of length $n$ over $\mathbb{F}_q$ with defining set $\mathbf{T}$.
Suppose there exist integers $a$, $b$, $\delta\ge 2$, $s$ and $\gcd(n,b)=1$, such that for every integer $l_1\in \{0,\dots,\delta-s-2\}$, there exist at least $s+1$ integers $l_2\in\{0,\dots,\delta-2\}$ satisfying $a+l_1+l_2b\in \textbf{T}$. Then the minimum distance satisfies $d(\mathcal{C})\ge \delta$.
\end{lemma}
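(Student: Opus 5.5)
The plan is to argue by contradiction on a low-weight codeword, working with the two-dimensional array of its "syndromes" and a rank argument, in the spirit of the proofs of the Hartmann--Tzeng and Roos bounds given by Blahut~\cite{Blahut2003Algebraic}. Let $c(x)\in\mathcal{C}$ be nonzero, with weight $w\le \delta-1$. Write its support as $i_1,\dots,i_w$ and its nonzero coefficients as $c_{i_1},\dots,c_{i_w}$. Put $X_j=\beta^{i_j}$, $Z_j=X_j^{b}$ and $Y_j=c_{i_j}X_j^{a}\neq 0$. Since $\gcd(n,b)=1$, the $Z_j$ are pairwise distinct, just as the $X_j$ are. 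For $0\le l_1\le \delta-s-2$ and $0\le l_2\le\delta-2$ define
\[
V_{l_1,l_2}=c\big(\beta^{a+l_1+l_2b}\big)=\sum_{j=1}^{w}Y_jX_j^{l_1}Z_j^{l_2}.
\]
The hypothesis says that every row $l_1$ of the $(\delta-s-1)\times(\delta-1)$ array $V$ has at least $s+1$ zero entries.

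\textbf{Step 1 (factorization).} I would write $V=A\,\mathrm{diag}(Y)\,B$, where $A=(X_j^{l_1})$ has size $(\delta-s-1)\times w$ and $B=(Z_j^{l_2})$ has size $w\times(\delta-1)$. Because the $Z_j$ are distinct and $w\le\delta-1$, $B$ contains a nonsingular $w\times w$ Vandermonde block and so has full row rank $w$. For the same reason $A$ has rank $\min\{w,\delta-s-1\}$. Consequently the row space of $V$ is an $r$-dimensional subspace of the row space of $B$, where $r=\min\{w,\delta-s-1\}$.

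\textbf{Step 2 (too many zeros).} Every nonzero row of $V$ has the form $\big(\sum_j\mu_jZ_j^{l_2}\big)_{l_2=0}^{\delta-2}$ with $\mu\neq 0$. Such a row satisfies a linear recurrence of order $w$ in $l_2$, so it cannot vanish on $w$ consecutive indices. This is the BCH bound for the code with locators $Z_j$. The aim is to play this against the row hypothesis. I would form suitable linear combinations of the $r$ rows: using row operations, an $r$-dimensional space of rows can be arranged to have a nonzero member vanishing in $r-1$ prescribed extra coordinates. Combining these forced zeros with the $s+1$ zeros guaranteed in each row, the goal is to produce a nonzero vector in the row space of $B$ with too many zeros. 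When $w\le\delta-s-1$ we have $r=w$, and then the total count $(s+1)+(w-1)\ge w$ of zeros should contradict the rank-$w$ structure of $B$. When $w>\delta-s-1$ we have $r=\delta-s-1$, and the count $(s+1)+(\delta-s-2)=\delta-1\ge w$ should do the same.

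\textbf{Main obstacle.} Step 2 is where I expect the real difficulty. The zeros guaranteed by the hypothesis are not consecutive in $l_2$, and the vectors $(Z_j^{l_2})$ restricted to an arbitrary set of $w$ indices $l_2$ need not be linearly independent. So a naive zero count does not give the contradiction by itself. To get around this, I would first use the shift structure of the array: replacing $Y_j$ by $Y_jX_j$ moves row $l_1$ to row $l_1+1$. With this I would try to realize the hypothesis as the standard Roos configuration, namely a set of the form $\{a+l_1+l_2b\}$ obtained by adjoining $\delta-s-2$ consecutive shifts to a set of $s+1$ elements in the $b$-progression, and then invoke the Roos bound. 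If that reduction fails in general, I would fall back on the proof via linear complexity (Massey's theorem) in Blahut~\cite{Blahut2003Algebraic}. There one bounds the linear complexity of the $l_2$-sequence in each row by $w$, and uses the rank bound of Step 1 on the number of independent rows to show that the zero pattern forces $V\equiv 0$. Since $B$ has full row rank and $A$ has no zero row, $V\equiv0$ gives $Y=0$, contradicting $c\neq 0$, and hence $d(\mathcal{C})\ge\delta$.
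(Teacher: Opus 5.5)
\textbf{The gap is in Step~2, and it cannot be closed under your reading of the hypothesis.} You read the hypothesis as saying that each row $l_1$ of $V$ has its own $s+1$ zeros, at positions that may depend on $l_1$. That is what the lemma's wording literally allows, but under that reading the statement is false. Take the binary Hamming code of length $n=7$, with $\mathbf{T}=\{1,2,4\}$ and $d=3$, and put $a=0$, $b=1$, $s=0$, $\delta=6$. For every $l_1\in\{0,\dots,4\}$, the choice $l_2=4-l_1\in\{0,\dots,4\}$ gives $a+l_1+l_2b=4\in\mathbf{T}$. So the hypothesis holds, yet $d=3<6$. In your array the guaranteed zeros lie on the anti-diagonal $l_1+l_2=4$, not in common columns.

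This means neither the zero count in Step~2 nor the Massey fallback can succeed. The concrete error is the step that adds ``the $s+1$ zeros guaranteed in each row'' to the $r-1$ forced zeros. A linear combination of rows vanishes only where every row used vanishes, so zeros at row-dependent positions are lost. Your proposed reduction to the ``standard Roos configuration'' needs exactly the hypothesis that is missing.

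\textbf{The correct hypothesis and how to finish.} The paper gives no proof; it only cites Blahut. Blahut's Roos bound, and every application of the lemma in the paper (e.g.\ the proof of Theorem~\ref{4 small delta self-dual cyclic}), uses a single set $L\subseteq\{0,\dots,\delta-2\}$ with $|L|\ge s+1$ and $0\le s\le\delta-2$. The requirement is $a+l_1+l_2b\in\mathbf{T}$ for all $0\le l_1\le\delta-s-2$ and all $l_2\in L$.

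Under that hypothesis your Step~1 finishes the proof if you look at columns instead of rows. The columns of $V$ indexed by $L$ vanish, so $A\,\mathrm{diag}(Y)\,B_L=0$, where $B_L$ is the $w\times|L|$ submatrix of $B$. Hence the columns of $\mathrm{diag}(Y)B_L$ lie in $\ker A$, and
\[
\operatorname{rank}B_L\le\dim\ker A=w-\min\{w,\delta-s-1\}.
\]
On the other hand, $B_L$ is obtained from the rank-$w$ matrix $B$ by deleting $\delta-1-|L|\le\delta-s-2$ columns. Its entries $Z_j^{l_2}$ are nonzero and $L\neq\emptyset$, so $\operatorname{rank}B_L\ge\max\{1,\,w-\delta+s+2\}$. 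These two bounds are incompatible for every $w\le\delta-1$. If $w\le\delta-s-1$, the upper bound is $0$; otherwise it is $w-\delta+s+1$. Hence no nonzero codeword has weight at most $\delta-1$.
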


The following theorem shows the additional segment of consecutive zeros in the defining set of $\mathcal{C}_{(q,n,\delta,1,\beta)}^\perp$ with the new designed distance $\delta$.

\begin{theorem}
	\label{4 small delta}
	Let $q=2^s$, $s\ge 2$ and $m$ be odd. Let $n=q^m-1$ and $\beta$ be an $n$-th primitive root of unity in $\mathbb{F}_{q^m}$. Put $\delta=q^{\frac{m+1}{2}}-2q+1\ge2$. Let $g^\perp(x)=x^{\deg h(x)}h(x^{-1})/h(0)$ be the generator polynomial of $\mathcal{C}_{(q,n,\delta,1,\beta)}^\perp$, where $g(x)$ is the generator polynomial of $\mathcal{C}_{(q,n,\delta,1,\beta)}$ and $h(x)=(x^n-1)/g(x)$. Then
	\begin{itemize}
		\item $\{0,1,\dots,q^\frac{m+1}{2}-q,q^\frac{m+1}{2},q^\frac{m+1}{2}+1,\dots,2q^\frac{m+1}{2}-q\}\subseteq \textbf{T}_{g^\perp}$ for $m\ge5$.
		\item $\{0,1,\dots,q^2-q+1,q^2,q^2+1,\dots,2q^2-q+1\}\subseteq \textbf{T}_{g^\perp}$ for $m=3$.
	\end{itemize}
\end{theorem}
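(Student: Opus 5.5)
By Lemma~\ref{defining set}, the dual code $\mathcal{C}^\perp_{(q,n,\delta,1,\beta)}$ has defining set $(\mathbf{T}_g^{-1})^c$. So an index $i$ lies in $\mathbf{T}_{g^\perp}$ exactly when the $q$-cyclotomic coset of $n-i$ misses $\{1,\dots,\delta-1\}$. Equivalently, either $n-i\equiv 0$, or every element of $C_{n-i}$ is at least $\delta$. I will work with $q$-ary expansions of length $m$ and put $h=\frac{m+1}{2}$. Multiplying by $q$ modulo $n=q^m-1$ cyclically rotates the digit vector, and the digits of $n-i$ are $q-1-i_k$, where $i_k$ are the digits of $i$. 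The task is therefore: for each listed $i$, every cyclic rotation of the complemented digit vector of $i$ represents an integer $\ge \delta=q^h-2q+1$.

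\emph{First segment.} I will take it from Lemma~\ref{Euclidean dual BCH} rather than redo it. Since $(q-1)q^{h-1}\le q^h-2q+1\le q^h-q+1$, the second case applies with $t=h-1$ (this requires $1\le t\le m-2$, which holds for $m\ge 3$). It gives $I(\delta)=q^{m-h+1}-q+1=q^h-q+1$, hence $\{0,\dots,q^h-q\}\subseteq\mathbf{T}_{g^\perp}$.

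For $m=3$ the statement also claims $q^2-q+1$. I will verify this index directly: $n-(q^2-q+1)$ has digits $(q-2,\,0,\,q-2)$ in positions $(2,1,0)$. Its rotations are $(q-2)q^2+q-2$, $(q-2)q+(q-2)q^2$ and $(q-2)q^2+(q-2)$-type values, all exceeding $q^2-2q+1$ once $q\ge 4$.

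\emph{Second segment.} Write $i=q^h+j$ with $0\le j\le q^h-q$. Then $i$ has digit $1$ at position $h$, digits $j_0,\dots,j_{h-1}$ below it, and zeros above it. In the complement, positions $h+1,\dots,m-1$ carry $q-1$, position $h$ carries $q-2$ (nonzero since $q\ge4$; this is where $s\ge2$ enters), and position $k<h$ carries $q-1-j_k$.

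Suppose a rotation had value $<\delta<q^h$. Then its top $m-h=h-1$ digits vanish, which requires a cyclic block of $h-1$ consecutive zero digits in the complement. Such a block cannot meet positions $\ge h$, so it must be $\{0,\dots,h-2\}$ or $\{1,\dots,h-1\}$; that is, the corresponding $j_k$ all equal $q-1$.
\begin{itemize}
\item Block $\{1,\dots,h-1\}$: then $j\ge (q^{h-1}-1)q$, and $j\le q^h-q$ forces $j=q^h-q$, so $j_0=0$. The resulting rotation has leading nonzero digit $q-1$ (from position $0$), followed by $q-1$ and $q-2$ digits. Its value is at least $(q-1)q^{h-1}+\cdots$, which is $\ge\delta$.
\item Block $\{0,\dots,h-2\}$: the rotation reads the digit $q-1-j_{h-1}$ at position $h-1$, then the digits of positions $m-1,\dots,h$. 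Its value is at least $q^{h-1}(q-1-j_{h-1})+(q-1)q^{h-2}+\cdots$. It can be small only if $j_{h-1}=q-1$, which returns us to the full-block situation. In that situation $j=q^h-1>q^h-q$, excluded.
\end{itemize}
I will carry out these comparisons carefully against $q^h-2q+1$, using $m\ge5$ so that $h-1\ge 2$ and at least one digit $q-1$ sits above position $h$.

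For $m=3$ there are no positions above $h$, so the rotation bookkeeping differs. There I will simply enumerate the three rotations of $n-i$ for $i=q^2+j$, $0\le j\le q^2-q+1$, and check each directly.

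\emph{Main obstacle.} I expect the hardest part to be the exhaustive but fiddly rotation analysis in the second segment, especially the boundary values $j$ near $q^h-q$ and the separate $m=3$ enumeration, where the $\ge 2q$ margin in $\delta$ is exactly what is needed.
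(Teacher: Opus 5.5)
Your route differs from the paper's. The paper never uses digits. Assuming some $j$ in the second segment lies outside $\mathbf{T}_{g^\perp}$, it writes $-j\equiv iq^l\pmod n$ with $1\le i\le\delta-1$ and splits into the cases $l\le\frac{m-1}{2}$, $l\ge\frac{m+3}{2}$ and $l=\frac{m+1}{2}$. In each case it identifies the integer $iq^l$ or $jq^{m-l}$ as $n-j$, $n-i$ or $2n-i$ by size, and gets a contradiction. For $m=3$ it takes the first segment from the first case of Lemma~\ref{Euclidean dual BCH} and checks $2q^2-q+1$ by listing its coset.

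Your base-$q$ rotation argument is sound in structure. It rests on three observations: $n-i$ is the digit complement of $i$; a rotation below $q^h$ needs $h-1$ consecutive zero digits; and the nonzero digits $q-2$ and $q-1$ at positions $\ge h$ confine that block to two places. It buys more insight than the paper's inequalities: it shows exactly which elements are extremal. It also covers $m=3$ with the same bookkeeping (block length $1$; the bound $j\le q^2-q+1$ allows $j_0\in\{0,1\}$ in the block-$\{1\}$ case, giving values $q^2-2$ and $q^2-q-2$), so the separate enumeration is unnecessary. The paper's inequalities are quicker to check, and its strict bound $i\le\delta-1$ silently absorbs the tight case described below.

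Two steps are wrong as written.

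\emph{First segment for $m=3$.} The inequality $(q-1)q^{h-1}\le q^h-2q+1$ fails, since $q^2-q>q^2-2q+1$, so the second case of Lemma~\ref{Euclidean dual BCH} does not apply. Use the first case with $t=1$, $a=q-2$. It gives $I(\delta)=q^2-q+2$, which already contains $q^2-q+1$. Incidentally, $n-(q^2-q+1)$ has digits $(q-1,0,q-2)$, not $(q-2,0,q-2)$.

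\emph{Block $\{0,\dots,h-2\}$.} This is the more important error: you have the rotation backwards. Moving this block to the top is multiplication by $q^h$. That sends position $h-1$ to position $0$ and positions $h,\dots,m-1$ to $1,\dots,h-1$. So $q-1-j_{h-1}$ is the units digit, and the value is exactly
\[
(q^h-q^2)+(q-2)q+(q-1-j_{h-1})=\delta+(q-2-j_{h-1}).
\]
Your bound $q^{h-1}(q-1-j_{h-1})+(q-1)q^{h-2}+\cdots$ gives only $2q^{h-1}-2<\delta$ for $j_{h-1}=q-2$ (when $m\ge5$). So ``small only if $j_{h-1}=q-1$'' does not follow from what you wrote.

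With the correct value the claim does hold, but with no slack. Equality occurs at $j_{h-1}=q-2$: for example, $i=2q^h-q^{h-1}-1$ has $n-i=\delta q^{h-1}$, so $C_{n-i}=C_\delta$. For every odd $m\ge3$, this is exactly where the $-2q+1$ in $\delta$ is consumed. This comparison therefore has to be done exactly, not with a leading-digit estimate.

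The same caution applies in the block $\{1,\dots,h-1\}$, where your reading is correct. The leading term $(q-1)q^{h-1}$ alone is below $\delta$ for $h\ge3$; the exact value $q^h-2$ is what you need.
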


\begin{proof}
	We first note that $m \ge 3$ follows directly from $\delta \ge 2$. Consequently, it suffices to consider only the cases $m\ge 5$ and $m=3$.
	
	For the case of $m\ge 5$, by setting $t = \frac{m-1}{2}$, we get
	\[
	(q-1)q^t \le q^{\frac{m+1}{2}} - 2q + 1 \le q^{t+1} - q + 1.
	\]
	By Lemma~\ref{Euclidean dual BCH}, this implies
	\(
	\{0,1,\dots,q^{\frac{m+1}{2}} - q\} \subseteq \mathbf{T}_{g^\perp}.
	\)
	
	Suppose that there exists some
	\[
	j \in \{q^{\frac{m+1}{2}}, \dots, 2q^{\frac{m+1}{2}} - q\}
	\]
	such that $g^\perp(\beta^j) \ne 0$. It forces $h(\beta^{-j}) \ne 0$, which in turn implies $g(\beta^{-j}) = 0$. Hence, there exists an integer
	\[
	i \in \{1, \dots, q^{\frac{m+1}{2}} - 2q\}
	\]
	satisfying
	\(
	i q^l \equiv -j \pmod{n}
	\). If the exponent $l \le \frac{m-1}{2}$, then
	\[
	i q^l \le (q^{\frac{m+1}{2}} - 2q) q^{\frac{m-1}{2}}
	= q^m - 2q^{\frac{m+1}{2}}
	< q^m - 1 = n.
	\]
	Thus from the congruence $i q^l\equiv -j \pmod n$, it follows that $i q^l = n - j$, and consequently
	\[
	n - j \le q^m - 2q^{\frac{m+1}{2}}
	< q^m - 2q^{\frac{m+1}{2}} + q - 1
	\le n - j,
	\]
	which is impossible. Therefore, we must have the exponent $l \ge \frac{m+1}{2}$.
	
	From the congruence $i q^l \equiv -j \pmod{n}$, it follows that
	\(
	j q^{m-l} \equiv -i \pmod{n}.
	\)
	If the exponent $l \ge \frac{m+3}{2}$, then
	\[
	j q^{m-l} \le (2q^{\frac{m+1}{2}} - q) q^{\frac{m-3}{2}}
	= 2q^{m-1} - q^{\frac{m-1}{2}}
	< q^m - 1 = n.
	\]
	Thus from the congruence $j q^{m-l} \equiv -i \pmod{n}$, it follows that $j q^{m-l} = n - i$, and consequently
	\[
	n - i \le 2q^{m-1} - q^{\frac{m-1}{2}}
	< q^m - q^{\frac{m+1}{2}} + 2q - 1
	\le n - i.
	\]
	The second inequality holds if and only if
	\[
	(q-2)q^{m-1} - q^{\frac{m-1}{2}}(q-1) + 2q - 1 > 0,
	\]
	which is equivalent to
	\[
	q^{\frac{m-1}{2}} \big[(q-2)q^{\frac{m-1}{2}} - q + 1\big] + 2q - 1 > 0.
	\]
	Notice that
	\[
	(q-2)q^{\frac{m-1}{2}} - q + 1 \ge 2q - q + 1 = q + 1 > 0,
	\]
	thus the inequality is satisfied, again yielding a contradiction.
	
	In the remaining case of $l=\frac{m+1}{2}$, we have
	\[
	j q^{m-l} \ge q^{\frac{m+1}{2}} \cdot q^{\frac{m-1}{2}} = q^m > q^m - 1 = n
	\]
	and
	\[
	j q^{m-l} \le (2q^{\frac{m+1}{2}} - q) q^{\frac{m-1}{2}}
	= 2q^m - q^{\frac{m+1}{2}}
	< 2(q^m - 1) = 2n.
	\]
	Thus from the congruence $j q^{m-l} \equiv -i \pmod{n}$, it follows that $j q^{m-l} = 2n - i$ and
	\[
	2n - i \le 2q^m - q^{\frac{m+1}{2}}
	< 2q^m - q^{\frac{m+1}{2}} + 2q - 2
	\le 2n - i,
	\]
	which is a contradiction.
	
	For the case of $m=3$, by setting $t = \frac{m-1}{2}$ and $a = q - 2$, we get
	\[
	a q^t \le q^2 - 2q + 1 \le (a+1)q^t - 1.
	\]
	By Lemma~\ref{Euclidean dual BCH}, this gives
	\(
	\{0,1,\dots,q^2 - q + 1\} \subseteq \mathbf{T}_{g^\perp}.
	\)
	The same argument shows
	\[
	\{q^2, q^2+1, \dots, 2q^2 - q\} \subseteq \mathbf{T}_{g^\perp}.
	\]
	It remains to prove that $2q^2 - q + 1 \in \mathbf{T}_{g^\perp}$.
	
	A direct computation yields
	\[
	\{(2q^2 - q + 1) q^l \bmod (q^3 - 1) : 0 \le l \le 2\}
	= \{2q^2 - q + 1,\; q^3 - q^2 + q + 1,\; q^2 + 2q - 1\}.
	\]
	Moreover,
	\[
	\{(2q^2 - q + 1) q^l \bmod (q^3 - 1) : 0 \le l \le 2\}
	\cap
	\{q^3 - q^2 + 2q - 1,\; q^3 - q^2 + 2q,\; \dots,\; q^3 - 2\}
	= \varnothing.
	\]
	This completes the proof of the whole theorem.
\end{proof}

Building on Lemmas~\ref{contain}, \ref{Roos bound} and Theorem~\ref{4 small delta}, we now present the proof of Theorem~\ref{4 small delta self-dual cyclic}.

\begin{proof}[The proof of Theorem \ref{4 small delta self-dual cyclic}]
    For the case of $m \ge 5$, by setting the starting position $a = 0$, the length of consecutive zeros $\mu - s - 1 = q^{\frac{m+1}{2}} - 2q + 1$ and the gap length $b = q^{\frac{m+1}{2}}$, we obtain the number of values of $l_2$ in $\{0,1,\dots,\mu-2\}$ is
    \[
    s + 1 = 2(q+1)
    \]
    and the maximum value of $l_2$ satisfies
    \[
    \mu - 2 \ge q^{\frac{m-1}{2}} \cdot q + 1.
    \]
    By Lemma~\ref{Roos bound}, this yields
    \[
    d(\mathcal{C}^\perp_{(q,n,\delta,1,\beta)}) \ge \mu = q^{\frac{m+1}{2}} + 3.
    \]
    
    Noticing that $d(\mathcal{C}_{(q,n,\delta,1,\beta)}) \ge \delta = q^{\frac{m+1}{2}} - 2q + 1$ and
    \(
    2(q^{\frac{m+1}{2}} - 2q + 1) > q^{\frac{m+1}{2}} + 3,
    \)
    we have
    \[
    d = \min\{2d(\mathcal{C}_{(q,n,\delta,1,\beta)}),\, d(\mathcal{C}^\perp_{(q,n,\delta,1,\beta)})\} \ge q^{\frac{m+1}{2}} + 3.
    \]
    
    For the case of $q\ge 8$ and $m = 3$, by setting $a = 0$, $\mu - s - 1 = q^2 - 2q + 2$ and $b = q^2$, we obtain the number of values of $l_2$ in $\{0,1,\dots,\mu-2\}$ is
    \[
    s + 1 = 2(q+1)
    \]
    and the maximum value of $l_2$ satisfies
    \[
    \mu - 2 \ge q^2 + 1.
    \]
    By Lemma~\ref{Roos bound}, this yields
    \[
    d(\mathcal{C}^\perp_{(q,n,\delta,1,\beta)}) \ge \mu = q^2 + 4.
    \]
    
    Noticing that $d(\mathcal{C}_{(q,n,\delta,1,\beta)}) \ge \delta = q^2 - 2q + 1$ and
    \(
    2\delta = 2(q^2 - 2q + 1) \ge q^2 + 4,
    \)
    we have
    \[
    d = \min\{2d(\mathcal{C}_{(q,n,\delta,1,\beta)}),\, d(\mathcal{C}^\perp_{(q,n,\delta,1,\beta)})\} \ge q^2 + 4.
    \]
    This completes the proof of the whole theorem.
\end{proof}

\subsection{Euclidean Self-Dual Cyclic Codes for New $\delta$ and $q=2$}

In this subsection, we discuss the case $q = 2$. While the results in~\cite{wang2026constructionsselfdualbinarycyclic} yield better lower bounds for $m \ge 11$, our approach provides more general results for $m \ge 5$.
\begin{theorem}
    \label{2 small delta}
    Let $q=2$ and $m$ be odd. Let $n=2^m-1$ and $\beta$ be an $n$-th primitive root of unity in $\mathbb{F}_{q^m}$. Put $\delta=2^{\frac{m+1}{2}}-3\ge2$. Let $g^\perp(x)=x^{\deg h(x)}h(x^{-1})/h(0)$ be the generator polynomial of $\mathcal{C}_{(2,n,\delta,1,\beta)}^\perp$, where $g(x)$ is the generator polynomial of $\mathcal{C}_{(2,n,\delta,1,\beta)}$ and $h(x)=(x^n-1)/g(x)$. Then \[\{0,1,\dots,2^\frac{m+1}{2}-2,2^\frac{m+1}{2},2^\frac{m+1}{2}+1,\dots,2^\frac{m+3}{2}-4\}\subseteq \textbf{T}_{g^\perp} \text{~for~} m\ge 5.\]
    Especially, for the case of $m=5$, we have $2^\frac{m+3}{2}-3\in \textbf{T}_{g^\perp}$
\end{theorem}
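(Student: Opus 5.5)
The plan is to mirror the proof of Theorem~\ref{4 small delta}, now with $q=2$. Set $k=\frac{m+1}{2}$, so that $2k-1=m$ and $\delta=2^k-3$. Note that $\delta\ge2$ forces $m\ge5$. The defining set of $\mathcal{C}_{(2,n,\delta,1,\beta)}$ is the union of the cyclotomic cosets $C_i$ with $1\le i\le 2^k-4$. Exactly as in the proof of Lemma~\ref{contain}, an index $j$ lies outside $\mathbf{T}_{g^\perp}$ if and only if $g(\beta^{-j})=0$, that is, if and only if $-j\equiv i2^l \pmod n$ for some $i\in\{1,\dots,2^k-4\}$ and some $0\le l\le m-1$.

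For the first segment, I would invoke Lemma~\ref{Euclidean dual BCH}. When $q=2$, the cases with $1\le a<q-1$ or $1\le b\le q-2$ are empty. The second case applies with $t=k-1$, because
\[
2^{k-1}\le 2^k-3\le 2^k-1 .
\]
It gives $I(\delta)=2^{m-k+1}-1=2^k-1$, since $m-k+1=k$. Hence $\{0,1,\dots,2^k-2\}\subseteq\mathbf{T}_{g^\perp}$.

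For the second segment, suppose $j\in\{2^k,\dots,2^{k+1}-4\}$ satisfies $-j\equiv i2^l\pmod n$ with $1\le i\le 2^k-4$. I would split into three cases according to $l$.
\begin{itemize}
\item If $l\le k-1$, then $i2^l\le(2^k-4)2^{k-1}=2^m-2^{k+1}<n$. So $i2^l=n-j\ge 2^m-2^{k+1}+3$, which is a contradiction.
\item If $l\ge k+1$, use the equivalent congruence $j2^{m-l}\equiv -i\pmod n$. Here $j2^{m-l}\le(2^{k+1}-4)2^{k-2}=2^m-2^k<n$. So $j2^{m-l}=n-i\ge 2^m-2^k+3$, again a contradiction.
\item If $l=k$, then $j2^{k-1}$ lies in $[2^m,\,2^{m+1}-2^{k+1}]$, which is strictly inside $(n,2n)$. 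Therefore $j2^{k-1}=2n-i\ge 2^{m+1}-2^k+2$. This exceeds the upper bound $2^{m+1}-2^{k+1}$, a contradiction.
\end{itemize}
This yields $\{2^k,\dots,2^{k+1}-4\}\subseteq\mathbf{T}_{g^\perp}$.

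For the special case $m=5$, we have $n=31$, and we must show $2^{4}-3=13\in\mathbf{T}_{g^\perp}$. I would compute directly. The coset of $13$ is $\{13,26,21,11,22\}$, so the negatives modulo $31$ are $\{18,5,10,20,9\}$. The cosets of $1,\dots,4$ form $\{1,2,4,8,16,3,6,12,24,17\}$, which is disjoint from $\{18,5,10,20,9\}$. Hence $13\in\mathbf{T}_{g^\perp}$.

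The main obstacle is the borderline case $l=k$ in the second segment. There the margins are tight: one must check that the upper endpoint $2^{k+1}-4$ (not $2^{k+1}-3$) keeps $j2^{k-1}$ strictly below $2n-i$ for the largest $i$. This is also why the extra point $2^{k+1}-3$ only works as a separate check when $m=5$.
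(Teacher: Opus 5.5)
Your proposal is correct and follows the paper's proof essentially step for step. The paper likewise uses Lemma~\ref{Euclidean dual BCH} with $t=\frac{m-1}{2}$ for the first segment, the same three-way split on $l$ with the same bounds for the second segment, and a direct computation showing $13\in\mathbf{T}_{g^\perp}$ when $m=5$.

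One small correction to your closing remark: the case $l=k$ is not the tight one, since it has slack $2^k+2$. What actually caps the segment at $2^{k+1}-4$ is the case $l=k+1$. There, for $j=2^{k+1}-3$, one has $j2^{k-2}=2^m-3\cdot2^{k-2}$, which is no longer below $2^m-2^k+3$ once $m\ge7$; the paper's witness is $i=3\cdot2^{k-2}-1$.
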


\begin{proof}
We first note that $m \ge 5$ follows directly from $\delta \ge 2$. By setting $t = \frac{m-1}{2}$, according to the inequality
\(
2^t \le 2^{\frac{m+1}{2}} - 3 \le 2^{t+1} - 1
\) and Lemma~\ref{Euclidean dual BCH}, we have
\[
\{0,1,\dots,2^{\frac{m+1}{2}} - 2\} \subseteq \mathbf{T}_{g^\perp}.
\]

Suppose that there exists some
\[
j \in \{2^{\frac{m+1}{2}}, 2^{\frac{m+1}{2}} + 1, \dots, 2^{\frac{m+3}{2}} - 4\}
\]
such that $g^\perp(\beta^j) \ne 0$. It forces $h(\beta^{-j}) \ne 0$, which in turn implies $g(\beta^{-j}) = 0$. Hence, there exists an integer
\[
i \in \{1, 2, \dots, 2^{\frac{m+1}{2}} - 4\}
\]
satisfying
\(
i \cdot 2^l \equiv -j \pmod{n}
\). If the exponent $l \le \frac{m-1}{2}$, then
\[
i \cdot 2^l \le (2^{\frac{m+1}{2}} - 4) \cdot 2^{\frac{m-1}{2}}
= 2^m - 2^{\frac{m+3}{2}}
< 2^m - 1 = n.
\]
From the congruence $i \cdot 2^l \equiv -j \pmod{n}$, it follows that $i \cdot 2^l = n - j$, and consequently
\[
n - j \le 2^m - 2^{\frac{m+3}{2}}
< 2^m - 2^{\frac{m+3}{2}} + 3
\le n - j,
\]
which is impossible. Therefore, we must have the exponent $l \ge \frac{m+1}{2}$.

From the congruence $i \cdot 2^l \equiv -j \pmod{n}$, it follows that
\(
j \cdot 2^{m-l} \equiv -i \pmod{n}
\). If the exponent $l \ge \frac{m+3}{2}$, then
\[
j \cdot 2^{m-l} \le (2^{\frac{m+3}{2}} - 4) \cdot 2^{\frac{m-3}{2}}
= 2^m - 2^{\frac{m+1}{2}}
< 2^m - 1 = n.
\]
Thus from the congruence $j \cdot 2^{m-l} \equiv -i \pmod{n}$, it follows that $j \cdot 2^{m-l} = n - i$, and consequently
\[
n - i \le 2^m - 2^{\frac{m+1}{2}}
< 2^m - 2^{\frac{m+1}{2}} + 3
\le n - i,
\]
a contradiction. In the remaining case of $l=\frac{m+1}{2}$, we have
\[
j \cdot 2^{m-l} \ge 2^{\frac{m+1}{2}} \cdot 2^{\frac{m-1}{2}}
= 2^m > 2^m - 1 = n
\]
and
\[
j \cdot 2^{m-l} \le (2^{\frac{m+3}{2}} - 4) \cdot 2^{\frac{m-1}{2}}
= 2^{m+1} - 2^{\frac{m+3}{2}}
< 2(2^m - 1) = 2n.
\]
Thus from the congruence $j \cdot 2^{m-l} \equiv -i \pmod{n}$, it follows that $j \cdot 2^{m-l} = 2n - i$ and
\[
2n - i \le 2^{m+1} - 2^{\frac{m+3}{2}}
< 2^{m+1} - 2^{\frac{m+1}{2}} + 2
\le 2n - i,
\]
which is a contradiction. It remains to determine whether the element $2^{\frac{m+3}{2}} - 3$ belongs to $\mathbf{T}_{g^\perp}$.

For the case of $m \ge 7$, notice that
\(
1 \le 3 \cdot 2^{\frac{m-3}{2}} - 1 \le 2^{\frac{m+1}{2}} - 4
\)
and
\[
(3 \cdot 2^{\frac{m-3}{2}} - 1) \cdot 2^{\frac{m+3}{2}}
\equiv 2^m - 2^{\frac{m+3}{2}} + 2 \pmod{n}.
\]
Thus the element $2^m - 2^{\frac{m+3}{2}} + 2 \in \mathbf{T}_g$, which implies
\(
2^{\frac{m+3}{2}} - 3 \notin \mathbf{T}_{g^\perp}.
\)

For the case of $m = 5$, a direct computation yields
\(
2^{\frac{m+3}{2}} - 3 = 13 \in \mathbf{T}_{g^\perp}
\). It is worthy noticing that
\(
1 \le 2^{\frac{m-1}{2}} - 1 \le 2^{\frac{m+1}{2}} - 4
\)
and
\[
(2^{\frac{m-1}{2}} - 1) \cdot 2^{\frac{m+3}{2}}
\equiv 2^m - 2^{\frac{m+3}{2}} + 1 \pmod{n}.
\]
Hence, the element $2^m - 2^{\frac{m+3}{2}} + 1 \in \mathbf{T}_g$, which implies
\(
2^{\frac{m+3}{2}} - 2 \notin \mathbf{T}_{g^\perp}.
\)
\end{proof}

By Lemma~\ref{Roos bound}, we obtain Euclidean self-dual cyclic codes with the following parameters.

\begin{theorem}
    \label{2 small delta self-dual cyclic}
    Let $q=2$ and $m$ be odd. Let $n=2^m-1$ and $\beta$ be an $n$-th primitive root of unity in $\mathbb{F}_{2^m}$. Put $\delta=2^{\frac{m+1}{2}}-3\ge2$. Let $g^\perp(x)=x^{\deg h(x)}h(x^{-1})/h(0)$ be the generator polynomial of $\mathcal{C}_{(2,n,\delta,1,\beta)}^\perp$, where $g(x)$ is the generator polynomial of $\mathcal{C}_{(2,n,\delta,1,\beta)}$ and $h(x)=(x^n-1)/g(x)$. Let $\mathcal{C}'$ denote the cyclic code of length $2n$ over $\mathbb{F}_q$ with generator polynomial $g(x)g^\perp(x)$. Then $\mathcal{C}'$ is a Euclidean self-dual cyclic code with parameters $[2n,n,d]_q$, where \[d=\min\{2d(\mathcal{C}_{(2,n,\delta,1,\beta)}),d(\mathcal{C}_{(2,n,\delta,1,\beta)}^\perp)\}\ge2^\frac{m+1}{2}.\]
    In particular, the minimum distance $d\ge 2^\frac{m+1}{2}+2=10$ for the case of $m=5$.
\end{theorem}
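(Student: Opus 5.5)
The plan mirrors the proof of Theorem~\ref{4 small delta self-dual cyclic}. First, self-duality and the formula for $d$ follow exactly as in Theorem~\ref{even Euclidean self-dual}. Since $\delta=2^{\frac{m+1}{2}}-3\le 2^{\frac{m+1}{2}}-2+1$, Lemma~\ref{Euclidean dual-containing} shows that $\mathcal{C}_{(2,n,\delta,1,\beta)}$ is Euclidean dual-containing. Lemma~\ref{van lint} identifies $\mathcal{C}'$ with $\text{Plotkin}(\mathcal{C}_{(2,n,\delta,1,\beta)},\mathcal{C}^\perp_{(2,n,\delta,1,\beta)})$ up to permutation. Lemma~\ref{self-dual} then gives self-duality, the parameters $[2n,n]$, and $d=\min\{2d(\mathcal{C}),d(\mathcal{C}^\perp)\}$. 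By the BCH bound $d(\mathcal{C})\ge\delta$, so $2d(\mathcal{C})\ge 2^{\frac{m+3}{2}}-6$. This is at least $2^{\frac{m+1}{2}}$ for $m\ge5$, and at least $10$ when $m=5$. So the whole task is to bound $d(\mathcal{C}^\perp)$ from below using the zero structure in Theorem~\ref{2 small delta}.

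For $d(\mathcal{C}^\perp)$, write $N=2^{\frac{m+1}{2}}$. By Theorem~\ref{2 small delta}, $\mathbf{T}_{g^\perp}$ contains the two runs $\{0,\dots,N-2\}$ and $\{N,\dots,2N-4\}$, which are separated only by the missing element $N-1$. I will apply Lemma~\ref{Roos bound} with $a=0$, taking the step $b$ to be $1$ or a suitable unit. If $b=N$ is not coprime to $n$, I will use the equivalent form obtained by scaling by a power of $2$, which preserves the defining set because it is a union of cyclotomic cosets.

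The main check is this: for every $l_1$ in a window of length $\mu-s-1$, at least $s+1$ of the positions $l_1+l_2$ with $l_2\in\{0,\dots,\mu-2\}$ must lie in $\mathbf{T}_{g^\perp}$. Take $b=1$, $s=1$ and $\mu=N$. For each $l_1\in\{0,\dots,N-3\}$, the set $\{l_1,\dots,l_1+N-2\}$ lies inside $\{0,\dots,2N-5\}$. It misses at most the single point $N-1$, so it contains at least $N-2\ge 2=s+1$ zeros. This yields $d(\mathcal{C}^\perp)\ge N=2^{\frac{m+1}{2}}$. The counting must be verified carefully against the exact hypotheses of Lemma~\ref{Roos bound}; that verification is the main obstacle. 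Failing that, the bound $d(\mathcal{C}^\perp)\ge N$ could alternatively be extracted via a Hartmann--Tzeng-type argument from the two nearly adjacent runs.

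For $m=5$ we have $N=8$, and Theorem~\ref{2 small delta} additionally gives $13=2N-3\in\mathbf{T}_{g^\perp}$. The zeros are then $\{0,\dots,6\}\cup\{8,\dots,13\}$, which is $14$ consecutive positions with a single hole at $7$. Repeating the Roos-type counting with $\mu=10$ should give $d(\mathcal{C}^\perp)\ge 10$. This requires every window $\{l_1,\dots,l_1+8\}$ with $l_1\le 8-s$ to stay inside $\{0,\dots,13\}$ and to contain at least $s+1$ zeros. Finally, combined with $2d(\mathcal{C})\ge 2\cdot 5=10$, this gives $d\ge10$. I expect the delicate point to be choosing $s$ and $\mu$ so that the windows never run past $13$; I would first try $s=3$, $\mu=10$ and check the inequalities directly.
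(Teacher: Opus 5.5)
Your self-duality argument and the general bound are fine and match the paper. $\mathcal{C}_{(2,n,\delta,1,\beta)}$ is dual-containing by Lemma~\ref{Euclidean dual-containing}, and $2\delta\ge 2^{\frac{m+1}{2}}$ for $m\ge5$. Your $b=1$, $s=1$ Roos application is just the BCH bound on the run $\{0,\dots,N-2\}$, which is how the paper gets $d(\mathcal{C}^\perp)\ge N$. (Also, $N$ is a power of $2$ and $n$ is odd, so $\gcd(N,n)=1$ automatically.)

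The case $m=5$, however, has a genuine gap. Your window count (for each $l_1$, at least $s+1$ zeros among $l_1,\dots,l_1+\mu-2$) is not a valid form of the Roos bound, even though the wording of Lemma~\ref{Roos bound} can be read that way. The actual bound needs one \emph{fixed} set $L\subseteq\{0,\dots,\mu-2\}$ with $|L|\ge s+1$ such that the whole block $a+\{0,\dots,\mu-s-2\}+l_2b$ lies in $\mathbf{T}$ for every $l_2\in L$. The per-$l_1$ reading is false. Take the $[7,4,3]_2$ Hamming code with $\mathbf{T}=\{1,2,4\}$, and set $a=0$, $b=1$, $s=1$, $\mu=4$. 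Both windows $\{0,1,2\}$ and $\{1,2,3\}$ contain two zeros, yet $d=3$. For general $m$ this does not hurt you, since $L=\{0,1\}$ works there. But you should argue that way, or simply cite the BCH bound, rather than by window counting.

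Under the correct hypothesis, $b=1$ cannot reach $10$ for $m=5$. Let $k=\mu-s-1$ be the block length and take $\mu=10$. The shifts $l_2\in\{0,\dots,8\}$ whose translate $l_2+\{0,\dots,k-1\}$ avoids the hole at $7$ are $0,\dots,7-k$ and possibly $8$. That is at most $9-k=s$ shifts, always one short of $s+1$. Using the rest of $\mathbf{T}_{g^\perp}$ does not help, since $14\notin\mathbf{T}_{g^\perp}$.

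The missing idea is to step by $b=N=8$ and use reduction modulo $31$. For $l_2=0,1,4,5,8$, the residues $8l_2\bmod 31$ are $0,8,1,9,2$. So the five translates $\{0,\dots,4\}$, $\{8,\dots,12\}$, $\{1,\dots,5\}$, $\{9,\dots,13\}$, $\{2,\dots,6\}$ of a length-$5$ block all lie in $\{0,\dots,6\}\cup\{8,\dots,13\}$; this is where $13\in\mathbf{T}_{g^\perp}$ is needed. Take $\mu-s-1=5$ and $s+1=5$; every $l_2$ used satisfies $l_2\le 8=\mu-2$. The Roos bound then gives $d(\mathcal{C}^\perp)\ge 10$. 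Together with $2d(\mathcal{C})\ge 2\delta=10$, this yields $d\ge 10$, which is exactly the paper's argument.
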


\begin{proof}
    For the case of $m \ge 7$, the lower bounds obtained by the Roos bound coincide with those from the BCH bound.

For the case of $m = 5$, by setting the starting position $a = 0$, the length of consecutive zeros $\mu - s - 1 = 2^{\frac{m+1}{2}} - 2q + 1 = 5$ and the gap length $b = 2^{\frac{m+1}{2}} = 8$, we obtain the number of values of $l_2$ in $\{0,1,\dots,\mu-2\}$ is
\[
s + 1 = 2q + 1 = 5
\]
and the maximum value of $l_2$ satisfies
\[
\mu - 2 \ge 2q^{\frac{m-1}{2}} = 8.
\]
By Lemma~\ref{Roos bound}, this yields
\[
d(\mathcal{C}^\perp_{(2,n,\delta,1,\beta)}) \ge \mu = 10.
\]
Noticing that $d(\mathcal{C}_{(2,n,\delta,1,\beta)}) \ge \delta = 5$, we have
\[
d = \min\{2d(\mathcal{C}_{(2,n,\delta,1,\beta)}),\, d(\mathcal{C}^\perp_{(2,n,\delta,1,\beta)})\} \ge 10.
\]
This completes the proof.
\end{proof}

The following remarks concerning the Euclidean self-dual cyclic codes $\mathcal{C}'$ in Theorems~\ref{4 small delta self-dual cyclic} and \ref{2 small delta self-dual cyclic} are in order.

\begin{itemize}
    \item The self-dual cyclic code $\mathcal{C}'$ in Theorem~\ref{4 small delta self-dual cyclic} has parameters
    \[
    [2(q^m-1),\, q^m-1,\, d \ge q^{\frac{m+1}{2}} + 3]_q
    \]
    for the case of $m \ge 5$. For the case of $q \ge 8$ and $m = 3$, it has parameters
    \[
    [2(q^3-1),\, q^3-1,\, d \ge q^2 + 4]_q.
    \]
    For the case of $q = 4$ and $m = 3$, due to $2\delta = 18 < 20 = q^2 + 4$, it has parameters
    \[
    [126,\, 63,\, d \ge 18]_4,
    \]
    whose lower bound on the minimum distance improves upon that of the corresponding code in \cite{10856234}.

    \item The code $\mathcal{C}'$ in Theorem~\ref{2 small delta self-dual cyclic} has parameters
    \[
    [2(2^m-1),\, 2^m-1,\, d \ge 2^{\frac{m+1}{2}}]_2
    \]
    for the case of $m \ge 7$. This lower bound coincides with that in Theorem~\ref{new Euclidean odd}. For the case of $m = 5$, it has parameters
    \[
    [62,\, 31,\, d \ge 10]_2,
    \]
    whose lower bound on the minimum distance is stronger than that of the corresponding code in \cite{10856234}.

    \item In both theorems, the lower bounds on the minimum distance of the Euclidean self-dual cyclic codes $\mathcal{C}'$ are square-root lower bounds.

    \item It should be noted that a better lower bound does not necessarily imply a larger actual minimum distance. This is because the lower bound on the minimum distance of $\mathcal{C}_{(q,n,\delta,1,\beta)}^\perp$ may be too weak when $\delta = q^{\frac{m+1}{2}} - q + 1$.
\end{itemize}

Table~\ref{design distance} lists the parameters of Euclidean self-dual cyclic codes for various designed distances, where the dash ``——'' indicates that the corresponding parameter is not applicable.

\begin{table}[H]
  \centering
  \small
  \caption{Parameters of Euclidean Self-dual Cyclic Codes with Various Designed Distances}
  \label{design distance}
  \begin{tabular}{|c|c|c|c|c|c|c|c|c|}
    \hline
    $q$ & $m$ & $n$ & $\delta$ & $d(\mathcal{C}_{(q,n,\delta,1,\beta)})$ & $d(\mathcal{C}_{(q,n,\delta,1,\beta)}^\perp)$ & $d(\mathcal{C}')$ & $d(\mathcal{C}')$ in~\cite{10856234} & $d(\mathcal{C}')$ in Theorems~\ref{4 small delta self-dual cyclic} and \ref{2 small delta self-dual cyclic} \\
    \hline
    2 & 5 & 31 & 7 & 7 & 8 & 8 & $\ge 7$ & —— \\
    \hline
    2 & 5 & 31 & 5 & 5 & 12 & 10 & —— & $\ge 10$ \\
    \hline
    2 & 7 & 127 & 15 & 15 & 28 & 28 & $\ge 15$ & —— \\
    \hline
    2 & 7 & 127 & 13 & 13 & 32 & 26 & —— & $\ge 16$ \\
    \hline
    4 & 3 & 63 & 13 & 13 & 16 & 16 & $\ge 13$ & —— \\
    \hline
    4 & 3 & 63 & 9 & 9 & 21 & 18 & —— & $\ge 18$ \\
    \hline
  \end{tabular}
\end{table}

\subsection{Hermitian Self-Dual Cyclic Code for New $\delta$}

% 版本六：更简洁，用 "also" 隐含 "another"
The following parameters of Hermitian self-dual cyclic codes also constitute a main result of this paper.

\begin{theorem}
	\label{even Hermitian low delta}
	Let $q_1$ be a power of $2$, $q=q_1^2$ and $m$ be even. Let $n=q^m-1$ and $\beta$ be an $n$-th primitive root of unity in $\mathbb{F}_{q^m}$. Put $\delta=q_1^{m+1}-2q_1^2+1\ge 2$. Let $g^{\perp_H}(x)$ be the generator polynomial of $\mathcal{C}_{(q,n,\delta,1,\beta)}^{\perp_H}$. Let $\mathcal{C}'$ denote the cyclic code of length $2n$ over $\mathbb{F}_q$ with generator polynomial $g(x)g^{\perp_H}(x)$, where $g(x)$ is the generator polynomial of $\mathcal{C}_{(q,n,\delta,1,\beta)}$. Then
	\begin{itemize}
		\item $d=\min\{2d(\mathcal{C}_{(q,n,\delta,1,\beta)}),d(\mathcal{C}_{(q,n,\delta,1,\beta)}^{\perp_H})\}\ge q_1^{m+1}+3$ for $m\ge4$.
		\item $d=\min\{2d(\mathcal{C}_{(q,n,\delta,1,\beta)}),d(\mathcal{C}_{(q,n,\delta,1,\beta)}^{\perp_H})\}\ge q_1^3+2q_1+4$ for $q_1\ge8$ and $m=2$.
		\item $d=\min\{2d(\mathcal{C}_{(q,n,\delta,1,\beta)}),d(\mathcal{C}_{(q,n,\delta,1,\beta)}^{\perp_H})\}\ge66$ for $q_1=4$ and $m=2$.
	\end{itemize}
\end{theorem}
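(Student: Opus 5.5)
The plan mirrors the Euclidean argument of Theorems~\ref{4 small delta} and~\ref{4 small delta self-dual cyclic}, transported to the Hermitian setting.

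\textbf{Self-duality.} Since $\delta=q_1^{m+1}-2q_1^2+1<q_1^{m+1}-q_1^2+1$, Lemma~\ref{Hermitian dual-containing} shows that $\mathcal{C}_{(q,n,\delta,1,\beta)}$ is Hermitian dual-containing. Lemmas~\ref{van lint} and~\ref{self-dual} then give that $\mathcal{C}'$ is Hermitian self-dual with parameters $[2n,n,d]$, where $d=\min\{2d(\mathcal{C}_{(q,n,\delta,1,\beta)}),d(\mathcal{C}^{\perp_H}_{(q,n,\delta,1,\beta)})\}$. The BCH bound gives $2d(\mathcal{C}_{(q,n,\delta,1,\beta)})\ge 2\delta=2q_1^{m+1}-4q_1^2+2$. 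This exceeds $q_1^{m+1}+3$ whenever $q_1^{m+1}\ge 4q_1^2+1$, which holds for $m\ge4$, and also for $m=2$ with $q_1\ge8$. For $q_1=4$, $m=2$ we have $\delta=33$, so $2\delta=66$, which is exactly the stated bound. It therefore remains to bound $d(\mathcal{C}^{\perp_H})=d(\mathcal{C}^\perp)$ from below. Since the two duals differ only by the permutation $i\mapsto -q_1 i$ of the defining set, I will work directly with $\mathbf{T}_{g^{\perp_H}}=(\mathbf{T}_g^{-q_1})^c$.

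\textbf{Locating the zeros.} The first segment $\{0,\dots,I(\delta)-1\}$ comes from Lemma~\ref{Hermitian dual BCH}. For $m\ge4$ take $t=m/2$; then $(q_1^2-1)q_1^{2t-1}\le\delta\le q_1^{2t+1}-q_1^2+1$, so $I(\delta)=q_1^{m+1}-q_1^2+1$. For $m=2$ use the special formula. The main step is to produce a second segment, namely $\{q_1^{m+1},\dots,2q_1^{m+1}-q_1^2\}$ (or something close to it), inside $\mathbf{T}_{g^{\perp_H}}$.

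I argue by contradiction. Suppose $j$ in that range has $-q_1^{-1}$-image in $\mathbf{T}_g$, i.e.\ $-jq_1^{\,r}\equiv i q^{l}\pmod n$ for some $1\le i\le\delta-1$. Write everything in powers of $q_1$, so that $n=q_1^{2m}-1$ and the exponents of $q_1$ run over $0,\dots,2m-1$. Then split on the exponent $e$ with $iq_1^{e}\equiv -j\pmod n$, exactly as in Theorem~\ref{4 small delta}:
\begin{itemize}
\item when $e\le m-1$, the product $iq_1^e<n$ forces $iq_1^e=n-j$, and this is too small;
\item when $e\ge m+1$, the inverse relation $jq_1^{2m-e}\equiv -i$ together with $jq_1^{2m-e}<n$ forces $jq_1^{2m-e}=n-i$, which is too small;
\item the boundary case $e=m$ gives $n<jq_1^m<2n$, hence $jq_1^m=2n-i$, contradicting the size bounds.
\end{itemize}
The parity of $e$ is constrained, because the Hermitian twist by $q_1$ makes only odd exponents relevant. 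The inequalities must be rechecked with the $-2q_1^2$ slack rather than the $-2q$ slack of the Euclidean case. For $m=2$ I expect the endpoints to need direct computation of the relevant cyclotomic cosets mod $q_1^4-1$, as was done for $m=3$ in Theorem~\ref{4 small delta}, possibly extending the segments by a few elements (which is where the extra $2q_1$ would come from).

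\textbf{Applying the Roos-type bound.} Apply Lemma~\ref{Roos bound} with $a=0$ and gap $b=q_1^{m+1}$; note that $\gcd(b,n)=1$ since $q_1$ is even and $n$ is odd. Take the run length to be the shorter segment length, about $q_1^{m+1}-q_1^2+1$. Counting the admissible $l_2$ values, about $2(q_1^2+1)$ of them, gives a lower bound of the form $\mu=q_1^{m+1}+3$ for $m\ge4$, and $q_1^3+2q_1+4$ for $m=2$. The last value uses the longer first segment $I(\delta)$ given by the $m=2$ formula.

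\textbf{Main obstacle.} I expect the hardest part to be the second-segment containment: the Hermitian coset arithmetic, handling the $-q_1$ twist and the boundary exponent case correctly. Matching the precise constants in the Roos count, especially in the $m=2$ case, is the second difficulty; there I would first verify by explicit coset listing.
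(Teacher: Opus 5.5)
Your skeleton matches the paper's: dual-containment from Lemma~\ref{Hermitian dual-containing}, a second zero segment obtained by an exponent case analysis, then Lemma~\ref{Roos bound} with $a=0$ and $b=q_1^{m+1}$. However, two steps fail as you state them.

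\textbf{The exponent case split is off by one.} Since $e=2l+1$ is odd and $m$ is even, the case $e=m$ never occurs. Your third bullet is also false as written, because $jq_1^m\ge q_1^{2m+1}$, which is far above $2n$. The real boundary case is $e=m+1$, and your second bullet wrongly files it as ``large''. There $jq_1^{m-1}\in[q_1^{2m},\,2q_1^{2m}-q_1^{m+1}]\subset(n,2n)$, so the claim $jq_1^{m-1}<n$ fails. Instead you must write $jq_1^{m-1}=2n-i\ge 2q_1^{2m}-q_1^{m+1}+2q_1^2-2$, which contradicts the upper bound. The ``large'' case starts only at $e\ge m+3$, where $jq_1^{m-3}\le 2q_1^{2m-2}-q_1^{m-1}<n$. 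For $m=2$ only $e\in\{1,3\}$ occur, and $e=3$ is again the boundary case.

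\textbf{The Roos count cannot be done with the run length you name.} Your numbers do not add up: a run of length $q_1^{m+1}-q_1^2+1$ plus $2(q_1^2+1)$ admissible $\ell_2$ would give $\mu=q_1^{m+1}+q_1^2+3$. In fact, for $m\ge4$ and relative to the two segments $[0,q_1^{m+1}-q_1^2]$ and $[q_1^{m+1},2q_1^{m+1}-q_1^2]$, a full-length run admits only $\ell_2\in\{0,1\}$. That yields merely $q_1^{m+1}-q_1^2+3$.

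The missing idea is that $q_1^{m+1}\cdot q_1^{m-1}=q_1^{2m}\equiv1\pmod n$. Hence $\ell_2=cq_1^{m-1}$ translates the run by $c$, and $\ell_2=cq_1^{m-1}+1$ translates it by $c+q_1^{m+1}$. Shortening the run to $\ell_1\in\{0,\dots,q_1^{m+1}-2q_1^2\}$ leaves room for $0\le c\le q_1^2$. This gives $s+1=2(q_1^2+1)$ admissible values, all at most $q_1^{m+1}+1=\mu-2$, and therefore $\mu=q_1^{m+1}+3$.

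\textbf{The case $m=2$.} The analogous data are a run of length $q_1^3-2q_1^2+2q_1-2$ and $\ell_2\in\{cq_1,cq_1+1:0\le c\le q_1^2+2\}$. This requires the second segment to reach $2q_1^3-q_1^2+2q_1-1$, i.e.\ $2q_1-1$ elements beyond $2q_1^3-q_1^2$. You must prove this extension with the $e\in\{1,3\}$ analysis rather than expect ``a few'' extra elements.

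\textbf{Smaller points.} Your idea of reading the first $m=2$ segment directly off Lemma~\ref{Hermitian dual BCH} is sound. Taking $t=1$ and $s=q_1^2-2q_1$ gives $I(\delta)=q_1^3-q_1^2+2q_1$, which is shorter than the paper's detour through $\delta_{\max}$. Finally, for $m=2$ and $q_1\ge8$, you should compare $2\delta$ with $q_1^3+2q_1+4$, not with $q_1^{m+1}+3$; the needed inequality still holds.
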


Similarly, we first establish the following result concerning the defining sets.

\begin{theorem}
    \label{Hermitian low delta}
    Let $q_1$ be a power of $2$, $q=q_1^2$ and $m$ be even. Let $n=q^m-1$ and $\beta$ be an $n$-th primitive root of unity in $\mathbb{F}_{q^m}$. Put $\delta=q_1^{m+1}-2q_1^2+1\ge 2$. Let $g^{\perp_H}(x)$ be the generator polynomial of $\mathcal{C}_{(q,n,\delta,1,\beta)}^{\perp_H}$. Then
    \begin{itemize}
        \item $\{0,1,\dots,q_1^{m+1}-q_1^2,q_1^{m+1},q_1^{m+1}+1\dots,2q_1^{m+1}-q_1^2\}\subseteq\textbf{T}_{g^{\perp_H}}$ for $m\ge 4$.
        \item $\{0,1,\dots,q_1^3-q_1^2+2q_1-1,q_1^3,q_1^3+1,\dots,2q_1^3-q_1^2+2q_1-1\}\subseteq\textbf{T}_{g^{\perp_H}}$ for $q_1\ge4$ and $m=2$.
    \end{itemize}
\end{theorem}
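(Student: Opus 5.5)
I would follow the template of the proof of Theorem~\ref{4 small delta}, replacing Lemma~\ref{Euclidean dual BCH} by Lemma~\ref{Hermitian dual BCH} and working with the Hermitian dual. By \eqref{(1)} and Lemma~\ref{defining set}, the defining set of $\mathcal{C}^{\perp_H}_{(q,n,\delta,1,\beta)}$ is $(\mathbf{T}_g^{-q_1})^c$. Hence an index $j$ fails to lie in $\mathbf{T}_{g^{\perp_H}}$ if and only if $-q_1 j \bmod n$ lies in $\mathbf{T}_g$. Since $\mathbf{T}_g=\bigcup_{i=1}^{\delta-1}C_i$, this happens exactly when there are $i\in\{1,\dots,\delta-1\}=\{1,\dots,q_1^{m+1}-2q_1^2\}$ and an integer $l$ with $iq^l\equiv -q_1 j \pmod n$. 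Equivalently, one works with the $q_1$-adic exponent $2l+1$ modulo $2m$, so the natural scale is $q_1^{k}$ with $k$ odd.

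\emph{First segment.} For $m\ge4$ I take $t=\frac{m}{2}$ in the case ``$(q_1^2-1)q_1^{2t-1}\le\delta\le q_1^{2t+1}-q_1^2+1$'' of Lemma~\ref{Hermitian dual BCH}. This gives $I(\delta)=q_1^{m+1}-q_1^2+1$, so $\{0,\dots,q_1^{m+1}-q_1^2\}\subseteq\mathbf{T}_{g^{\perp_H}}$. The hypothesis $(q_1^2-1)q_1^{m-1}\le q_1^{m+1}-2q_1^2+1$ reduces to $q_1^{m-1}\ge 2q_1^2-1$, which holds for $m\ge4$. For $m=2$ I use the special list in Lemma~\ref{Hermitian dual BCH}. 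Writing $\delta=q_1^3-b$ with $b=2q_1^2-1$, this value falls outside the ranges $b\le q_1^2$ given there. So I would instead locate $\delta$ in the range $sq_1^{3}\le\delta\le(s+1)q_1^3-1$, which fails since $\delta<q_1^3$, or argue the $m=2$ initial segment directly by the same cyclotomic argument as in the second segment. The latter is what I expect to do: the claimed endpoint $q_1^3-q_1^2+2q_1-1$ looks like it comes from a direct computation.

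\emph{Second segment.} Suppose some $j\in\{q_1^{m+1},\dots,2q_1^{m+1}-q_1^2\}$ is not in $\mathbf{T}_{g^{\perp_H}}$. Then $-q_1 j\equiv iq_1^{2l}\pmod n$ for some $i\le q_1^{m+1}-2q_1^2$, that is, $j\,q_1^{2m-1-2l}\equiv -i$ up to the right power of $q_1$. Set $e$ to be the effective odd-shift exponent. I would split according to how $jq_1^{e}$ compares with multiples of $n=q_1^{2m}-1$.
\begin{itemize}
\item Small exponents: both $iq_1^{e'}$ and $n-j$-type quantities stay below $n$, so the congruence becomes an equality. Size comparison then fails, since $iq_1^{m-1}\le q_1^{2m}-2q_1^{m+1}<n-j$.
\item Large exponents: $jq_1^{e}<n$ forces $jq_1^{e}=n-i$. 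This is impossible because $jq_1^{m-3}\le 2q_1^{2m-2}-q_1^{m-1}$ is far less than $n-i$.
\item Critical exponent $e=m-1$: here $q_1^{2m}\le jq_1^{m-1}\le 2q_1^{2m}-q_1^{m+1}<2n$, so $jq_1^{m-1}=2n-i$. But $2n-i\ge 2q_1^{2m}-q_1^{m+1}+2q_1^2-2$, a contradiction.
\end{itemize}
This mirrors the three cases in Theorem~\ref{4 small delta}, with $q$ replaced by $q_1$, $\frac{m\pm1}{2}$ by $m\pm1$, and the $-q$ slack by $-q_1^2$. For $m=2$ the same computation with $q_1^3$ in place of $q_1^{m+1}$ gives the segment up to $2q_1^3-q_1^2$. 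The extra elements up to $2q_1^3-q_1^2+2q_1-1$, together with the first-segment tail $q_1^3-q_1^2+1,\dots,q_1^3-q_1^2+2q_1-1$, would be checked as in the $m=3$ part of Theorem~\ref{4 small delta}. For each such $j$ I would write out the orbit $\{-q_1 j\,q^l \bmod (q_1^4-1)\}$ explicitly in base-$q_1$ digits and verify that it avoids $\{1,\dots,q_1^3-2q_1^2\}$.

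The main obstacle is the bookkeeping of the Hermitian twist. The map $j\mapsto -q_1 j$ mixes odd powers of $q_1$, so the cases must be organised by odd exponents $q_1^{2k+1}$ rather than powers of $q$. In particular I must confirm that the critical exponent is exactly the one where the multiplier lands in $[n,2n)$. The $m=2$ endpoint verification is the most delicate step, and I would do it via base-$q_1$ digit expansions of the at most two conjugates of each $j$.
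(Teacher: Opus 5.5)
Your proposal is correct and is essentially the paper's proof. For $m\ge4$ it uses Lemma~\ref{Hermitian dual BCH} with $t=\frac{m}{2}$ for the initial run, and for the shifted run it splits the odd exponent into the same three cases (small, critical $q_1^{m-1}$ with $jq_1^{m-1}\in(n,2n)$, and large). For $m=2$ the paper obtains $\{0,\dots,q_1^3-q_1^2+q_1-1\}$ from $\delta_{\max}=q_1^3-q_1^2+1$ via Lemma~\ref{contain}, then runs your two-exponent size comparison on everything else up to $2q_1^3-q_1^2+2q_1-1$, so no base-$q_1$ orbit bookkeeping is needed (and the last case of the $m=2$ list, which you only tried with $t=2$, already gives $I(\delta)=q_1^3-q_1^2+2q_1$ when you take $t=1$, $s=q_1^2-2q_1$).
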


\begin{proof}
For the case of $m\ge 4$, by setting $t = \frac{m}{2}$, we have
\(
(q_1^2 - 1)q_1^{2t-1} \le q_1^{m+1} - 2q_1^2 + 1 \le q_1^{2t+1} - q_1^2 + 1
\). By Lemma~\ref{Hermitian dual BCH}, this implies
\[
\{0,1,\dots,q_1^{m+1} - q_1^2\} \subseteq \mathbf{T}_{g^{\perp_H}}.
\]

Suppose that there exists some
\[
j \in \{q_1^{m+1}, q_1^{m+1} + 1, \dots, 2q_1^{m+1} - q_1^2\}
\]
such that $g^{\perp_H}(\beta^j) \ne 0$. Then there exists an integer
\[
i \in \{1, \dots, q_1^{m+1} - 2q_1^2\}
\]
satisfying
\(
i q_1^{2l+1} \equiv -j \pmod{n}
\). If $l \le \frac{m-2}{2}$, then
\[
i q_1^{2l+1} \le (q_1^{m+1} - 2q_1^2) \cdot q_1^{m-1}
= q_1^{2m} - 2q_1^{m+1}
< q_1^{2m} - 1 = n.
\]
Thus from the congruence $i q_1^{2l+1} \equiv -j \pmod{n}$, it follows that $i q_1^{2l+1} = n - j$, and consequently
\[
n - j \le q_1^{2m} - 2q_1^{m+1}
< q_1^{2m} - 2q_1^{m+1} + q_1^2 - 1
\le n - j,
\]
which is impossible. Therefore, we must have $l \ge \frac{m}{2}$.

From the congruence $i q_1^{2l+1} \equiv -j \pmod{n}$, it follows that
\(
j q_1^{2m-2l-1} \equiv -i \pmod{n}
\). If $l \ge \frac{m+2}{2}$, then
\[
j q_1^{2m-2l-1} \le (2q_1^{m+1} - q_1^2) \cdot q_1^{m-3}
= 2q_1^{2m-2} - q_1^{m-1}
< q_1^{2m} - 1 = n.
\]
Thus from the congruence $j q_1^{2m-2l-1} \equiv -i \pmod{n}$, it follows that $j q_1^{2m-2l-1} = n - i$, and consequently
\[
n - i \le 2q_1^{2m-2} - q_1^{m-1}
< q_1^{2m} - q_1^{m+1} + 2q_1^2 - 1
\le n - i.
\]
The second inequality holds if and only if
\[
(q_1^2 - 2)q_1^{2m-2} - q_1^{m-1}(q_1^2 - 1) + 2q_1^2 - 1 > 0,
\]
which is equivalent to
\[
q_1^{m-1}\big[(q_1^2 - 2)q_1^{m-1} - q_1^2 + 1\big] + 2q_1^2 - 1 > 0.
\]
Notice that
\[
(q_1^2 - 2)q_1^{m-1} - q_1^2 + 1 \ge 2q_1^3 - q_1^2 + 1 > 0,
\]
thus the inequality is satisfied, again yielding a contradiction. In the remaining case of $l=\frac{m}{2}$, we have
\[
j q_1^{2m-2l-1} \ge q_1^{m+1} \cdot q_1^{m-1} = q_1^{2m} > q_1^{2m} - 1 = n
\]
and
\[
j q_1^{2m-2l-1} \le (2q_1^{m+1} - q_1^2) \cdot q_1^{m-1}
= 2q_1^{2m} - q_1^{m+1}
< 2(q_1^{2m} - 1) = 2n.
\]
Thus from the congruence $j q_1^{2m-2l-1} \equiv -i \pmod{n}$, it follows that $j q_1^{2m-2l-1} = 2n - i$ and
\[
2n - i \le 2q_1^{2m} - q_1^{m+1}
< 2q_1^{2m} - q_1^{m+1} + 2q_1^2 - 2
\le 2n - i,
\]
which is a contradiction. It remains to consider the case of $q_1\ge 4$ and $m = 2$.

Let $\delta_{\max} = q_1^3 - q_1^2 + 1$ and $G(x)$ be the generator polynomial of $\mathcal{C}_{(q_1, n, \delta_{\max}, 1, \beta)}$, where $n = q_1^4 - 1$. By setting $b=q_1^2-1$, Lemma~\ref{Hermitian dual BCH} gives $\{0,1,\dots,q_1^3 - q_1^2 + q_1 - 1\} \subseteq \mathbf{T}_{g^{\perp}}$

By Lemma~\ref{contain}, we have
\[
g(x) \mid G(x) \mid G^{\perp_H}(x) \mid g^{\perp_H}(x),
\]
which implies
\[
\{0,1,\dots,q_1^3 - q_1^2 + q_1 - 1\} \subseteq \mathbf{T}_{g^{\perp_H}}.
\]

Suppose that there exists some
\[
j \in \{q_1^3 - q_1^2 + q_1, q_1^3 - q_1^2 + q_1 + 1, \dots, q_1^3 - q_1^2 + 2q_1 - 1\}
\]
such that $g^{\perp_H}(\beta^j) \ne 0$. Then there exists an integer
\[
i \in \{1, \dots, q_1^3 - 2q_1^2\}
\]
satisfying
\(
i q_1^{2l+1} \equiv -j \pmod{n}
\) with $0\le l \le 1$. For the case of $l = 0$, notice that
\[
i q_1 \le (q_1^3 - 2q_1^2) \cdot q_1 = q_1^4 - 2q_1^3 < q_1^4 - 1 = n.
\]
Thus from the congruence $i q_1\equiv -j \pmod n$, it follows that $i q_1 = n - j$ and
\[
n - j \le q_1^4 - 2q_1^3
< q_1^4 - q_1^3 + q_1^2 - 2q_1 + 1
\le n - j,
\]
which is impossible.

From the congruence $i q_1^{2l+1}\equiv -j \pmod n$, we obtain $j q_1^{3-2l}\equiv -i\pmod n$. For the case of $l = 1$, notice that
\[
j q_1 \le (q_1^3 - q_1^2 + 2q_1 - 1) \cdot q_1
= q_1^4 - q_1^3 + 2q_1^2 - q_1
< q_1^4 - 1 = n.
\]
Thus from the congruence $j q_1\equiv -i \pmod n$, it follows that $j q_1 = n - i$, and
\[
n - i \le q_1^4 - q_1^3 + 2q_1^2 - q_1
< q_1^4 - q_1^3 + 2q_1^2 - 1
\le n - i,
\]
which is a contradiction. Noticing that $q_1^3 - 2q_1^2 - 1 \in \mathbf{T}_g$ and
\[
q_1(q_1^3 - 2q_1^2 - 1) \equiv q_1^4 - q_1^2 + q_1 - 3 \pmod{n},
\]
we have $q_1^4 - q_1^2 + q_1 - 3 \in \mathbf{T}_g$. From the congruence
\[
q_1^3 - q_1^2 + 2q_1 \equiv n - q_1(q_1^4 - q_1^2 + q_1 - 3) \pmod{n},
\]
we obtain the element $q_1^3 - q_1^2 + 2q_1$ not belongs to $\mathbf{T}_{g^{\perp_H}}$. We next prove that
\[
\{q_1^3, q_1^3 + 1, \dots, 2q_1^3 - q_1^2 + 2q_1 - 1\} \subseteq \mathbf{T}_{g^{\perp_H}}.
\]

Suppose that there exists some
\[
j \in \{q_1^3, q_1^3 + 1, \dots, 2q_1^3 - q_1^2 + 2q_1 - 1\}
\]
such that $g^{\perp_H}(\beta^j) \ne 0$. Then there exists an integer
\[
i \in \{1, \dots, q_1^3 - 2q_1^2\}
\]
satisfying
\(
i q_1^{2l+1} \equiv -j \pmod{n}
\) with $0\le l \le 1$. For the case of $l = 0$, notice that
\[
i q_1 \le (q_1^3 - 2q_1^2) \cdot q_1 = q_1^4 - 2q_1^3 < q_1^4 - 1 = n.
\]
Thus from the congruence $i q_1\equiv -j\pmod n$, it follows that $i q_1 = n - j$ and
\[
n - j \le q_1^4 - 2q_1^3
< q_1^4 - 2q_1^3 + q_1^2 - 2q_1 + 1
\le n - j,
\]
a contradiction. From the congruence $i q_1^{2l+1}\equiv -j \pmod n$, we obtain $j q_1^{3-2l}\equiv -i\pmod n$. For the case of $l = 1$, notice that
\[
j q_1 \ge q_1^3 \cdot q_1 = q_1^4 > q_1^4 - 1 = n,
\]
and
\[
j q_1 \le (2q_1^3 - q_1^2 + 2q_1 - 1) \cdot q_1
= 2q_1^4 - q_1^3 + 2q_1^2 - q_1
< 2(q_1^4 - 1) = 2n.
\]
Thus from the congruence $j q_1\equiv -i \pmod n$, it follows that $j q_1 = 2n - i$ and
\[
2n - i \le 2q_1^4 - q_1^3 + 2q_1^2 - q_1
< 2q_1^4 - q_1^3 + 2q_1^2 - 2
\le 2n - i,
\]
which is impossible. This completes the proof.
\end{proof}

We are now in a position to  prove Theorem~\ref{even Hermitian low delta} by  applying Lemma~\ref{Roos bound}.

\begin{proof}[The proof of Theorem \ref{even Hermitian low delta}]
For the case of $m \ge 4$, by setting the starting position $a = 0$, the length of consecutive zeros
\[
\mu - s - 1 = q_1^{m+1} - 2q_1^2 + 1,
\]
and the gap length $b = q_1^{m+1}$, we obtain the number of values of $l_2$ in $\{0,1,\dots,\mu-2\}$ is
\[
s + 1 = 2(q_1^2 + 1)
\]
and the maximum value of $l_2$ satisfies
\[
\mu - 2 \ge q_1^{m-1} \cdot q_1^2 + 1.
\]
By Lemma~\ref{Roos bound}, this yields
\[
d(\mathcal{C}^{\perp_H}_{(q,n,\delta,1,\beta)}) \ge \mu = q_1^{m+1} + 3.
\]

Noticing that $d(\mathcal{C}_{(q,n,\delta,1,\beta)}) \ge \delta = q_1^{m+1} - 2q_1^2 + 1$ and
\(
2\delta = 2(q_1^{m+1} - 2q_1^2 + 1) > q_1^{m+1} + 3
\), we have
\[
d = \min\{2d(\mathcal{C}_{(q,n,\delta,1,\beta)}),\, d(\mathcal{C}_{(q,n,\delta,1,\beta)}^{\perp_H})\} \ge q_1^{m+1} + 3.
\]

For the case of $m = 2$, by setting the starting position $a = 0$, the length of consecutive zeros
\[
\mu - s - 1 = q_1^3 - 2q_1^2 + 2q_1 - 2
\]
and the gap length $b = q_1^3$, we obtain the number of values of $l_2$ in $\{0,1,\dots,\mu-2\}$ is
\[
s + 1 = 2(q_1^2 + 3)
\]
and the maximum value of $l_2$ satisfies
\[
\mu - 2 \ge q_1(q_1^2 + 2) + 1.
\]
By Lemma~\ref{Roos bound}, this yields
\[
d(\mathcal{C}^{\perp_H}_{(q,n,\delta,1,\beta)}) \ge \mu = q_1^3 + 2q_1 + 4.
\]

For the case of $q_1 \ge 8$, it is worthy noticing that $d(\mathcal{C}_{(q,n,\delta,1,\beta)}) \ge \delta = q_1^3 - 2q_1^2 + 1$ and
\(
2\delta = 2(q_1^3 - 2q_1^2 + 1) > q_1^3 + 2q_1 + 4
\). It follows that
\[
d = \min\{2d(\mathcal{C}_{(q,n,\delta,1,\beta)}),\, d(\mathcal{C}_{(q,n,\delta,1,\beta)}^{\perp_H})\} \ge q_1^3 + 2q_1 + 4.
\]
For the case of $q_1 = 4$, it is worthy noticing that $\delta = q_1^3 - 2q_1^2 + 1 = 33$, $q_1^3 + 2q_1 + 4 = 76$ and $2\delta = 66 < 76$. It follows that
\[
d = \min\{2d(\mathcal{C}_{(q,n,\delta,1,\beta)}),\, d(\mathcal{C}_{(q,n,\delta,1,\beta)}^{\perp_H})\} \ge 66.
\]
This completes the whole proof of theorem.
\end{proof}

The following remarks concerning the Hermitian self-dual cyclic code $\mathcal{C}'$ in Theorem~\ref{even Hermitian low delta} are in order.

\begin{itemize}
    \item The Hermitian self-dual cyclic code $\mathcal{C}'$ has parameters
    \[
    [2(q_1^{2m} - 1),\, q_1^{2m} - 1,\, d \ge q_1^{m+1} + 3]_{q_1^2}
    \]
    for the case of $m \ge 4$. For the case of $q_1 \ge 8$ and $m = 2$, it has parameters
    \[
    [2(q_1^4 - 1),\, q_1^4 - 1,\, d \ge q_1^3 + 2q_1 + 4]_{q_1^2}.
    \]
    In particular, for the case of $q_1 = 4$ and $m = 2$, it has parameters
    \[
    [510,\, 255,\, d \ge 66]_{16}.
    \]
    \item The lower bound on the minimum distance of the Hermitian self-dual cyclic code $\mathcal{C}'$ is a square-root lower bound.
    \item It should be noted that a better lower bound does not necessarily imply a larger actual minimum distance.
\end{itemize}

\subsection{Refinement of Lower Bounds for the Previous $\delta$}

Applying Lemmas~\ref{Hermitian dual BCH} and~\ref{Euclidean dual BCH} yields improved lower bounds on the minimum distances of the Euclidean (or Hermitian) dual codes, which in turn refines the lower bounds for the corresponding self-dual cyclic codes. The following theorem gives a slight refinement on the existing lower bounds of Euclidean self-dual cyclic codes with previous $\delta$.

\begin{theorem}
    \label{new Euclidean odd}
    The Euclidean self-dual cyclic code $\mathcal{C}'$ in Lemma \ref{odd Euclidean self-dual} has parameters \[[2(q^m-1),q^m-1,d\ge q^\frac{m+1}{2}-q+2]_q.\]
\end{theorem}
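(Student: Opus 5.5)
The plan is to bound the two quantities in
\[
d=\min\{2d(\mathcal{C}_{(q,n,\delta,1,\beta)}),\,d(\mathcal{C}^\perp_{(q,n,\delta,1,\beta)})\}
\]
from Lemma~\ref{odd Euclidean self-dual} separately, where $\delta=q^{\frac{m+1}{2}}-q+1$. The first term is easy. By the BCH bound, $d(\mathcal{C}_{(q,n,\delta,1,\beta)})\ge\delta$. Hence $2d(\mathcal{C}_{(q,n,\delta,1,\beta)})\ge 2\delta\ge\delta+1=q^{\frac{m+1}{2}}-q+2$, since $\delta\ge 1$. So everything reduces to showing
\[
d(\mathcal{C}^\perp_{(q,n,\delta,1,\beta)})\ge q^{\frac{m+1}{2}}-q+2.
\]

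For the dual I would apply Lemma~\ref{Euclidean dual BCH} with $t=\frac{m-1}{2}$. Since $m\ge3$ is odd, we have $1\le t\le m-2$. Moreover,
\[
(q-1)q^{t}=q^{\frac{m+1}{2}}-q^{\frac{m-1}{2}}\le q^{\frac{m+1}{2}}-q+1=q^{t+1}-q+1 ,
\]
where the inequality uses $q^{\frac{m-1}{2}}\ge q>q-1$. So $\delta$ falls in the second case of the lemma, which gives
\[
I(\delta)=q^{m-t}-q+1=q^{\frac{m+1}{2}}-q+1.
\]
Thus $\{0,1,\dots,q^{\frac{m+1}{2}}-q\}\subseteq\mathbf{T}_{g^\perp}$, a run of $q^{\frac{m+1}{2}}-q+1$ consecutive zeros. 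The BCH bound then gives $d(\mathcal{C}^\perp_{(q,n,\delta,1,\beta)})\ge I(\delta)+1=q^{\frac{m+1}{2}}-q+2$, which is exactly the claim.

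The main obstacle is checking the hypotheses of Lemma~\ref{Euclidean dual BCH}, namely
\[
3\le\delta<(q-1)q^{m-1}-q^{\lfloor\frac{m-1}{2}\rfloor}.
\]
For $q\ge4$, or for $q=2$ with $m\ge5$, both inequalities hold by a routine comparison of powers. For instance, $q^{\frac{m+1}{2}}\le q^{m-1}$ when $m\ge3$, and then the slack is absorbed by the factor $q-1$ together with the extra powers of $q$. The remaining case $q=2$, $m=3$ fails the upper inequality, since $\delta=3$ while the bound is $4-2=2$, so it must be handled by hand. Here $n=7$ and $\mathcal{C}_{(2,7,3,1,\beta)}$ is the $[7,4,3]$ Hamming code, whose dual is the $[7,3,4]$ simplex code. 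Therefore $d=\min\{6,4\}=4=2^{2}-2+2$, and the bound holds in this case as well. Combining the two terms gives $d\ge q^{\frac{m+1}{2}}-q+2$. The length $2(q^m-1)$ and dimension $q^m-1$ are inherited from Lemma~\ref{odd Euclidean self-dual}.
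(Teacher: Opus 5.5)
Your proposal is correct and follows the paper's proof: apply Lemma~\ref{Euclidean dual BCH} in its second case with $t=\frac{m-1}{2}$ to get the run $\{0,1,\dots,q^{\frac{m+1}{2}}-q\}\subseteq\mathbf{T}_{g^\perp}$, use the BCH bound for $d(\mathcal{C}^\perp_{(q,n,\delta,1,\beta)})$, and note that $2\delta\ge\delta+1$. You also add two things the paper leaves implicit: you spell out the $2d(\mathcal{C})$ term, and you treat the case $q=2$, $m=3$ separately (where the lemma's hypothesis $\delta<(q-1)q^{m-1}-q^{\lfloor\frac{m-1}{2}\rfloor}$ fails) via the Hamming/simplex pair, which closes a gap the paper passes over silently.
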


\begin{proof}
    By setting $t = \frac{m-1}{2}$, we obtain $\delta = q^{t+1} - q + 1 = q^{\frac{m+1}{2}} - q + 1$. Lemma~\ref{Euclidean dual BCH} implies
\[
\{0,1,\dots,q^{\frac{m+1}{2}} - q\} \subseteq \mathbf{T}_{g^\perp}.
\]
Applying the BCH bound then yields
\[
d(\mathcal{C}^\perp_{(q,n,\delta,1,\beta)}) \ge q^{\frac{m+1}{2}} - q + 2.
\]
Consequently, we have
\[
d(\mathcal{C}') = \min\{2d(\mathcal{C}_{(q,n,\delta,1,\beta)}),\, d(\mathcal{C}^\perp_{(q,n,\delta,1,\beta)})\} \ge q^{\frac{m+1}{2}} - q + 2.
\]
This completes the proof.
\end{proof}

Table~\ref{odd} lists the parameters of several Euclidean self-dual cyclic codes.

\begin{table}[htbp]
  \centering 
  \caption{Parameters of Euclidean self-dual cyclic codes for odd $m$}
  \label{odd}
  \begin{tabular}{|c|c|c|c|c|c|c|c|c|}
    \hline 
    $q$ & $m$ & $n$ & $\delta$ & $d(\mathcal{C}_{(q,n,\delta,1,\beta)})$ & $d(\mathcal{C}_{(q,n,\delta,1,\beta)}^\perp)$ & $d(\mathcal{C}')$ & $d(\mathcal{C}')$ in \cite{10856234} & $d(\mathcal{C}')$ in Theorem \ref{new Euclidean odd} \\
    \hline
    2 & 3 & 7 & 3 & $3$ & $4$ & $4$ & $\ge 3$ & $\ge 4$ \\
    \hline
    2 & 5 & 31 & 7 & $7$ & $8$ & $8$ & $\ge 7$ & $\ge 8$ \\
    \hline
    4 & 3 & 63 & 13 & $13$ & $16$ & $16$ & $\ge 13$ & $\ge 14$ \\
    \hline
  \end{tabular}
\end{table}

Next, we establish lower bounds on the minimum distances of Hermitian self-dual cyclic codes for even $m$. The following theorem gives a slight refinement on the existing lower bounds of Hermitian self-dual cyclic codes with previous $\delta$.
\begin{theorem}
    \label{new Hermitian self-dual}
    The Hermitian self-dual cyclic code $\mathcal{C}'$ in Theorem \ref{even Hermitian self-dual} has parameters
    \begin{itemize}
        \item $[2(q^m-1),q^m-1,d\ge q_1^{m+1}-q_1^2+2]_q$ for $m\ge 4$.
        \item $[2(q^2-1),q^2-1,d\ge q_1^3-q_1^2+q_1+1]_q$ for $m=2$.
    \end{itemize}
\end{theorem}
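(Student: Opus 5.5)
The proof will mirror that of Theorem~\ref{new Euclidean odd}. By Theorem~\ref{even Hermitian self-dual}, the code $\mathcal{C}'$ is Hermitian self-dual with
\[
d=\min\{2d(\mathcal{C}_{(q,n,\delta,1,\beta)}),\,d(\mathcal{C}^{\perp_H}_{(q,n,\delta,1,\beta)})\},
\]
where $\delta=q_1^{m+1}-q_1^2+1$. The BCH bound gives $d(\mathcal{C}_{(q,n,\delta,1,\beta)})\ge\delta$, so $2d(\mathcal{C}_{(q,n,\delta,1,\beta)})\ge 2\delta$. This exceeds both claimed bounds, since $q_1^{m+1}-q_1^2+1\ge 3$ in all relevant cases. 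The whole task is therefore to bound $d(\mathcal{C}^{\perp_H}_{(q,n,\delta,1,\beta)})$ from below. We do this by reading off the initial run of consecutive zeros in $\mathbf{T}_{g^{\perp_H}}$ from Lemma~\ref{Hermitian dual BCH} and then applying the BCH bound.

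For $m\ge 4$, set $t=\frac{m}{2}$, which satisfies $2\le t\le m-1$. Then
\[
\delta=q_1^{2t+1}-q_1^2+1
\]
is exactly the right endpoint of the range $(q_1^2-1)q_1^{2t-1}\le\delta\le q_1^{2t+1}-q_1^2+1$ in Lemma~\ref{Hermitian dual BCH}. That range gives
\[
I(\delta)=q_1^{2(m-t)+1}-q_1^2+1=q_1^{m+1}-q_1^2+1.
\]
Hence $\{0,1,\dots,q_1^{m+1}-q_1^2\}\subseteq\mathbf{T}_{g^{\perp_H}}$, a run of $q_1^{m+1}-q_1^2+1$ consecutive zeros. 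The BCH bound then gives $d(\mathcal{C}^{\perp_H})\ge q_1^{m+1}-q_1^2+2$, which is the claimed bound.

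For $m=2$ there is a complication: Lemma~\ref{Hermitian dual BCH} is stated for $m\ge3$, with a separate list for $m=2$. Here $\delta=q_1^3-q_1^2+1=q_1^3-b$ with $b=q_1^2-1$. Since $q_1^2-q_1+1\le b\le q_1^2$, the $m=2$ list gives
\[
I(\delta)=q_1^3-q_1^2+q_1.
\]
This is the same value the paper already invoked in the $m=2$ part of Theorem~\ref{Hermitian low delta}. So $\{0,1,\dots,q_1^3-q_1^2+q_1-1\}\subseteq\mathbf{T}_{g^{\perp_H}}$, and the BCH bound yields $d(\mathcal{C}^{\perp_H})\ge q_1^3-q_1^2+q_1+1$. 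Again $2\delta=2q_1^3-2q_1^2+2$ exceeds this, because $q_1^3-q_1^2-q_1+1=(q_1-1)^2(q_1+1)>0$.

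The main obstacle is only bookkeeping with Lemma~\ref{Hermitian dual BCH}. First, the range of $t$ in the relevant case must be valid; for $m\ge4$ we have $t=m/2\le m-1$, so it is. Second, the $I(\delta)$ in that lemma is stated for $\mathbf{T}_{g^{\perp_H}}$, so it applies directly to the Hermitian dual. The minimum distances of $\mathcal{C}^{\perp_H}$ and $\mathcal{C}^{\perp}$ coincide anyway, as noted in the proof of Theorem~\ref{even Hermitian self-dual}. Combining the two lower bounds in the minimum then gives the stated parameters $[2(q^m-1),q^m-1,d]_q$.
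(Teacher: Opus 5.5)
Your proposal is correct and follows the paper's proof essentially verbatim: the same choices $t=\frac{m}{2}$ (for $m\ge 4$) and $b=q_1^2-1$ (for $m=2$) in Lemma~\ref{Hermitian dual BCH}, followed by the BCH bound on $\mathcal{C}^{\perp_H}_{(q,n,\delta,1,\beta)}$. Your explicit check that $2\delta$ dominates both claimed bounds, including $(q_1-1)^2(q_1+1)>0$ for $m=2$, fills in a step the paper leaves implicit.
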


\begin{proof}
For the case of $m \ge 4$, by setting $t = \frac{m}{2}$, we obtain $\delta = q_1^{2t+1} - q_1^2 + 1 = q_1^{m+1} - q_1^2 + 1$. Lemma~\ref{Hermitian dual BCH} implies
\[
\{0,1,\dots,q_1^{m+1} - q_1^2\} \subseteq \mathbf{T}_{g^{\perp_H}}.
\]
Applying the BCH bound then yields
\[
d(\mathcal{C}^{\perp_H}_{(q,n,\delta,1,\beta)}) \ge q_1^{m+1} - q_1^2 + 2.
\]

For the case of $m = 2$, by setting $b = q_1^2 - 1$, we obtain $\delta = q_1^3 - b = q_1^3 - q_1^2 + 1$. Lemma~\ref{Hermitian dual BCH} gives
\[
\{0,1,\dots,q_1^3 - q_1^2 + q_1 - 1\} \subseteq \mathbf{T}_{g^{\perp_H}}.
\]
The BCH bound then implies
\[
d(\mathcal{C}^{\perp_H}_{(q,n,\delta,1,\beta)}) \ge q_1^3 - q_1^2 + q_1 + 1.
\]

The desired lower bounds then follow immediately from
\(
d(\mathcal{C}') = \min\{2d(\mathcal{C}_{(q,n,\delta,1,\beta)}),\, d(\mathcal{C}^{\perp_H}_{(q,n,\delta,1,\beta)})\}.
\)
\end{proof}

\section{Exact Parameters of Two Classes of Self‑Dual Cyclic Codes over $\mathbb{F}_{2^s}$}
\label{4}
In this section, we determine the minimum distance of the cyclic codes $\mathcal{C}_{(q,n,\delta,1,\beta)}$ and derive the exact parameters for the constructed self-dual cyclic codes using the lower bounds on the minimum distances of the dual codes. The following lemma will be required.

\begin{lemma}[\cite{li2017lcd}]
    \label{weight}
    Let $\mathcal{C}_{(q,n,\delta,b,\beta)}$ be the BCH code over $\mathbb{F}_q$ of length $n$ with designed distance $\delta$. Then its minimum distance satisfies $d=\delta$ if $\delta$ divides $\gcd(n,b-1)$.
\end{lemma}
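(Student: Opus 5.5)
The plan is the standard two-sided argument: the BCH bound gives $d\ge\delta$, and an explicit codeword of weight exactly $\delta$ gives $d\le\delta$.

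\emph{Lower bound.} The defining set of $\mathcal{C}_{(q,n,\delta,b,\beta)}$ contains the $\delta-1$ consecutive integers $b,b+1,\dots,b+\delta-2$. The BCH bound then gives $d\ge\delta$ directly. It applies for an arbitrary starting exponent $b$; equivalently, it is the case $s=0$ of Lemma~\ref{Roos bound} with $a=b$.

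\emph{Upper bound.} Since $\delta\mid\gcd(n,b-1)$, the integer $N=n/\delta$ is well defined. I will use the polynomial
\[
c(x)=\frac{x^n-1}{x^{N}-1}=1+x^{N}+x^{2N}+\cdots+x^{(\delta-1)N}\in\mathbb{F}_q[x].
\]
It has exactly $\delta$ nonzero coefficients, so it has Hamming weight $\delta$, and it is nonzero. Because $\gcd(n,q)=1$, the polynomial $x^n-1$ is separable with roots $\beta^i$ for $i\in\mathbb{Z}_n$. The roots of $x^N-1$ among these are exactly the $\beta^i$ with $\beta^{iN}=1$, that is, with $n\mid iN$, equivalently $\delta\mid i$. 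Hence $c(\beta^i)=0$ precisely when $\delta\nmid i$.

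The hypothesis $\delta\mid b-1$ gives $b\equiv1\pmod\delta$. Therefore the exponents $b,b+1,\dots,b+\delta-2$ are congruent to $1,2,\dots,\delta-1$ modulo $\delta$, and none of them is divisible by $\delta$. So $c(\beta^{b+k})=0$ for $0\le k\le\delta-2$.

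Since $c(x)$ has coefficients in $\mathbb{F}_q$, it is divisible by each minimal polynomial $m_{\beta^{b+k}}(x)$. Since these factors are distinct irreducible divisors of the separable polynomial $x^n-1$, $c(x)$ is also divisible by their lcm, which is the generator polynomial $g(x)$. Thus $c(x)$ is a codeword of weight $\delta$, so $d\le\delta$, and combined with the lower bound, $d=\delta$.

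The only point needing care is the root analysis of $(x^n-1)/(x^N-1)$, namely identifying exactly which powers of $\beta$ it vanishes at. The congruence $b\equiv 1\pmod\delta$ is exactly what places the whole consecutive block of the defining set inside that root set. Everything else is routine.
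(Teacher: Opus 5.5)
Your proof is correct, and it is the standard argument for this result; the paper gives no proof of its own and only cites the source. The BCH bound on the block $b,\dots,b+\delta-2$ gives $d\ge\delta$, and $c(x)=(x^n-1)/(x^{n/\delta}-1)$ vanishes at every $\beta^i$ with $\delta\nmid i$, which by $b\equiv 1\pmod{\delta}$ covers the whole block and so yields a codeword of weight $\delta$. One cosmetic point: the $m_{\beta^{b+k}}(x)$ need not be distinct for different $k$, but nothing depends on that, since the lcm of any family of divisors of $c(x)$ divides $c(x)$ and $\deg c(x)<n$.
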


We first determine the parameters of Euclidean self‑dual cyclic codes for even $m$.
\begin{theorem}
    \label{exact Euclidean}
    The Euclidean self-dual cyclic code $\mathcal{C}'$ in Theorem \ref{even Euclidean self-dual} has parameters \[[2(q^m-1),q^m-1,2(q^\frac{m}{2}-1)]_q.\]
\end{theorem}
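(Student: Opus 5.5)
The plan is to compute the two quantities in $d(\mathcal{C}')=\min\{2d(\mathcal{C}_{(q,n,\delta,1,\beta)}),d(\mathcal{C}^\perp_{(q,n,\delta,1,\beta)})\}$ from Theorem~\ref{even Euclidean self-dual} separately, with $\delta=q^{\frac m2}-1$. We show that the first equals $2(q^{\frac m2}-1)$ exactly and that the second is strictly larger. The length $2(q^m-1)$ and dimension $q^m-1$ are already given by Theorem~\ref{even Euclidean self-dual}.

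\emph{Step 1: the BCH code itself.} Apply Lemma~\ref{weight} with $b=1$. Then $\gcd(n,b-1)=\gcd(n,0)=n=q^m-1=(q^{\frac m2}-1)(q^{\frac m2}+1)$, so $\delta=q^{\frac m2}-1$ divides it. Hence $d(\mathcal{C}_{(q,n,\delta,1,\beta)})=\delta$, and therefore $2d(\mathcal{C}_{(q,n,\delta,1,\beta)})=2(q^{\frac m2}-1)$.

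\emph{Step 2: the dual code.} We determine the initial run of consecutive zeros in $\mathbf{T}_{g^\perp}$ via Lemma~\ref{Euclidean dual BCH} and then apply the BCH bound. The hypothesis $\delta\ge 3$ of that lemma excludes $(q,m)=(2,2)$, where $\delta=1$; that case is already ruled out by $\delta\ge2$. The upper constraint $\delta<(q-1)q^{m-1}-q^{\lfloor\frac{m-1}{2}\rfloor}$ clearly holds for even $m\ge2$ with $\delta\ge3$. We then split by $q$.
\begin{itemize}
\item[(i)] If $q\ge4$, take $t=\frac m2\le m-1$ and $b=1\le q-2$ in the last case of Lemma~\ref{Euclidean dual BCH}, so $\delta=q^t-1\ge3$. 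This gives $I(\delta)=2q^{m-t}-1=2q^{\frac m2}-1$.
\item[(ii)] If $q=2$ (so $m\ge4$), the last case is unavailable since $b\le q-2=0$. Instead use the second case with $t=\frac m2-1\in[1,m-2]$: here $(q-1)q^t=2^{\frac m2-1}\le 2^{\frac m2}-1=q^{t+1}-q+1$. This gives $I(\delta)=q^{m-t}-q+1=2^{\frac m2+1}-1$, the same value $2q^{\frac m2}-1$.
\end{itemize}
In both cases $\{0,1,\dots,2q^{\frac m2}-2\}\subseteq\mathbf{T}_{g^\perp}$, a run of $2q^{\frac m2}-1$ consecutive zeros. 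The BCH bound then gives $d(\mathcal{C}^\perp_{(q,n,\delta,1,\beta)})\ge 2q^{\frac m2}>2(q^{\frac m2}-1)$.

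\emph{Conclusion.} Combining the two steps, $d(\mathcal{C}')=\min\{2(q^{\frac m2}-1),\,d(\mathcal{C}^\perp)\}=2(q^{\frac m2}-1)$ exactly.

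The main obstacle is Step 2: matching $\delta$ to the correct case of Lemma~\ref{Euclidean dual BCH} and checking the side conditions, particularly the degenerate cases $q=2$ and $m=2$. If case matching there proved troublesome, the fallback is a direct argument in the style of Theorem~\ref{4 small delta}. One would suppose some $j\le 2q^{\frac m2}-2$ has $-j\in C_i$ for some $1\le i\le q^{\frac m2}-2$. Writing $iq^l\equiv -j \pmod n$, one then bounds $iq^l$ or $jq^{m-l}$ below $n$ or $2n$ according as $l<\frac m2$, $l>\frac m2$, or $l=\frac m2$, deriving a contradiction in each case.
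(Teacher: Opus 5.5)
Your proposal is correct and follows the paper's proof: Lemma~\ref{weight} with $b=1$ gives $d(\mathcal{C}_{(q,n,\delta,1,\beta)})=q^{\frac m2}-1$ exactly, and Lemma~\ref{Euclidean dual BCH} with the BCH bound gives $d(\mathcal{C}^\perp_{(q,n,\delta,1,\beta)})\ge 2q^{\frac m2}$. Your split at $q=2$ is more careful than the paper, which invokes the last case of Lemma~\ref{Euclidean dual BCH} with $b=1$ for every $q$ even though $1\le b\le q-2$ fails when $q=2$; moreover, as transcribed, that lemma's fourth case also formally covers $\delta=2^{\frac m2}-1$ when $q=2$ and would give the wrong value $I(\delta)=2$, so your direct fallback argument (which does go through for $l<\frac m2$, $l=\frac m2$ and $l>\frac m2$) is the cleanest justification in the binary case.
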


\begin{proof}
It is worthy noticing that
\[
\gcd(n, b-1) = n = q^m - 1 = (q^{\frac{m}{2}} - 1)(q^{\frac{m}{2}} + 1)
\]
and $\delta = q^{\frac{m}{2}} - 1$, we have $\delta \mid \gcd(n, b-1)$. By Lemma~\ref{weight}, this implies
\[
d(\mathcal{C}_{(q,n,\delta,1,\beta)}) = q^{\frac{m}{2}} - 1.
\]

By setting $t = \frac{m}{2}$ and $b = 1$, we have $\delta = q^t - b = q^{\frac{m}{2}} - 1$. Lemma~\ref{Euclidean dual BCH} gives
\[
\{0,1,\dots,2q^{\frac{m}{2}} - 2\} \subseteq \mathbf{T}_{g^\perp},
\]
and therefore
\[
d(\mathcal{C}^{\perp}_{(q,n,\delta,1,\beta)}) \ge 2q^{\frac{m}{2}}.
\] Consequently, we have
\[
d(\mathcal{C}') = \min\{2d(\mathcal{C}_{(q,n,\delta,1,\beta)}),\, d(\mathcal{C}^{\perp}_{(q,n,\delta,1,\beta)})\}
= 2(q^{\frac{m}{2}} - 1).
\]
This completes the proof.
\end{proof}

Next, we determine the parameters of Hermitian self-dual cyclic codes for odd $m$.
\begin{theorem}
    \label{Exact Hermitian}
    The Hermitian self-dual cyclic code $\mathcal{C}'$ in Lemma \ref{odd Hermitian self-dual} has parameters \[[2(q^m-1),q^m-1,2(q_1^m-1)]_q.\]
\end{theorem}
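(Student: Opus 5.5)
I will follow the same pattern as the proof of Theorem~\ref{exact Euclidean}: compute $d(\mathcal{C}_{(q,n,\delta,1,\beta)})$ exactly with Lemma~\ref{weight}, and bound $d(\mathcal{C}^{\perp_H}_{(q,n,\delta,1,\beta)})$ from below by the BCH bound, using the consecutive zeros supplied by Lemma~\ref{Hermitian dual BCH}. The two facts then combine through the formula $d(\mathcal{C}')=\min\{2d(\mathcal{C}_{(q,n,\delta,1,\beta)}),d(\mathcal{C}^{\perp_H}_{(q,n,\delta,1,\beta)})\}$ of Lemma~\ref{odd Hermitian self-dual}. The length $2(q^m-1)$ and dimension $q^m-1$ are already given by Lemma~\ref{odd Hermitian self-dual}.

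\emph{Step 1: the BCH code.} Here $n=q^m-1=q_1^{2m}-1=(q_1^m-1)(q_1^m+1)$, and we are in the narrow-sense case $b=1$. Hence $\gcd(n,b-1)=\gcd(n,0)=n$, which is divisible by $\delta=q_1^m-1$. Lemma~\ref{weight} then gives $d(\mathcal{C}_{(q,n,\delta,1,\beta)})=q_1^m-1$ exactly, so $2d(\mathcal{C}_{(q,n,\delta,1,\beta)})=2(q_1^m-1)$.

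\emph{Step 2: the Hermitian dual.} Since $m$ is odd, $\delta=q_1^m-1=q_1^{2t-1}-b$ with $t=\frac{m+1}{2}$ and $b=1$. For $m\ge 3$ this gives $2\le t\le m-1$, and $1\le b\le q_1^2-2$ holds because $q_1\ge 2$, so the third case of Lemma~\ref{Hermitian dual BCH} applies. It yields
\[
I(\delta)=2q_1^{2(m-t)+1}-1=2q_1^{m}-1,
\]
that is, $\{0,1,\dots,2q_1^m-2\}\subseteq \mathbf{T}_{g^{\perp_H}}$. This is a run of $2q_1^m-1$ consecutive zeros, so the BCH bound gives $d(\mathcal{C}^{\perp_H}_{(q,n,\delta,1,\beta)})\ge 2q_1^m>2(q_1^m-1)$. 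Taking the minimum yields $d(\mathcal{C}')=2(q_1^m-1)$.

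\emph{Main obstacle.} The only delicate point is checking the case hypotheses of Lemma~\ref{Hermitian dual BCH}. That lemma is stated for $m\ge3$, which is exactly what Lemma~\ref{odd Hermitian self-dual} provides: $m$ is odd, and $\delta\ge2$ excludes $m=1$ when $q_1=2$. I also need to make sure no earlier case in the lemma's list also matches $\delta=q_1^m-1$ with a different value. For instance, the case $(q_1^2-1)q_1^{2t-1}\le\delta\le q_1^{2t+1}-q_1^2+1$ with $t=\frac{m-1}{2}$ gives the range $[q_1^m-q_1^{m-2},\,q_1^m-q_1^2+1]$, whose upper end is below $q_1^m-1$. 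So that case does not contain $\delta$, and the value $I(\delta)=2q_1^m-1$ is unambiguous. Finally, I need the fact, used in the paper's proofs, that the zero run for $g^{\perp_H}$ is the one Lemma~\ref{Hermitian dual BCH} describes.
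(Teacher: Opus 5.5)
Your proposal is correct and is essentially the paper's own proof. Lemma~\ref{weight} with $b=1$ gives $d(\mathcal{C}_{(q,n,\delta,1,\beta)})=q_1^m-1$, and the case $\delta=q_1^{2t-1}-b$ of Lemma~\ref{Hermitian dual BCH} with $t=\frac{m+1}{2}$, $b=1$ gives $\{0,1,\dots,2q_1^m-2\}\subseteq\mathbf{T}_{g^{\perp_H}}$, hence $d(\mathcal{C}^{\perp_H}_{(q,n,\delta,1,\beta)})\ge 2q_1^m$ and the minimum is $2(q_1^m-1)$.

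One minor point: $\delta\ge 2$ rules out $m=1$ only when $q_1=2$, so the hypothesis $m\ge 3$ needed for Lemma~\ref{Hermitian dual BCH} has to come from the intended Chen--Ding setting ($m\ge3$, as in Table~\ref{Parameters of Hermitian Self-Dual Cyclic Codes}); the paper's proof also assumes it tacitly. For $m=1$ the conclusion still holds, because the BCH code is then a Reed--Solomon code and both it and its Hermitian dual are MDS.
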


\begin{proof}
It is worthy noticing that
\[
\gcd(n, b-1) = n = q^m - 1 = (q_1^m - 1)(q_1^m + 1)
\]
and $\delta = q_1^m - 1$, we have $\delta \mid \gcd(n, b-1)$. By Lemma~\ref{weight}, this implies
\[
d(\mathcal{C}_{(q,n,\delta,1,\beta)}) = q_1^m - 1.
\]

By setting $t = \frac{m+1}{2}$ and $b = 1$, we have $\delta = q_1^{2t-1} - b = q_1^m - 1$. Lemma~\ref{Hermitian dual BCH} gives
\[
\{0,1,\dots,2q_1^m - 2\} \subseteq \mathbf{T}_{g^{\perp_H}},
\]
and therefore
\[
d(\mathcal{C}^{\perp_H}_{(q,n,\delta,1,\beta)}) \ge 2q_1^m.
\]
Consequently, we have
\[
d(\mathcal{C}') = \min\{2d(\mathcal{C}_{(q,n,\delta,1,\beta)}),\, d(\mathcal{C}^{\perp_H}_{(q,n,\delta,1,\beta)})\}
= 2(q_1^m - 1).
\]
This completes the proof.
\end{proof}

The following remarks concerning the Euclidean and Hermitian self-dual cyclic codes $\mathcal{C}'$ in Theorems~\ref{exact Euclidean} and~\ref{Exact Hermitian} are in order.

\begin{itemize}
    \item Since the inequalities 
    \[
    4(q^{\frac{m}{2}} - 1)^2 > 2(q^{\frac{m}{2}} + 1)(q^{\frac{m}{2}} - 1) = 2(q^m - 1)
    \]
    and
    \[
    4(q_1^m - 1)^2 > 2(q_1^m + 1)(q_1^m - 1) = 2(q^m - 1)
    \]
    holds, the lower bounds on the minimum distance of the self-dual cyclic codes $\mathcal{C}'$ exceeds square-root lower bounds, which replies the Open Problem in~\cite{10856234} by setting $q=2$.

    \item Theorem~\ref{Exact Hermitian} clarifies why the lower bound on the minimum distance of the Hermitian self-dual cyclic codes in Lemma~\ref{odd Hermitian self-dual} is indeed a square-root lower bound.
\end{itemize}

Table~\ref{even} lists the parameters of several Euclidean self-dual cyclic codes obtained in Theorem~\ref{exact Euclidean}.

\begin{table}[htbp]
  \centering
  \caption{Parameters of Euclidean self-dual cyclic codes for even $m$}
  \label{even}
  \begin{tabular}{|c|c|c|c|c|c|c|}
    \hline
    $q$ & $m$ & $n$ & $\delta$ & $d(\mathcal{C}_{(q,n,\delta,1,\beta)})$ & $d(\mathcal{C}_{(q,n,\delta,1,\beta)}^\perp)$ & $d(\mathcal{C}')$ \\
    \hline
    2 & 4 & 15 & 3 & $3$ & $8$ & $6$ \\
    \hline
    2 & 6 & 63 & 7 & $7$ & $16$ & $14$ \\
    \hline
    4 & 4 & 255 & 15 & $15$ & $\ge32$ & $30$ \\
    \hline
  \end{tabular}
\end{table}

\section{Conclusion}
\label{5}
This paper refines the parameter estimates for Euclidean self-dual cyclic codes with odd $\mathrm{ord}_n(q)$ and Hermitian self-dual cyclic codes with even $\mathrm{ord}_n(q)$. We notice that the inequality $2d(\mathcal{C}_1) < d(\mathcal{C}_1^\perp)$ does not always hold. Moreover, for a BCH code $\mathcal{C}_1$, lowering the designed distance $\delta$ may lead to an increase in the minimum distance of its dual $\mathcal{C}_1^\perp$. This observation suggests that the minimum distance $d(\mathcal{C})$ of the resulting self-dual cyclic code can potentially be improved by reducing $\delta$. Accordingly, we investigate the lower bounds on the minimum distances of self-dual cyclic codes obtained from new designed distance.

We establish that the self-dual cyclic codes constructed under the condition of even $\mathrm{ord}_n(q)$ exceed square-root lower bounds on their minimum distances. By leveraging known results on the defining set of the dual code and invoking the BCH bound, we derive a lower bound on $d(\mathcal{C}_1^\perp)$, which in turn yields an improved lower bound on $d(\mathcal{C})$. Following this strategy, we obtain the exact parameters of Euclidean self-dual cyclic codes with even $\mathrm{ord}_n(q)$ and Hermitian self-dual cyclic codes with odd $\mathrm{ord}_n(q)$. 

In practice, we observe a significant discrepancy between the lower bound on the minimum distance derived from the defining-set analysis in this paper and the actual minimum distance. This is because only a small subset of the defining-set zeros has been utilized. Whether a tighter bound than the Roos bound that can exploit more zeros exists remains an open problem.

This paper has primarily focused on the minimum distances of Euclidean and Hermitian dual codes. Sufficient conditions for the inequality $d(\mathcal{C}^\perp) \ge 2d(\mathcal{C})$ would be highly beneficial for estimating lower bounds on the minimum distances of related self-dual cyclic codes.

In fact, further reducing the designed distance yields additional consecutive zero segments in the defining set of the dual code. Applying the Roos bound to these extended zerso segments then gives tighter lower bounds than those obtained from the BCH bound, which is an issue that merits further investigation.

\section*{Acknowledgments}
This work is supported by the National Natural Science Foundation of China (No. 12441107), Guangdong Basic and Applied Basic Research Foundation (No.~\seqsplit{2025A1515011764}),  and  the National Key Research and Development Program of
China (No.~\seqsplit{2025YFA1017100}).

%{\appendices
%\section*{Proof of the First Zonklar Equation}
%Appendix one text goes here.
% You can choose not to have a title for an appendix if you want by leaving the argument blank
%\section*{Proof of the Second Zonklar Equation}
%Appendix two text goes here.}

\bibliographystyle{IEEEtran}
\bibliography{reference}

\begin{IEEEbiographynophoto}{BoFeng Huang}
received the BS degree from Sun Yat-sen University, Guangdong, P.R. China. He is currently studying in mathematics at Sun Yat-sen University. His research interests lie in algebraic coding theory.
\end{IEEEbiographynophoto}

\begin{IEEEbiographynophoto}{Jingwei Zhang}
 received the B.S. degree from the Department of Mathematics, Hunan University of Science and Technology, P.R.China, in 2002, the M.S. degree from the Department of Mathematics in 2005, and the Ph.D. degree from the Department of Electrical Engineering, Sun Yat-sen University, Guangzhou, China, in 2010. She currently works with the Department of Big Data Management and Application, Guangdong University of Finance and Economics, Guangzhou, China. Her research interests include algebraic coding theory, algebraic decoding algorithms and sequences design.\end{IEEEbiographynophoto}

\begin{IEEEbiographynophoto}{Chang-An Zhao}
received the bachelor’s degree in electronical engineering, the master’s degree in applied mathematics, and the PhD degree in information science and technology all from Sun Yat-sen University, Guangzhou, P.R.China, in 2001, 2005, and 2008, respectively. He works with the School of Mathematics, Sun Yat-sen University, Guangzhou, China. His research mainly focuses on elliptic curve cryptography, post-quantum cryptography and algebraic coding theory.
\end{IEEEbiographynophoto}

\end{document}